\theoremstyle{plain}
\newtheorem{theorem}{Theorem}
\newtheorem{corollary}{Corollary}[theorem]
\newtheorem{proposition}[theorem]{Proposition}
\newtheorem{lemma}[theorem]{Lemma}
\newtheorem*{Atheorem}{Theorem}
\newtheorem*{Aproposition}{Proposition}
\theoremstyle{definition}
\newtheorem{definition}[theorem]{Definition}
\newtheorem*{Adefinition}{Definition}
\theoremstyle{remark}
\newtheorem*{remark}{Remark}
\newcommand{\pol}{\simbot} % Polaire - Exposant
\newcommand{\poll}{\simperp} % Polaires - Relation
\newcommand{\cond}[1]{\mathbf{#1}} % Conduites
\newcommand{\de}[1]{\mathfrak{#1}} % Dessins
\newcommand{\sca}[2]{\mathopen{\ll} \de{#1},\de{#2} \mathclose{\gg}} %"Produit scalaire de dessins
\newcommand{\hil}[1]{\mathbb{#1}} % Espaces de Hilbert
\newcommand{\vn}[1]{\mathcal{#1}} % Algèbres de von Neumann
\newcommand{\finhyp}{\vn{R}} % Facteur hyperfini type II1
\newcommand{\infhyp }{\vn{R}_{0,1}} % Facteur hyperfini type II∞
\newcommand{\vntimes}{\otimes} % Tenseur d'algèbres de von Neumann
\newcommand{\what}[1]{~\widehat{\!\!#1\!\!}~}
\newcommand{\plug}{\mathop{\dblcolon}} % Plugging / Application
\newcommand{\cycl}[1]{\mathcal{C}(#1)} % Set of 1-circuits
\newcommand{\disjun}{\uplus} % Disjoint union
\newcommand{\bicol}{\square} % Bicolored graph
\newcommand{\catmll}{$\mathfrak{Graph}_{MLL}$} % Category for MLL
\newcommand{\scalar}[1]{\mathopen{\ll} #1 \mathclose{\gg}} % Scalar product of graphs
\newcommand{\wgraph}[1]{#1=(V_{#1},E_{#1},s_{#1},t_{#1},\omega_{#1})}   % Defining a weighted graph
\newcommand{\mat}[1]{\mathcal{M}_{#1}}   % Matrice d'un graphe
\newcommand{\norm}[1]{\mathopen{\lVert}#1\mathclose{\rVert}} % Norme
\newcommand{\<}{\mathopen{<}}
\renewcommand{\>}{\mathclose{>}}
\newcommand{\abs}[1]{\mathopen{|}#1\mathclose{|}} % Absolute value
\title{Interaction Graphs: Multiplicatives}
\author{Thomas Seiller \\ \small{\begin{tabular}{cc} IML — UMR 6206 & LAMA — UMR 5127\\ CNRS - Université Aix-Marseille & CNRS - Université de Savoie\\ 163 Avenue de Luminy, Case 907 & Bât. Chablais, Campus Scientifique\\ 13288 Marseille Cedex 09 & 73376 Le Bourget-du-Lac Cedex\\ France & France\end{tabular}}\\\textbf{seiller@iml.univ-mrs.fr}}
\date{January 28th, 2011}
\begin{document}
\maketitle
\thispagestyle{empty}

\begin{abstract}
We introduce a graph-theoretical representation of proofs of multiplicative linear logic which yields both a denotational semantics and a notion of truth. For this, we use a locative approach (in the sense of ludics \cite{locussolum}) related to game semantics \cite{hylandong,abramsky94full} and the Danos-Regnier interpretation of GoI operators as paths in proof nets \cite{AspertiDanosLaneveRegnier94,DanosRegnier95}. We show how we can retrieve from this locative framework both a categorical semantics for MLL with distinct units and a notion of truth. Moreover, we show how a restricted version of our model can be reformulated in the exact same terms as Girard’s geometry of interaction \cite{goi5}. This shows that this restriction of our framework gives a combinatorial approach to J.-Y. Girard’s geometry of interaction in the hyperﬁnite factor, while using only graph-theoretical notions. 
\end{abstract}

\section*{Introduction}

We develop a graph-theoretical geometry of interaction multiplicative linear logic which yields both a denotational semantics and a notion of truth, and draws bridges between game semantics and the latest developments in geometry of interaction. This work is inspired by Girard's latest paper on geometry of interaction \cite{goi5} and uses one of its key ideas: “locativity", which first appeared in Ludics \cite{locussolum}. Locative semantics can be considered as a geometrical implementation of denotational semantics, where the objects interpreting the proofs have a precise location, just as computer programs and data have a physical address in memory. This physical implementation has a peculiarity that could be seen as a drawback: we define only partial operations (for instance, the tensor product is defined only when the locations are disjoint) and some properties need additional hypotheses (for instance, the associativity of reduction). However, we will show that this does not lessen in any way our interpretation since we can define, working “modulo delocations" (which can be seen as internal isomorphisms), a $\ast$-autonomous category, yielding a denotational semantics of MLL. From our locative framework, we can also define a notion of success, which would correspond to the game semantics' notion of winning. This allows us to define a notion of truth which is consistent and preserved by composition.

The last part of the paper is devoted to a linear-algebraic reformulation of our semantics, when restricted to a certain class of objects. We show how the notions we introduced on graphs corresponds to the definitions found in Girard's latest paper \cite{goi5}. Our work can therefore be seen as both an operator-free finite-dimensional introduction to this latest geometry of interaction and an explanation of some of its peculiarities by making explicit its relations to previous works on geometry of interaction \cite{multiplicatives,goi1} and proof nets \cite{linearlogic,DanosRegnier95,AspertiDanosLaneveRegnier94}.

\subsubsection*{Geometry of Interaction in the Hyperfinite Factor}

The geometry of interaction program was introduced \cite{towards} by Jean-Yves Girard soon after the introduction of linear logic \cite{linearlogic}. It aims at giving a semantics of cut-elimination by representing proofs as operators. Several versions of geometry of interaction were introduced by Girard, all using the tools of operator theory (a good introduction to the theory of operator algebras can be found in Murphy's book \cite{murphy}). These versions of geometry of interaction use two key ingredients: 
\begin{itemize}
\item the operators which represent proofs 
\item the notion of interaction, representing cut elimination
\end{itemize}

The latest version of geometry of interaction Girard introduced (which we will call GoI5 \cite{goi5}) uses advanced operator-theoretic notions. It offers great flexibility in its definition of exponentials and is therefore particularly promising when it comes to the study of complexity. Moreover, its use of operator algebras and its close relation to quantum coherent spaces \cite{QCS} suggest future applications to quantum computing.

It presents severals differences with the preceding versions, and most of these can be found in our graph-theoretic framework. The first important thing is that the considered set of operators is not limited to partial isometries. The second is that the adjunction, which relates the tensor product and the linear implication, is given by
\begin{equation}
ldet(1-F.(A+B))=ldet(1-F.A)+ldet(1-[F]A.B)\nonumber
\end{equation}
Indeed, usual denotational semantics adjunctions state an equality between two quantities: for instance, in coherent semantics the adjunction is given by $\sharp ([F]a\cap b)=\sharp (F\cap(a\times b))$. Thus the adjuction of the latest geometry of interaction differs from what we are used to because of the additional term $ldet(1-F.A)$. The presence of this additional term is compensated by the use of the so-called wager, a real number that can be considered as a sort of truth-value (actually, the cologarithm of a truth value) where $0$ means true, and $\infty$ means false. 

\subsubsection*{Locativity}

As in ludics \cite{locussolum}, proofs — hence formulas, which are defined as sets of proofs — have a definite location. In ludics, locations were defined as a finite sequence of integers (the locus), while in GoI5 the location is given by a finite projection in the hyperfinite factor of type $\text{II}_{\infty}$. It means, in particular, that if $a$ and $b$ are two objects (representing proofs) and $\phi(a),\psi(b)$ are isomorphic copies on different locations, the execution — which corresponds to cut-elimination — $a\plug b$ need not be (and will not be in general) isomorphic to $\phi(a)\plug\psi(b)$.

While it can be argued that geometry of interaction always had a locative flavor, it is only in this latest version of it that it becomes fully explicit. Indeed, making the choice of a fully locative framework allows one to dispense from the use of the partial isometries $p$ and $q$ that were omnipresent in the first versions of geometry of interaction \cite{goi1}. We chose, in this paper, to adopt a locative framework, although the choice of following the more ancient versions of geometry of interaction would also have been a valid one. We made this choice for the sake of simplicity: indeed, even if locativity makes the interpretation of proofs and formulas more complicated, it actually makes both the definitions of low-level operations (defined on graphs in our setting) and the embedding of our framework in Girard's GoI5 simpler.

\section{Some Results on Graphs}\label{graphssection}

In the following, we will work with directed weighted graphs. The use of graphs is reminiscent of many other related works, e.g. Kelly-MacLane graphs \cite{KellyMcLane}, linear logic proof nets \cite{linearlogic}, sharing graphs \cite{AbadiGonthierLevy92a,DanosRegnier95}, etc. However, our approach differs from these in many ways, among them the fact that our framework is locative, that we work with weighted (non simple) graphs, and the possibility of defining a notion of truth.

\begin{definition}
A \emph{directed weighted graph} is a tuple $G=(V_{G},E_{G},s_{G},t_{G},\omega_{G})$, where $V_{G}$ is the set of vertices, $E_{G}$ is the set of edges, $s_{G}$ and $t_{G}$ are two functions from $E_{G}$ to $V_{G}$, the \emph{source} and \emph{target} functions, and $\omega_{G}$ is a function $E_{G} \rightarrow ]0,1]$.
\end{definition}

In this paper, all the directed weighted graphs considered will have a \emph{finite} set of vertices, and a \emph{finite or countably infinite} set of edges.

We will write $E_{G}(v,w)$ for the set of all edges $e\in E_{G}$ satisfying $s_{G}(e)=v$ and $t_{G}(e)=w$. Moreover, we will sometimes forget the subscripts when the context is clear.

\begin{definition}[Simple graphs]
We say a directed weighted graph $G$ is \emph{simple} when there is no more than one edge between two given vertices.
\end{definition}

\begin{definition}
From a directed weighted graph $G$, we can define a simple graph $\what{G}$ with weights in $\mathbb{R}_{>0}\cup\{\infty\}$:
\begin{eqnarray*}
V_{\what{G}}&=&V_{G}\\
E_{\what{G}}&=&\{(v,w)~|~\exists e\in E_{G}, s_{G}(e)=v, t_{G}(e)=w\}\\
\omega_{\what{G}}&:&(v,w)\mapsto\sum_{e\in E_{G}(v,w)}\omega_{G}(e)
\end{eqnarray*}
When the weights of $\what{G}$ are in $\mathbb{R}_{>0}$, we say $\what{G}$ is \emph{total}.
\end{definition}

We will now define a construction on graphs that will allow us to consider, given two graphs $G$ and $H$, paths that alternate between an edge in $G$ and an edge in $H$, a construction that is quite standard in the literature \cite{abramhaghverdiscott,abramsky94full,defalco}. The first construction (the plugging of two graphs) is the keystone around which this paper is constructed. Once we can talk of alternating paths and cycles, we will be able to obtain the two main results: a reduction operation that is associative — which corresponds to cut-elimination, and a three-term equality (Proposition \ref{adjunct}) from which we will be able to define our adjunction.

We will denote by $\disjun$ the disjoint union on sets. Given sets $E,F$ and $X$ and two functions $f: E \rightarrow X$ and $g: F\rightarrow X$, we will write $f\disjun g$ the function from $E\disjun F$ to $X$ that is defined by the universality property of coproducts, i.e. the “co-pairing" of $f$ and $g$.

\begin{definition}[Union of graphs]
Given two graphs $G$ and $H$, we can define the union graph $G\cup H$ of $G$ and $H$ as:
\begin{equation}
(V_{G}\cup V_{H},E_{G}\disjun E_{H},(\iota_{G}\circ s_{G})\disjun (\iota_{H}\circ s_{H}),(\iota_{G}\circ t_{G})\disjun (\iota_{H}\circ t_{H}),\omega_{G}\disjun\omega_{H})\nonumber
\end{equation}
where $\iota_{G}$ (resp. $\iota_{H}$) denotes the inclusion of $V^{G}$ (resp. $V^{H}$) in $V^{G}\cup V^{H}$.
\end{definition}

\begin{remark}
We want to stress the fact that while we take a disjoint union of the edges, we take the union of the sets of vertices. Therefore, the union of two graphs may not be equal to the union of two isomorphic copies, since nothing tells us that these isomorphic copies will intersect over the same (up to the isomorphism) set of vertices. This is where locativity takes all its importance, since a non-empty intersection of the sets of vertices of two graphs is the place where the interaction will occur.
\end{remark}

Now, in order to consider paths that alternate between two graphs, we need to keep track of the origin of the edges, which motivates the following definition.

\begin{definition}[Plugging]
Given two graphs $G$ and $H$, we define the graph $G\bicol H$ as the union graph of $G$ and $H$, together with a coloring function $\delta$ from $E_{G}\disjun E_{H}$ to $\{0,1\}$ such that 
\begin{equation}
\left\{\begin{array}{l}
\delta(x)=0\text{ if }x\in E_{G}\\
\delta(x)=1\text{ if }x\in E_{H}
\end{array}\right.
\nonumber
\end{equation}
We refer to $G\bicol H$ as the \emph{plugging of $G$ and $H$.}
\end{definition}
Figure \ref{plug} shows an example of the plugging of the graphs $F$ and $G$ from Figure \ref{graphs} in which colors are represented by the location of the edges: the top edges are the $0$-colored edges, while the bottom edges are the $1$-colored ones.

\begin{figure}[b]
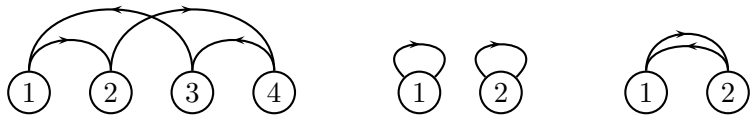

\begin{center}
$\begin{array}{ccccc}
~\\
\psmatrix[colsep=0.5cm,rowsep=0.5cm,mnode=circle]
1&2&3&4
\psset{ArrowInside=->}
\nccurve[angleA=90,angleB=90,ncurvA=1,ncurvB=1]{1,1}{1,2}
\nccurve[angleA=90,angleB=90,ncurvA=1,ncurvB=1]{1,2}{1,4}
\nccurve[angleA=90,angleB=90,ncurvA=1,ncurvB=1]{1,4}{1,3}
\nccurve[angleA=90,angleB=90,ncurvA=1,ncurvB=1]{1,3}{1,1}
\endpsmatrix
&~~~~~&
\psmatrix[colsep=0.5cm,rowsep=0.5cm,mnode=circle]
1&2
\psset{ArrowInside=->}
\nccurve[angleA=135,angleB=45,ncurvA=4,ncurvB=4]{1,1}{1,1}
\nccurve[angleA=135,angleB=45,ncurvA=4,ncurvB=4]{1,2}{1,2}
\endpsmatrix
&~~~~~&
\psmatrix[colsep=0.5cm,rowsep=0.5cm,mnode=circle]
1&2
\psset{ArrowInside=->}
\nccurve[angleA=90,angleB=90,ncurvA=1.2,ncurvB=1.2]{1,1}{1,2}
\nccurve[angleA=90,angleB=90,ncurvA=0.8,ncurvB=0.8]{1,2}{1,1}
\endpsmatrix
\end{array}$
\end{center}
\caption{Graphs $F$, $G$ and $H$}\label{graphs}
\end{figure}

\begin{figure}
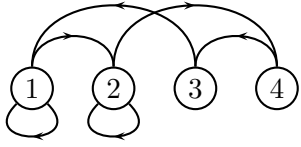

\begin{center}
$\begin{array}{ccc}
\psmatrix[colsep=0.5cm,rowsep=0.5cm,mnode=circle]
1&2&3&4
\psset{ArrowInside=->}
\nccurve[angleA=90,angleB=90,ncurvA=1,ncurvB=1]{1,1}{1,2}
\nccurve[angleA=90,angleB=90,ncurvA=1,ncurvB=1]{1,2}{1,4}
\nccurve[angleA=90,angleB=90,ncurvA=1,ncurvB=1]{1,4}{1,3}
\nccurve[angleA=90,angleB=90,ncurvA=1,ncurvB=1]{1,3}{1,1}
\nccurve[angleA=-45,angleB=-135,ncurvA=4,ncurvB=4]{1,1}{1,1}
\nccurve[angleA=-45,angleB=-135,ncurvA=4,ncurvB=4]{1,2}{1,2}
\endpsmatrix
\end{array}$
\end{center}
\caption{Plugging of $F$ and $G$}\label{plug}
\end{figure}

\begin{definition}[Paths, cycles and $k$-cycles]
A \emph{path} in a graph $G$ is a finite sequence of edges $(e_{i})_{0\leqslant i\leqslant n}\text{ }(n\in\mathbb{N})$ in $E_{G}$ such that $s(e_{i+1})=t(e_{i})$ for all $0\leqslant i\leqslant n-1$. We will call the vertices $s(\pi)=s(e_{0})$ and $t(\pi)=t(e_{n})$ the beginning and the end of the path.

We will also call a \emph{cycle} a path $\pi=(e_{i})_{0\leqslant i\leqslant n}$ such that $s(e_{0})=t(e_{n})$. If $\pi$ is a cycle, and $k$ is the greatest integer such that there exists a cycle $\rho$ with\footnote{Here, we denote by $\rho^{k}$ the concatenation of $k$ copies of $\rho$.} $\pi=\rho^{k}$, we will say that $\pi$ is a \emph{$k$-cycle}.
\end{definition}

\begin{proposition}\label{countingeqclas}
Let $\rho=(e_{i})_{0\leqslant i\leqslant n-1}$ be a cycle, and let $\sigma$ be the permutation taking $i$ to $i+1$ ($i=0,\dots,n-2$) and $n-1$ to $0$. We define the set
\begin{equation}
\bar{\rho}=\{(e_{\sigma^{k}(i)})_{0\leqslant i\leqslant n-1}~|~ 0\leqslant k\leqslant n-1\}\nonumber
\end{equation}
Then $\rho$ is a $k$-cycle if and only if the cardinality of $\bar{\rho}$ is equal to $n/k$. In the following, we will refer to such an equivalence class modulo cyclic permutations as a \emph{circuit}, or \emph{$k$-circuit}.
\end{proposition}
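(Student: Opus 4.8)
The plan is to recognize $\bar\rho$ as an orbit and to compute its size by the orbit--stabilizer theorem. Concretely, the powers of $\sigma$ generate a cyclic group $\langle\sigma\rangle\cong\mathbb{Z}/n\mathbb{Z}$, and this group acts on the set of length-$n$ sequences of edges by $\sigma^{j}\cdot(e_{i})_{i}=(e_{\sigma^{j}(i)})_{i}$; since $\sigma^{j}(i)=i+j$ modulo $n$, the action of $\sigma^{j}$ is the cyclic left shift by $j$. A routine check shows this is a genuine (left) group action, and by construction $\bar\rho$ is exactly the orbit of the sequence $\rho=(e_{i})_{0\le i\le n-1}$ under this action. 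Hence $\lvert\bar\rho\rvert=\lvert\langle\sigma\rangle\rvert/\lvert\mathrm{Stab}(\rho)\rvert=n/\lvert\mathrm{Stab}(\rho)\rvert$, and everything reduces to determining the stabilizer.

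Next I would analyse the stabilizer. By definition $\sigma^{j}\in\mathrm{Stab}(\rho)$ iff $e_{i+j}=e_{i}$ for all $i$ (indices taken modulo $n$), i.e. iff $j$ is a \emph{period} of the sequence. The set of such $j$ is closed under addition and negation modulo $n$ — if $e_{i+j}=e_i$ and $e_{i+j'}=e_i$ for all $i$, then $e_{i+j+j'}=e_{i+j'}=e_i$ and $e_{i-j}=e_i$ — so $\mathrm{Stab}(\rho)$ is a subgroup of $\mathbb{Z}/n\mathbb{Z}$. Subgroups of a cyclic group are themselves cyclic, so $\mathrm{Stab}(\rho)=\langle m\rangle$, where $m$ is the least strictly positive period of $\rho$; in particular $m$ divides $n$ automatically, with no separate divisibility argument needed, and $\lvert\mathrm{Stab}(\rho)\rvert=n/m$. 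Orbit--stabilizer then yields $\lvert\bar\rho\rvert=n/(n/m)=m$: the number of distinct cyclic shifts of $\rho$ equals its minimal period.

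It remains to connect the minimal period $m$ with the integer $k$ of the $k$-cycle definition. Writing $\tau=(e_{0},\dots,e_{m-1})$, the periodicity $e_{i+m}=e_{i}$ gives $\rho=\tau^{n/m}$, so $\rho$ is an $(n/m)$-fold concatenation of a cycle. Conversely, any factorisation $\rho=\upsilon^{k'}$ forces the length $n/k'$ of $\upsilon$ to be a period of $\rho$, whence $n/k'\ge m$, i.e. $k'\le n/m$; thus $n/m$ is exactly the \emph{greatest} such exponent, and $\rho$ is a $k$-cycle precisely when $k=n/m$, equivalently $m=n/k$. Combining this with $\lvert\bar\rho\rvert=m$ proves both directions: if $\rho$ is a $k$-cycle then $m=n/k$ and $\lvert\bar\rho\rvert=n/k$; and if $\lvert\bar\rho\rvert=n/k$ then $m=n/k$, so $k=n/m$ and $\rho$ is a $k$-cycle. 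The one point deserving care — and the only real obstacle — is matching the concatenation-based definition of a $k$-cycle to the period/subgroup description; once $\mathrm{Stab}(\rho)$ is identified as $\langle m\rangle$, both the divisibility $m\mid n$ and the maximality of $k=n/m$ fall out cleanly, and orbit--stabilizer does the rest.
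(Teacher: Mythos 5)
Your proposal is correct and takes essentially the same route as the paper's proof: both arguments reduce everything to the minimal period $m$ of the sequence (the paper's $s$), showing that the number of distinct cyclic shifts equals $m$, that $m$ divides $n$, and that $\rho$ is a $k$-cycle precisely when $m=n/k$. The only difference is presentational: you obtain the counting and the divisibility in one stroke from the orbit--stabilizer theorem and the classification of subgroups of $\mathbb{Z}/n\mathbb{Z}$, whereas the paper verifies the same facts by hand (the shifts $\sigma^{0}(\rho),\dots,\sigma^{s-1}(\rho)$ are pairwise distinct by minimality of $s$, and $s\mid n$ because $\sigma^{n}(\rho)=\rho$).
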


\begin{proof}
We use classical cyclic groups techniques here. We will abusively denote by $\sigma^{p}(\rho)$ the path $(e_{\sigma^{p}(i)})_{0\leqslant i\leqslant n-1}$.

First, notice that if $\rho$ is a $k$-cycle, then $\sigma^{n/k}(\rho)=\rho$. Now, if $s$ is the smallest integer such that $\sigma^{s}(\rho)=\rho$, we have that $e_{i+s}=e_{i}$. Hence, writing $m=n/s$, we have $\rho=\pi^{m}$ where $\pi=(e_{i})_{0\leqslant i\leqslant s-1}$. This implies that $k=n/s$ from the maximality of $k$. Hence $\rho$ is a $k$-cycle if and only if the smallest integer $s$ such that $\sigma^{s}(\rho)=\rho$ is equal to $n/k$.

Let $s$ be the smallest integer such that $\sigma^{s}(\rho)=\rho$. We have that for any integers $p,q$ such that $0\leqslant q< s$, $\sigma^{ps+q}(\rho)=\sigma^{q}(\rho)$. Indeed, it is a direct consequence of the fact that $\sigma^{ps}(\rho)=\rho$ for any integer $p$. Moreover, since $\sigma^{n}(\rho)=\rho$, we have that $s$ divides $n$. Hence, we have that the cardinality of $\bar{\rho}$ is at most $s$. To show that the cardinality of $\bar{\rho}$ is exactly $s$, we only need to show that $\sigma^{i}(\rho)\neq\sigma^{j}(\rho)$ for $i< j$ between $0$ and $s-1$. But if it were the case, we would have, since $\sigma$ is a bijection, $\rho=\sigma^{j-i}(\rho)$, an equality contradicting the minimality of $s$.
\end{proof}

\begin{definition}[Alternating paths]
Let $\wgraph{G}$ and $\wgraph{H}$ be two graphs. We define the \emph{alternating paths} between $G$ and $H$ as the paths $(e_{i})_{0\leqslant i\leqslant n}$ in $G\bicol H$ which satisfy
\begin{equation}
\delta(e_{i})\neq\delta(e_{i+1})~~~~(0\leqslant i\leqslant n-1)\nonumber
\end{equation}
We will denote by $\path{}{G,H}$ the set of alternating paths in the graph $G\bicol H$, and by $\path{v,w}{G,H}$ the set of alternating paths in $G\bicol H$ beginning at $v$ and ending at $w$. 
We will call an \emph{alternating cycle} in $G\bicol H$ a cycle $(e_{i})_{0\leqslant i\leqslant n}\in\path{}{G,H}$ such that $\delta(e_{n})\not=\delta(e_{0})$.
\end{definition}

\begin{remark} The last condition $\delta(e_{n})\neq\delta(e_{0})$ is necessary because we want to consider only the cycles that induce an infinite number of alternating paths in the execution, i.e. the paths that can be travelled through more than once. An alternating path $(e_{i})_{0\leqslant i\leqslant n}$ such that $\delta(e_{n})=\delta(e_{0})$ cannot be taken travelled through twice in a row since the path $(f_{i})_{0\leqslant i\leqslant 2n+1}$ defined as $f_{i}=f_{i+n+1}=e_{i}$ for $0\leqslant i\leqslant n$ is not alternating ($\delta(f_{n})=\delta(f_{n+1})$.
\end{remark}

\begin{remark} With the notations of the above definition, if $V_{G}\cap V_{H}=\emptyset$, the set of alternating paths $\path{}{G,H}$ is reduced to the set $E_{G}\disjun E_{H}$, modulo the identification between edges and paths of length $1$.
\end{remark}

\begin{definition}[The set of $1$-circuits]
We will denote by $\cycl{G,H}$ the set of alternating $1$-circuits in $G\bicol H$, i.e. the quotient of the set of alternating $1$-cycles by cyclic permutations.
\end{definition}

\begin{remark}
All cycles in $\cycl{G,H}$ are of even length, since they have to satisfy the condition $\delta(e_{0})\not=\delta(e_{n})$.
\end{remark}

We then extend the weight function to paths in the following way.

\begin{definition}
The weight of a path $\pi=(e_{i})_{0\leqslant i\leqslant n}$ in a weighted graph $G$ is defined by $\omega_{G}(\pi)=\prod_{i=0}^{n} \omega_{G}(e_{i})$.

This definition does not depend on the path but only on the set of the edges it is composed of. It is therefore invariant under cyclic permutations and we define the weight of a circuit as the weight of any cycle in the class.
\end{definition}

We define an operation on graphs, which we will call \emph{reduction}, which is again quite standard, and is a straightforward generalization of the execution formula between permutations \cite{multiplicatives,blindspot} which corresponds to cut-elimination in proof nets.
\begin{definition}[Reduction]~\\
Let $G=(V_{G}, E_{G},s_{G},t_{G},\omega_{G})$ and $H=(V_{H}, E_{H},s_{H},t_{H},\omega_{H})$ be two graphs. Denoting by $V_{G}\Delta V_{H}$ the symmetric difference of $G$ and $H$, we define \emph{the reduction of $G$ and $H$} as the graph $G\plug H$ defined by 
\begin{equation}
\begin{array}{c}
\begin{array}{ccc}V_{G\plug H}= V_{G}\Delta V_{H}&~~&E_{G\plug H}= \bigcup{}_{v,w\in V_{G\plug H}}\path{v,w}{G,H}\\
s_{G\plug H}:  (e_{i})_{0\leqslant i\leqslant n}\mapsto s_{G\bicol H}(e_{0})&~~&
t_{G\plug H}:  (e_{i})_{0\leqslant i\leqslant n}\mapsto t_{G\bicol H}(e_{n})\end{array}\\
\omega_{G\plug H}: (e_{i})_{0\leqslant i\leqslant n}\mapsto \omega_{G\bicol H}((e_{i})_{0\leqslant i\leqslant n})
\end{array}\nonumber
\end{equation}
\end{definition}

\begin{remark}
Notice that if the graphs $G,H$ have disjoint sets of vertices (i.e. if $V_{2}=\emptyset$), then $G\plug H$ is equal to $G\cup H$.
\end{remark}

\begin{remark}
The operation of reduction is similar to the “composition and hiding" of strategies in game semantics. Indeed, one can think of a directed graph $G=(V^{G},E^{G},s^{G},t^{G})$ as a non-deterministic strategy where the vertices represent the moves. From a graph $G$, one can define the graph $G^{\dagger}=(V^{G}\times\{s,t\},E^{G},\tilde{s}^{G},\tilde{t}^{G})$, where $\tilde{s}^{G}(e)=(s^{G}(e),s)$ and $\tilde{t}^{G}(e)=(t^{G}(e),t)$ ($s,t$ are used as polarities here, and could very well be named $+,-$). Then, considering the set of alternating paths in the graph $G\bicol H$ is the same as considering the set of paths in the graph $G^{\dagger}\bicol (H^{\dagger})^{\ast}$, where $(H^{\dagger})^{\ast}$ is the graph $H^{\dagger}$ where $s$ and $t$ have been interchanged (the change of polarity). Now, the composition and hiding of the two strategies $G^{\dagger}$ and $H^{\dagger}$ corresponds to taking the set of paths in $G^{\dagger}\bicol (H^{\dagger})^{\ast}$ whose sources and targets are in $(V^{G}\Delta V^{H})\times\{s,t\}$, i.e. the graph $(G\plug H)^{\dagger}$.
\end{remark}

Figure \ref{internal} shows the alternating paths in $F\bicol G$ and $F\bicol H$, where $F$, $G$ and $H$ are the graphs defined in Figure \ref{graphs}. Notice the internal cycle that appears between $F$ and $H$.

\begin{figure}
\begin{center}
$\begin{array}{ccc}
~\\
\psmatrix[colsep=0.5cm,rowsep=0.5cm,mnode=circle]
[linestyle=dashed]1&[linestyle=dashed]2&3&4
\psset{ArrowInside=->}
\nccurve[angleA=90,angleB=90,ncurvA=1,ncurvB=1,offsetA=10pt,offsetB=-10pt]{1,1}{1,2}
\nccurve[angleA=90,angleB=90,ncurvA=1,ncurvB=1,offsetA=-10pt]{1,2}{1,4}
\nccurve[angleA=90,angleB=90,ncurvA=1,ncurvB=1]{1,4}{1,3}
\nccurve[angleA=90,angleB=90,ncurvA=1,ncurvB=1,offsetB=10pt]{1,3}{1,1}
\nccurve[angleA=-90,angleB=-90,ncurvA=1,ncurvB=1,offsetA=-10pt,offsetB=-10pt]{1,2}{1,2}
\nccurve[angleA=-90,angleB=-90,ncurvA=1,ncurvB=1,offsetA=10pt,offsetB=10pt]{1,1}{1,1}
\nccurve[angleA=90,angleB=-90,linestyle=dashed,offsetA=-10pt,offsetB=-10pt]{1,1}{1,1}
\nccurve[angleA=-90,angleB=90,linestyle=dashed,offsetA=-10pt,offsetB=-10pt]{1,1}{1,1}
\nccurve[angleA=90,angleB=-90,linestyle=dashed,offsetA=10pt,offsetB=10pt]{1,2}{1,2}
\nccurve[angleA=-90,angleB=90,linestyle=dashed,offsetA=10pt,offsetB=10pt]{1,2}{1,2}
\endpsmatrix
&~~~~~~~~~&
\psmatrix[colsep=0.5cm,rowsep=0.5cm,mnode=circle]
[linestyle=dashed]1&[linestyle=dashed]2&3&4
\psset{ArrowInside=->}
\nccurve[angleA=90,angleB=90,ncurvA=1,ncurvB=1,offsetA=-10pt,offsetB=-10pt]{1,1}{1,2}
\nccurve[angleA=90,angleB=90,ncurvA=1,ncurvB=1,offsetA=-10pt]{1,2}{1,4}
\nccurve[angleA=90,angleB=90,ncurvA=1,ncurvB=1]{1,4}{1,3}
\nccurve[angleA=90,angleB=90,ncurvA=1,ncurvB=1,offsetB=-10pt]{1,3}{1,1}
\nccurve[angleA=-90,angleB=-90,ncurvA=1,ncurvB=1,offsetA=-10pt,offsetB=-10pt]{1,1}{1,2}
\nccurve[angleA=-90,angleB=-90,ncurvA=1,ncurvB=1,offsetA=-10pt,offsetB=-10pt]{1,2}{1,1}
\nccurve[angleA=90,angleB=-90,linestyle=dashed,offsetA=10pt,offsetB=10pt]{1,1}{1,1}
\nccurve[angleA=-90,angleB=90,linestyle=dashed,offsetA=10pt,offsetB=10pt]{1,1}{1,1}
\nccurve[angleA=90,angleB=-90,linestyle=dashed,offsetA=10pt,offsetB=10pt]{1,2}{1,2}
\nccurve[angleA=-90,angleB=90,linestyle=dashed,offsetA=10pt,offsetB=10pt]{1,2}{1,2}
\endpsmatrix
\end{array}$
\end{center}
\caption{Alternating paths in $F\bicol G$ and $F\bicol H$}\label{internal}
\end{figure}

\begin{proposition}[Associativity]\label{assocmll}
Let $G_{i}=(V_{i},E_{i},s_{i},t_{i},\omega_{i})~(i=0,1,2)$ be three graphs with $V_{0}\cap V_{1}\cap V_{2}=\emptyset$. We have:
\begin{equation}
G_{0}\plug (G_{1}\plug G_{2})=(G_{0}\plug G_{1})\plug G_{2}\nonumber
\end{equation}
\end{proposition}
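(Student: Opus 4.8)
The plan is to prove that both iterated reductions are equal to a single \emph{symmetric} ternary object obtained from a three-colored plugging of $G_0,G_1,G_2$, and then to observe that this object cannot depend on the order in which the two reductions are performed. Concretely, I would first introduce the graph $G_0\bicol G_1\bicol G_2$: the union graph of the three, with edge set $E_0\disjun E_1\disjun E_2$ and a coloring $\delta$ into $\{0,1,2\}$ sending each $E_i$ to $i$. Call a path $(e_i)_{0\leqslant i\leqslant n}$ \emph{alternating} when $\delta(e_i)\neq\delta(e_{i+1})$ for all $i$. I then define a ternary reduction $R$ with vertex set $V_0\Delta V_1\Delta V_2$ and edge set the alternating paths whose source and target both lie in $V_0\Delta V_1\Delta V_2$, inheriting $s,t,\omega$ exactly as in the binary reduction (endpoints of the path, product of the edge weights). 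The two conditions defining the edges of $R$ — alternation, and endpoints in $V_0\Delta V_1\Delta V_2$ — are manifestly invariant under permuting the three graphs.

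For the vertex sets, since $\Delta$ is associative we have $V_0\Delta(V_1\Delta V_2)=(V_0\Delta V_1)\Delta V_2=V_0\Delta V_1\Delta V_2$, and since $V_0\cap V_1\cap V_2=\emptyset$ this common set is exactly the set of vertices lying in \emph{exactly one} of the three $V_i$. Hence all the graphs in play share the vertex set of $R$.

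The core step is a source/target/weight-preserving bijection between the edges of $G_0\plug(G_1\plug G_2)$ and the edges of $R$, by \emph{flattening}. An edge of $G_0\plug(G_1\plug G_2)$ is an alternating path in $G_0\bicol(G_1\plug G_2)$ whose letters are either a single $0$-edge or an edge of $G_1\plug G_2$, the latter being itself an alternating path in $G_1\bicol G_2$; concatenating the letters yields a path in $G_0\bicol G_1\bicol G_2$. I would then check three things: (i) the flattened path is alternating, since within each $G_1\plug G_2$-letter the colors $1,2$ already alternate, while at each junction between a $0$-letter and an adjacent $G_1\plug G_2$-letter the colors differ (color $0$ versus a boundary color in $\{1,2\}$); (ii) every interior vertex of the flattened path is the meeting point of two edges of different colors, hence lies in $V_c\cap V_{c'}$ with $c\neq c'$, so by $V_0\cap V_1\cap V_2=\emptyset$ it lies in exactly two of the three sets and is therefore hidden; and (iii) the two nested visibility conditions — that each $G_1\plug G_2$-letter begins and ends in $V_1\Delta V_2$, and that the outer path begins and ends in $V_0\Delta(V_1\Delta V_2)$ — collapse precisely to the single requirement that the flattened endpoints lie in $V_0\Delta V_1\Delta V_2$. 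The inverse map groups a ternary path into its maximal $\{1,2\}$-colored runs (the letters) separated by the $0$-edges, a grouping forced by the colors, and the weights multiply correctly because the weight of a path is the product of its edge weights. Running the identical argument with the colors permuted yields the analogous flattening bijection between the edges of $(G_0\plug G_1)\plug G_2$ and those of $R$; since both bijections are the flattening map onto the same set $E_R$ with the same $s,t,\omega$, the two iterated reductions coincide.

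The step I expect to be the main obstacle is clause (iii): showing that the endpoint conditions of the two binary reductions, imposed at different nesting levels, together impose neither more nor less than the single symmetric condition on the flattened endpoints. This is exactly where $V_0\cap V_1\cap V_2=\emptyset$ is used — it forces any vertex lying in $V_0$ and in some $V_c$ ($c\in\{1,2\}$) to lie in \emph{exactly one} of $V_1,V_2$, so that the intermediate block-boundary vertices automatically meet the inner reduction's visibility requirement and no spurious constraint is introduced or lost. The remaining verifications (alternation, and the source/target/weight bookkeeping) are routine.
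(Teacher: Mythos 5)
Your proposal is correct and takes essentially the same approach as the paper: its proof likewise introduces the $3$-colored plugging $G_{0}\bicol G_{1}\bicol G_{2}$, defines the simultaneous reduction on $V_{0}\Delta V_{1}\Delta V_{2}$ whose edges are the $3$-alternating paths, and identifies both iterated reductions with it by regrouping each $3$-alternating path into an alternating sequence of inner alternating paths and outer edges. Your clause (iii) spells out precisely the point the paper relegates to a footnote, namely that $V_{0}\cap V_{1}\cap V_{2}=\emptyset$ is what guarantees each maximal $\{1,2\}$-colored run has endpoints in $V_{1}\Delta V_{2}$ and is thus a genuine edge of $G_{1}\plug G_{2}$.
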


\begin{proof}
Let us define the 3-colored graph $G_{0}\square G_{1}\square G_{2}$ as the union graph $(\bigcup V_{i},\biguplus E_{i},\biguplus s_{i}, \biguplus t_{i})$ together with the coloring function $\delta$ from $\biguplus E_{i}$ to $\{0,1,2\}$ which associates to each edge the number $i$ of the graph $G_{i}$ it comes from. We consider the 3-alternating paths between $G_{0},G_{1},G_{2}$, that is the paths $(e_{i})$ in $G_{0}\square G_{1}\square G_{2}$ satisfying:
\begin{equation}
\delta(e_{i})\neq\delta(e_{i+1})\nonumber
\end{equation}
Then, we can define the simultaneous reduction of $G_{0},G_{1},G_{2}$ as the graph $\mathop{\dblcolon}_{i}G_{i}=(V_{0}\Delta V_{1}\Delta V_{2},F,s_{F},t_{F})$, where $F$ is the set of 3-alternating paths between $G_{0},G_{1},G_{2}$, $s_{F}(e)$ is the beginning of the path $e$ and $t_{F}(e)$ is its end.

We then show that this induced graph $\mathop{\dblcolon}_{i}G_{i}$ is equal to $(G_{0}\plug G_{1})\plug G_{2}$ and $G_{0}\plug (G_{1}\plug G_{2})$. This is a simple verification. Indeed, to prove for instance that $\mathop{\dblcolon}_{i}G_{i}$ is equal to $(G_{0}\plug G_{1})\plug G_{2}$, we just write the 3-alternating paths in $G_{0},G_{1},G_{2}$ as an alternating sequence of alternating paths in $G_{0}\bicol G_{1}$ (with\footnote{This is where the hypothesis $V_{0}\cap V_{1}\cap V_{2}=\emptyset$ is important. If this is not satisfied, one gets some 3-alternating paths of the form $\rho x$, where $x$ is an edge in $G_{2}$ and $\rho$ is an alternating path in $G_{0}\bicol G_{1}$, but such that $\rho$ does not correspond to an edge in $G_{0}\plug G_{1}$.} source and target in $V_{0}\Delta V_{1}$, i.e. an edge of $G_{0}\plug G_{1}$) and edges in $G_{2}$.
\end{proof}

\begin{remark}
Notice that reduction is not a composition of functions, and, because of the locativity of our framework, associativity is true only under an additional assumption on how the three graphs intersect. To get a counter-example, just take three graphs $F,G,H$ with $V_{F}=V_{G}=V_{H}=\{1\}$ such that $F,G$ have no edges and $H$ has only one edge (necessarily of source and target $1$): then $F\plug (G\plug H)=F$ and $(F\plug G)\plug H=H$.

However, we will get a genuine associativity when defining our category, since composition will be defined up to delocation (see section \ref{denot}).
\end{remark}

We then get the following proposition that will allow us to define our three-term adjunction.

\begin{proposition}\label{adjunct}
Let $G$, $H$ and $F$ be directed graphs, with $V_{G}\cap V_{H}=\emptyset$ and $V_{G}\cup V_{H}\subseteq V_{F}$. We have, denoting by $\sharp A$ the cardinality of $A$,
\begin{equation}
\sharp(\cycl{F,G\cup H})=\sharp(\cycl{F,G})+\sharp(\cycl{F\plug G,H})\nonumber
\end{equation}
\end{proposition}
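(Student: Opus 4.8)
The plan is to partition the alternating $1$-circuits counted on the left according to whether or not they use an edge of $H$, and to match the two pieces with the two summands on the right. Throughout I would record the consequence of the hypotheses that is actually used: since $V_G \subseteq V_F$ we have $V_{F \plug G} = V_F \Delta V_G = V_F \setminus V_G$, and since moreover $V_G \cap V_H = \emptyset$ and $V_H \subseteq V_F$, we get $V_H \subseteq V_F \setminus V_G = V_{F \plug G}$. This inclusion is the only point where both hypotheses are needed, and it is exactly what lets $H$'s vertices sit inside $V_{F \plug G}$ so that $(F \plug G) \bicol H$ is a legitimate object.

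First I would dispose of the circuits using no $H$-edge. Coloring $F$-edges $0$, $G$-edges $1$ and $H$-edges $2$, an alternating cycle in $F \bicol (G \cup H)$ is one whose colors alternate between $0$ and $\{1,2\}$. Those avoiding color $2$ are literally the alternating cycles of $F \bicol G$, with the same notion of $1$-cycle and the same cyclic-equivalence classes, so this subset is in canonical bijection with $\cycl{F,G}$ and accounts for the first summand.

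The core of the argument is a bijection between the alternating $1$-circuits of $F \bicol (G \cup H)$ that use at least one $H$-edge and the elements of $\cycl{F \plug G, H}$. Given such a circuit, I would look at its color-$2$ edges cyclically; two color-$2$ edges can never be adjacent (consecutive edges alternate $0$ and $\{1,2\}$), so between consecutive $H$-edges lies a maximal segment of the form $0,1,0,1,\dots,0$, that is, an alternating path of $F \bicol G$ beginning and ending with an $F$-edge. Its source and target are the target and source of adjacent $H$-edges, hence lie in $V_H \subseteq V_{F \plug G}$, so each such segment is precisely an edge of $F \plug G$. Reading the circuit as the alternating sequence of its $H$-edges and these segments therefore produces an alternating cycle in $(F \plug G) \bicol H$; call this the \emph{grouping}. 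Its inverse, \emph{substitution}, replaces each $F \plug G$-edge by the alternating $F$-$G$ path that it is. Since the $H$-edges are exactly the color-$2$ edges and the substituted segments contain none, grouping and substitution are visibly mutually inverse on circuits.

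What remains, and what I expect to be the only delicate step, is that this bijection preserves the $1$-circuit condition. The cleanest route is to note that both maps commute with taking powers of a primitive cycle: if $\pi = \rho^k$ with $\rho$ primitive, then $\rho$ still uses an $H$-edge (the edge set of $\pi$ is that of $\rho$), and the wrap-around segment of $\rho$ is the same at every junction of $\rho^k$, so the grouping of $\rho^k$ is the $k$-th power of the grouping of $\rho$, and dually for substitution. Hence grouping sends $k$-circuits to $k$-circuits, in particular $1$-circuits to $1$-circuits, and so does its inverse; it therefore restricts to a bijection between the two sets of $1$-circuits and accounts for the second summand. Adding the two cardinalities gives the stated identity. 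The main obstacle is precisely this primitivity bookkeeping—ensuring that cutting a circuit at its $H$-edges and regluing cannot alter its cyclic period—for which the observation that the maps commute with powers (together with the fact that a primitive root still contains an $H$-edge) is the key.
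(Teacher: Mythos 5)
Your proof is correct and follows essentially the same route as the paper's: partition the alternating $1$-circuits of $F\bicol(G\cup H)$ according to whether they use an edge of $H$, identify the $H$-free ones with $\cycl{F,G}$, and cut the others at their $H$-edges so that the intermediate alternating $F$--$G$ segments (whose endpoints lie in $V_{H}\subseteq V_{F\plug G}$) become edges of $F\plug G$, giving the bijection with $\cycl{F\plug G,H}$. The only difference is one of rigor: the paper declares the correspondence ``clear from the definitions,'' whereas you explicitly verify the inclusion $V_{H}\subseteq V_{F\plug G}$ and, via the commutation of grouping and substitution with powers of a primitive cycle, that the $1$-circuit (primitivity) condition is preserved in both directions --- a genuine gap-filling of details the paper leaves implicit.
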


\begin{proof}
Given an alternating $1$-circuit $\{(e_{i})\}_{0\leqslant 2n-1}$ between $G\cup H$ and $F$, we have two cases. First, the $1$-circuit can be a sequence of edges between vertices in $V_{G}$, and this means that the $1$-circuit is between $G$ and $F$, i.e. for all $0\leqslant i\leqslant 2n-1$ the edge $e_{i}$ is either an edge of $G$ or an edge of $F$. In this case, it is not counted as a $1$-circuit between $F\plug G$ and $H$. In the second case, the $1$-circuit goes through at least one element of $V_{H}$, and it is therefore not counted as a $1$-circuit between $G$ and $F$. In this case, the fact that this $1$-circuit induces a $1$-circuit between $F\plug G$ and $H$ is clear from the definitions. Indeed, if $i_{1},\dots,i_{k}$ are the indices such that the edges $e_{i_{j}}$ are the only edges in $\pi$ coming from $H$, then the paths defined for $0\leqslant j\leqslant k+1$ (taking $i_{0}=-1$ and $i_{k+1}=2n$) as $\pi_{j}=\{e_{p}\}_{i_{j}+1\leqslant p\leqslant i_{j+1}-1}$ are in one-to-one correspondence with edges in $F\plug G$.
\end{proof}

\begin{definition}[Measurement of alternating $1$-circuits]
Let $G$ and $H$ be two directed weighted graphs. We define, taking $log(0)=-\infty$, their interaction $\scalar{G,H}\in\mathbb{R}_{\geqslant 0}\cup\{\infty\}$ as:
\begin{equation}
\scalar{G,H}=\sum_{\pi\in\cycl{G,H}} -log(1-\omega_{G\bicol H}(\pi))\nonumber
\end{equation}
\end{definition}

\begin{remark}
The choice of the function $-log(1-x)$ is essential in order to get Proposition \ref{invariant}, which is the key result upon which the correspondence between our framework and Girard's geometry of interaction is constructed. Indeed, we will see in the last section (Theorem \ref{meashyp}) that our measurement corresponds exactly to the one used by Girard.

However, given any function $m : ]0,1]\rightarrow \mathbb{R}_{\geqslant 0}\cup\{\infty\}$, we can define $\scalar{G,H}_{m}$ as the sum $\sum_{\pi\in\cycl{G,H}} m(\omega_{G\bicol H}(\pi))$ and get all the results of sections \ref{goi} and \ref{denot}.
\end{remark}

\begin{theorem}[Adjunction]\label{adjunctweight}
Let $F$, $G$, and $H$ be directed weighted graphs such that $V_{G}\cap V_{H}=\emptyset$ and $V_{G}\cup V_{H}\subseteq V_{F}$. We have 
\begin{equation}
\scalar{F,G\cup H}=~\scalar{F,G}+\scalar{F\plug G, H}\label{adjuncteq}
\end{equation}
\end{theorem}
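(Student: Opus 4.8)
The plan is to refine the purely numerical bijection underlying Proposition \ref{adjunct} into a \emph{weight-preserving} one, and then transport the defining sum of $\scalar{\cdot,\cdot}$ across it. First I would record that the proof of Proposition \ref{adjunct} does more than count: it produces an explicit bijection
\[
\Phi:\cycl{F,G\cup H}\xrightarrow{\sim}\cycl{F,G}\disjun\cycl{F\plug G,H}.
\]
A $1$-circuit $\pi$ all of whose edges lie in $F$ or $G$ (equivalently, one that never visits $V_{H}$) is sent, unchanged as a sequence of edges, to the corresponding element of $\cycl{F,G}$; a $1$-circuit that does visit $V_{H}$ is sent to the element of $\cycl{F\plug G,H}$ obtained by collapsing each maximal sub-path lying in $F\bicol G$ to the single edge of $F\plug G$ it represents. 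The whole content of the theorem beyond Proposition \ref{adjunct} is that $\Phi$ preserves the weight of circuits, namely $\omega_{F\bicol(G\cup H)}(\pi)=\omega(\Phi(\pi))$.

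I would then verify this invariance case by case. When $\Phi$ leaves the edge sequence untouched the two weights are literally the same product, so there is nothing to prove. In the remaining case, let $e_{i_{1}},\dots,e_{i_{k}}$ be the edges of $\pi$ coming from $H$ and let $\pi_{1},\dots,\pi_{k}$ be the intervening alternating sub-paths in $F\bicol G$, exactly as in the proof of Proposition \ref{adjunct}. Since the weight of a path is by definition the product of the weights of its edges, we have
\[
\omega_{F\bicol(G\cup H)}(\pi)=\Big(\prod_{j=1}^{k}\omega(\pi_{j})\Big)\cdot\Big(\prod_{j=1}^{k}\omega(e_{i_{j}})\Big).
\]
By the definition of reduction, each $\pi_{j}$ is an edge of $F\plug G$ carrying precisely the weight $\omega(\pi_{j})$; hence the right-hand side is the product of the weights of the edges of $\Phi(\pi)$, i.e. $\omega(\Phi(\pi))$. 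In other words the total product of edge weights is merely regrouped, never altered, so $\Phi$ is weight-preserving.

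Finally, since the summand $-\log(1-x)$ depends on a circuit only through its weight, the weight-preserving bijection $\Phi$ carries each term of $\scalar{F,G\cup H}$ to the matching term on the right. Splitting the sum along the disjoint union then yields
\begin{align*}
\scalar{F,G\cup H}&=\sum_{\pi\in\cycl{F,G\cup H}}-\log(1-\omega(\pi))\\
&=\sum_{\pi'\in\cycl{F,G}}-\log(1-\omega(\pi'))+\sum_{\pi''\in\cycl{F\plug G,H}}-\log(1-\omega(\pi'')),
\end{align*}
which is exactly \eqref{adjuncteq}. I expect the only genuinely delicate point to be the weight bookkeeping of the second case—checking that collapsing the sub-paths $\pi_{j}$ into single edges of $F\plug G$ matches the definition of the weight of a reduced edge on the nose. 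No convergence issue arises: as $\omega(\pi)\in ]0,1]$ each term $-\log(1-\omega(\pi))$ lies in $\mathbb{R}_{\geqslant 0}\cup\{\infty\}$, so every sum is a sum of non-negative extended reals and the rearrangement along $\Phi$ is valid even when some value is $\infty$.
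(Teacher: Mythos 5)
Your proof is correct and follows essentially the same route as the paper's: the paper likewise invokes the bijection from Proposition \ref{adjunct} and observes that, by the definition of weights (in particular the weight assigned to an edge of $F\plug G$ as the product over its underlying alternating path), the corresponding $1$-circuit has the same weight, then transports the sum. Your version merely makes explicit the case analysis, the regrouping of the edge-weight product, and the harmless rearrangement of sums of non-negative extended reals, all of which the paper leaves implicit.
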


\begin{proof}
From Proposition \ref{adjunct}, to each alternating $1$-circuit $\pi$ in $F\bicol(G\cup H)$ there corresponds one and only one alternating $1$-circuit which lies either in $F\bicol G$ or in $(F\plug G)\bicol H$. In both cases, the definitions of weights ensure that the corresponding $1$-circuit has the same weight as $\pi$.
\end{proof}

For the next proposition, we extend the definition of $\scalar{F,G}$ to graphs with weights greater than $1$ by letting 
\begin{equation}
\scalar{F,G}=\sum_{\pi\in\cycl{F,G}} \sum_{k=1}^{\infty} \frac{(\omega_{F\bicol G}(\pi))^{k}}{k}\nonumber
\end{equation}
which allows us to consider $\scalar{\what{G},\what{H}}$.
\begin{proposition}\label{invariant}
Let $G,H$ be directed weighted graphs. We have
\begin{equation}
\scalar{G,H}=\scalar{G,\what{H}}\nonumber
\end{equation}
\end{proposition}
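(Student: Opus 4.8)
The plan is to reduce both sides of the identity to one and the same sum indexed by based alternating closed walks, and then to exhibit the collapsing map $H\to\what{H}$ as a weight-preserving fibration between these two index sets.

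First I would rewrite $\scalar{G,H}$ without the logarithm. Using $-\log(1-x)=\sum_{k\geqslant 1}x^{k}/k$ (which also covers the extended definition, and the value $+\infty$ when $x=1$ or a weight is infinite) together with the fact that the weight of a cycle is multiplicative and rotation-invariant, I get
\begin{equation}
\scalar{G,H}=\sum_{\pi\in\cycl{G,H}}\sum_{k\geqslant 1}\frac{\omega(\pi)^{k}}{k}.\nonumber
\end{equation}
By the cyclic-group reasoning of Proposition \ref{countingeqclas}, every alternating circuit is uniquely $\pi^{k}$ for a $1$-circuit $\pi$ and an integer $k\geqslant 1$, and $\omega(\pi^{k})=\omega(\pi)^{k}$; so this rewrites as $\sum_{C}\omega(C)/k(C)$, the sum over \emph{all} alternating circuits $C$ of $G\bicol H$, where $C$ is a $k(C)$-circuit. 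Finally, Proposition \ref{countingeqclas} says a $k$-circuit of length $L$ has exactly $L/k$ cyclic representatives, so passing to based closed walks (genuine cyclic sequences of edges, \emph{not} taken modulo rotation) I obtain
\begin{equation}
\scalar{G,H}=\sum_{b}\frac{\omega(b)}{\ell(b)},\nonumber
\end{equation}
where $b$ ranges over the alternating closed walks of $G\bicol H$, weighted by $\omega(b)$ and of length $\ell(b)$. The very same rewriting applies verbatim to $\scalar{G,\what{H}}$.

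Next I would introduce the projection $p$ sending each edge of $H$ to the edge of $\what{H}$ with the same endpoints, and acting as the identity on $V_{G}$, on $E_{G}$ and on the colouring $\delta$. It maps alternating closed walks of $G\bicol H$ to those of $G\bicol\what{H}$ while preserving length, since endpoints, alternation and the boundary condition $\delta(e_{n})\neq\delta(e_{0})$ are all preserved. The fibre $p^{-1}(\bar b)$ over a walk $\bar b$ is obtained by choosing, independently at each $\what{H}$-position, one edge of $H$ lying above it, the $G$-positions being fixed; hence, because the $G$-edges are untouched, because each collapsed edge of $\what{H}$ from $v$ to $w$ carries weight $\sum_{e\in E_{H}(v,w)}\omega_{H}(e)$, and because weights multiply along a walk,
\begin{equation}
\sum_{b\in p^{-1}(\bar b)}\omega(b)=\omega(\bar b).\nonumber
\end{equation}

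I would then conclude by summing this fibrewise identity over $\bar b$. All terms involved are nonnegative and all lengths are positive, so Tonelli lets me regroup the walk sum of $G\bicol H$ along the fibres of $p$ (using $\ell(b)=\ell(\bar b)$ for $b\in p^{-1}(\bar b)$), giving $\sum_{b}\omega(b)/\ell(b)=\sum_{\bar b}\omega(\bar b)/\ell(\bar b)$, that is $\scalar{G,H}=\scalar{G,\what{H}}$. The one genuinely delicate point is the first paragraph: correctly accounting for the cyclic-rotation multiplicities so that the factor $1/k$ produced by the logarithm reconciles with the factor $1/\ell$ attached to based walks; everything after that is bookkeeping of products. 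A secondary point to keep in mind is that $\what{H}$ may carry infinite weights when infinitely many parallel edges are collapsed, but since every series has nonnegative terms this merely means both sides may be $+\infty$ simultaneously, and the regrouping stays valid.
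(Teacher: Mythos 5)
Your proof is correct, and it takes a genuinely different route from the paper's. The paper works locally: Lemma \ref{once} treats the collapse of just two parallel edges between one fixed pair of vertices (expanding $-log(1-x)$ as a series, using the binomial theorem to enumerate the cycles of the original graph lying over a given circuit of the collapsed one, and Proposition \ref{countingeqclas} to pass from $d$-cycles back to $1$-circuits); Corollary \ref{twiceandmore} then extends this to finitely many parallel edges by recurrence, Corollary \ref{infinitetimes} to countably many by a continuity-of-the-logarithm limit, and Proposition \ref{invariant} follows by iterating over every pair of vertices of $H$. You instead perform a single global regrouping: first the rewriting $\scalar{G,H}=\sum_{b}\omega(b)/\ell(b)$ over \emph{based} alternating closed walks, whose multiplicity bookkeeping you handle correctly via Proposition \ref{countingeqclas} (the root of an alternating circuit satisfying $\delta(e_{n})\neq\delta(e_{0})$ is again such a circuit, so circuits do correspond bijectively to pairs of a $1$-circuit and an exponent $k$, and a $k$-circuit of length $L$ has $L/k$ based representatives); then the collapsing map $p$ is a length- and colour-preserving surjection whose fibres are products of edge-sets over each $\what{H}$-position, so distributivity gives the fibrewise identity and Tonelli justifies all regroupings at once. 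Your route buys uniformity: nonnegativity of all terms absorbs the countably-infinite-fibre case in one stroke, so you never need the separate limit argument of Corollary \ref{infinitetimes}, and your based-walk formula is exactly the combinatorial content of the trace identity the paper only establishes later (Proposition \ref{measures}), while your distributivity step is the same computation the paper uses for Lemma \ref{appli}. What the paper's staged route buys in exchange is a reusable local lemma with very explicit control of the sets of cycles above a given circuit (its sets $F$, $E$, $\bar{E}$), at the price of a three-step induction plus an iteration over vertex pairs.
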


The proof of this proposition relies on the following technical lemma and its corollaries.

\begin{lemma}\label{once}
Let $G$ be a graph, and $e_{1}, e_{2}$ be edges with same source and target of respective weights $x_{1},x_{2}$. Let $G'$ be the graph $G$ where we replaced $e_{1}, e_{2}$ by a single edge $g$ of weight $x_{1}+x_{2}$. Let $\bar{\pi}$ be a $1$-circuit in $G'$ that goes through $g$ exactly $l$ times, i.e. $\bar{\pi}=\overline{\rho_{1}g\rho_{2}g\dots\rho_{l}g}$ where for all $1\leqslant i\leqslant l$ the path $\rho_{i}$ does not contain $g$. Let us denote by $F,E$ the following sets:
\begin{eqnarray}
F&=&\{\mu=\rho_{1}e_{i_{1,1}}\dots\rho_{l}e_{i_{1,l}}\rho_{1}e_{i_{2,1}}\dots\rho_{l}e_{i_{2,l}}\dots\rho_{1}e_{i_{m,1}}\dots\rho_{l}e_{i_{m,l}}\}\nonumber\\
E&=&\{\mu\in F~|~\mu\text{ is a $1$-cycle}\}\nonumber
\end{eqnarray}
Then $\bar{E}$ will denote the set of $1$-circuits in $E$, i.e. $\bar{E}$ is the set $E$ quotiented by cyclic permutations, and we have the following equality:
\begin{equation}
-log(1-\omega_{G'}(\bar{\pi}))=\sum_{\bar{\mu}\in\bar{E}}-log(1-\omega_{G}(\bar{\mu}))\nonumber
\end{equation}
\end{lemma}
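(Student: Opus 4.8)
The plan is to recognize the stated equality as an instance of the classical necklace (cyclotomic) identity, once the combinatorics of the unfoldings is set up correctly. First I would record the two weights involved. Writing $w = \prod_{i=1}^{l} \omega_{G}(\rho_{i})$ for the product of the weights of the fixed intermediate paths and $X = x_{1}+x_{2} = \omega_{G'}(g)$, the $1$-circuit $\bar\pi$ passes through $g$ exactly $l$ times and through each $\rho_{i}$ once, so $\omega_{G'}(\bar\pi) = w X^{l} = w(x_{1}+x_{2})^{l}$. On the other side, an element $\mu\in F$ built from $m$ blocks has weight $\omega_{G}(\mu)=w^{m}$ times the product of the chosen edge-weights $x_{1},x_{2}$, one factor of $w$ per round.

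Next I would repackage each block as a single letter. Let $A = \{1,2\}^{l}$ be the alphabet of per-round labelings of the $l$ occurrences of $g$, and assign to $\alpha=(\alpha_{1},\dots,\alpha_{l})\in A$ the weight $t_{\alpha}=w\prod_{j=1}^{l}x_{\alpha_{j}}$. The key bookkeeping identity is then $\sum_{\alpha\in A} t_{\alpha}=w(x_{1}+x_{2})^{l}=\omega_{G'}(\bar\pi)$. A word of length $m$ over $A$ is exactly the data of an element of $F$ with $m$ blocks, together with its cycle $\mu$, and its weight $t_{W}$ equals $\omega_{G}(\mu)$. Under this dictionary a word up to cyclic rotation is a circuit unfolding $\bar\pi$, and I claim an aperiodic (primitive) word is precisely a $1$-circuit of $\bar E$, with $t_{\nu}=\omega_{G}(\bar\mu)$ for the associated $\bar\mu$.

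The crux — and the step I expect to be the main obstacle — is justifying this last dictionary, namely that the only cyclic permutations of an unfolded cycle returning it to block form are the round-shifts, so that ``circuit $=$ necklace'' and ``$1$-circuit $=$ primitive necklace'' both hold. This is where the hypothesis that $\bar\pi$ is a $1$-circuit enters: since $\rho_{1}g\rho_{2}g\cdots\rho_{l}g$ is itself primitive, the fixed interspersed pattern of the $\rho_{i}$ forbids any symmetry finer than a full round. Consequently a labeling is aperiodic under round-shift if and only if its cycle $\mu$ is a $1$-cycle, which is exactly the primitivity condition governed by Proposition \ref{countingeqclas}.

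Finally I would invoke the necklace identity $\sum_{m\geqslant 1}\frac{1}{m}\big(\sum_{\alpha\in A}t_{\alpha}\big)^{m}=\sum_{\nu\text{ primitive}}-\log(1-t_{\nu})$, which I would prove by grouping words by their primitive root: a primitive necklace of length $d$ has exactly $d$ distinct rotations, contributing over all powers $k\geqslant 1$ the total $\sum_{k\geqslant 1} d\cdot\frac{t_{\nu}^{k}}{dk}=-\log(1-t_{\nu})$. Substituting $\sum_{\alpha\in A}t_{\alpha}=\omega_{G'}(\bar\pi)$ on the left and reading the right-hand side through the dictionary yields precisely $-\log(1-\omega_{G'}(\bar\pi))=\sum_{\bar\mu\in\bar E}-\log(1-\omega_{G}(\bar\mu))$. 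I would note throughout that, because weights may exceed $1$ once the two edges are merged, each ``$-\log(1-x)$'' is read as the formal series $\sum_{k\geqslant 1}x^{k}/k$, so every manipulation is an identity of formal power series in $x_{1},x_{2}$ and the $\rho$-weights, and no convergence issue arises.
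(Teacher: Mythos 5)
Your proof is correct and is essentially the paper's own argument in necklace-identity clothing: the paper likewise expands $-log(1-\omega_{G'}(\bar{\pi}))$ as $\sum_{k\geqslant 1}\frac{1}{k}(\cdot)^{k}$, identifies the $k$-th term with the total weight of the $k$-round unfoldings in $F$ via the binomial expansion of $(x_{1}+x_{2})^{l}$ (your $T^{m}=\sum_{W}t_{W}$, packaged there as the sets $F^{k}_{j}$ and the bijection $F^{k}\cong F^{k}_{1}\times\dots\times F^{k}_{l}$), and then regroups by primitive cyclic root with multiplicity $k/d$ exactly as in your final step, invoking Proposition \ref{countingeqclas}. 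The one place you go beyond the paper is your ``crux'' paragraph: the fact that primitivity of $\pi$ forces any edge-level cyclic permutation relating two block-form cycles to be a shift by whole rounds (so that circuits of $\bar{E}$ are precisely primitive necklaces, and each class of a $d$-th power meets $F^{k}$ in exactly $k/d$ elements) is used implicitly but never verified in the paper's proof, so making it explicit is a genuine, if small, improvement rather than a deviation.
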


\begin{proof}
Let us denote by $y_{i}$ the weight of the path $\rho_{i}$. Then:
\begin{eqnarray}
-log(1-\omega_{G'}(\bar{\pi})&=&\sum_{k\geqslant 1} \frac{\left((x_{1}+x_{2})^{l}y_{1}\dots y_{l}\right)^{k}}{k}\nonumber\\
&=&\sum_{k\geqslant 1} \frac{1}{k} \left(\prod_{j=1}^{l} \left((x_{1}+x_{2})y_{j}\right)^{k}\right)\nonumber\\
&=&\sum_{k\geqslant 1} \frac{1}{k} \left(\prod_{j=1}^{l} \left(\sum_{i=0}^{k}\binom{k}{i}x_{1}^{i}x_{2}^{k-i}y_{j}^{k}\right)\right)\nonumber
\end{eqnarray}

Let us denote by $F^{k}_{j}$ the set of paths $\{\rho_{j}x_{i_{1}}\dots\rho_{j}x_{i_{k}}~|~ 0\leqslant i_{p}\leqslant n\}$. Since there are exaclty $\binom{k}{i}$ elements $\mu$ of $F^{k}_{j}$ such that $\mu$ goes through $e_{1}$ exactly $i$ times, we have $\sum_{i=0}^{k}\binom{k}{i}x_{1}^{i}x_{2}^{k-i}y_{j}^{k}=\sum_{\mu\in F^{k}_{j}} \omega_{G}(\mu)$.

Moreover, the is an obvious bijection between $F^{k}=\{\mu\in F~|~|\mu|=k\}$ and the product $F^{k}_{1}\times F^{k}_{2}\times\dots\times F^{k}_{l}$. We therefore get:

\begin{eqnarray}
-log(1-\omega_{G'}(\bar{\pi}))&=&\sum_{k\geqslant 1} \frac{1}{k} \left(\prod_{j=1}^{l} \left(\sum_{\mu\in F^{k}_{j}} \omega_{G}(\mu)\right)\right)\nonumber\\
&=&\sum_{k\geqslant 1} \frac{1}{k} \left(\sum_{\mu\in F^{k}} \omega_{G}(\mu)\right)\nonumber
\end{eqnarray}

Now, if we take an element $\mu$ of $F^{k}$, it is a $d$-cycle for an integer $d$ that divides $k$ (what we will denote by $d\mathbin{|}k$). This means that there is an element $\nu\in E^{k/d}$ such that $\mu=\nu^{d}$. By Proposition \ref{countingeqclas} its equivalence class $\bar{\nu}$ up to cyclic permutations is then of cardinality $k/d$. Hence, since $\bar{E}^{k/d}$ is the set of equivalence classes up to cyclic permutations of the elements of $E^{k/d}$, we obtain: 

\begin{eqnarray}
-log(1-\omega_{G'}(\bar{\pi}))&=&\sum_{k\geqslant 1}\sum_{d\mathbin{|}k} \sum_{\nu\in E^{k/d}} \frac{(\omega_{G}(\nu))^{d}}{k}\nonumber\\
&=&\sum_{k\geqslant 1}\sum_{d\mathbin{|}k} \sum_{\bar{\nu}\in \bar{E}^{k/d}} \frac{k}{d}\frac{(\omega_{G}(\bar{\nu}))^{d}}{k}\nonumber\\
&=&\sum_{\bar{\nu}\in \bar{E}} \sum_{d\geqslant 1} \frac{(\omega_{G}(\bar{\nu}))^d}{d}\nonumber
\end{eqnarray}
\end{proof}

By a simple recurrence, we can now prove the same result for any finite number of edges $e_{1},\dots,e_{n}$.

\begin{corollary}\label{twiceandmore}
Let $G$ be a graph, and $e_{1},\dots,e_{n}$ be edges with same source and target of respective weights $x_{1},\dots,x_{n}$. Let $G'$ be the graph $G$ where we replaced $e_{1},\dots,e_{n}$ by a single edge $g$ of weight $\sum_{i=1}^{n} x_{i}$. Let $\bar{\pi}$ be a $1$-circuit in $G'$ that goes through $g$ exactly $l$ times, i.e. $\bar{\pi}=\overline{\rho_{1}g\rho_{2}g\dots\rho_{l}g}$ where for all $1\leqslant i\leqslant l$ the path $\rho_{i}$ does not contain $g$. Let us denote by $F,E$ the following sets:
\begin{eqnarray}
F&=&\{\mu=\rho_{1}e_{i_{1,1}}\dots\rho_{l}e_{i_{1,l}}\rho_{1}e_{i_{2,1}}\dots\rho_{l}e_{i_{2,l}}\dots\rho_{1}e_{i_{m,1}}\dots\rho_{l}e_{i_{m,l}}\}\nonumber\\
E&=&\{\mu\in F~|~\mu\text{ is a $1$-cycle}\}\nonumber
\end{eqnarray}
Then $\bar{E}$ will denote the set of $1$-circuits in $E$, i.e. $\bar{E}$ is the set $E$ quotiented by cyclic permutations, and we have the following equality:
\begin{equation}
-log(1-\omega_{G'}(\bar{\pi}))=\sum_{\bar{\mu}\in\bar{E}}-log(1-\omega_{G}(\bar{\mu}))\nonumber
\end{equation}
\end{corollary}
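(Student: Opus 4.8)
The plan is to argue by induction on $n$, peeling off a single two-edge merge at each step with the help of Lemma \ref{once}. The base case $n=1$ is trivial, since then $G'=G$ and $g=e_{1}$, so $F=E=\{\bar\pi\}$ and both sides coincide; the case $n=2$ is exactly Lemma \ref{once}, which also provides the tool used in the inductive step.

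For the inductive step I would assume the statement for any $n-1$ parallel edges and treat $e_{1},\dots,e_{n}$. First introduce the intermediate graph $G''$ obtained from $G$ by replacing $e_{1},\dots,e_{n-1}$ with a single edge $h$ of weight $x_{1}+\dots+x_{n-1}$, so that $G''$ carries the two parallel edges $h$ and $e_{n}$, and merging them yields precisely $G'$ with $g$ of weight $(x_{1}+\dots+x_{n-1})+x_{n}=\sum_{i}x_{i}$. Applying Lemma \ref{once} to $G''$ for the pair $h,e_{n}$ rewrites $-\log(1-\omega_{G'}(\bar\pi))$ as $\sum_{\bar\mu\in\bar E''}-\log(1-\omega_{G''}(\bar\mu))$, where $\bar E''$ is the set of $1$-circuits obtained from $\bar\pi$ by resolving each occurrence of $g$ into $h$ or $e_{n}$ over all windings $m$. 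Each $\bar\mu\in\bar E''$ is itself a $1$-circuit of $G''$, so the induction hypothesis applies verbatim to it and rewrites each term $-\log(1-\omega_{G''}(\bar\mu))$ as $\sum_{\bar\nu\in\bar E_{\bar\mu}}-\log(1-\omega_{G}(\bar\nu))$, where $\bar E_{\bar\mu}$ resolves each occurrence of $h$ in $\bar\mu$ into one of $e_{1},\dots,e_{n-1}$.

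It then remains to check that substituting one sum into the other collapses the double sum $\sum_{\bar\mu\in\bar E''}\sum_{\bar\nu\in\bar E_{\bar\mu}}$ into the single sum $\sum_{\bar\nu\in\bar E}$ of the desired statement, i.e. that performing the resolution of $g$ in two stages — first into $\{h,e_{n}\}$, then $h$ into $\{e_{1},\dots,e_{n-1}\}$ — enumerates each $1$-circuit of $\bar E$ exactly once. This is the one genuinely delicate point, and it is exactly where the definitions' insistence on ranging over all windings $m$ is needed. Given a $1$-cycle $\nu$ representing an element of $\bar E$, its projection $\mu$, obtained by sending every $e_{i}$ with $i<n$ back to $h$, need not itself be a $1$-cycle; but writing $\mu=\mu_{0}^{d}$ with $\mu_{0}$ primitive, the class $\bar\mu_{0}$ lies in $\bar E''$ and $\nu$ occurs as a resolution of the $d$-fold winding of $\mu_{0}$, so that $\bar\nu\in\bar E_{\bar\mu_{0}}$. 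Uniqueness of the primitive root, a cyclic-group fact already invoked in Proposition \ref{countingeqclas}, guarantees that this $\bar\mu_{0}$ is the unique index of $\bar E''$ producing $\bar\nu$, so the double sum is a genuine partition of $\bar E$. The weight bookkeeping is automatic, since $\omega$ is multiplicative along paths and $h$ carries the sum of the weights it replaces. The main obstacle is thus not the algebra but precisely this winding/periodicity count establishing that the two-stage resolution is a bijection onto $\bar E$.
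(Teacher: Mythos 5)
Your proposal is correct and follows exactly the route the paper intends: the paper proves this corollary in one line (``by a simple recurrence'' from Lemma \ref{once}), and your induction --- merging $e_{1},\dots,e_{n-1}$ into an intermediate edge $h$ and applying Lemma \ref{once} to the pair $h,e_{n}$ --- is that recurrence spelled out. You moreover correctly identify and resolve the one point the paper leaves implicit, namely that the two-stage resolution partitions $\bar{E}$, via uniqueness of primitive roots as in Proposition \ref{countingeqclas}.
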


But the result is actually true even when one has an infinite (countable) number of edges.

\begin{corollary}\label{infinitetimes}
Let $G$ be a graph, and $(e_{i})_{i\in\mathbb{N}}$ be edges with the same sources and targets. For all $i\in\mathbb{N}$, we denote by $x_{i}$ the weight of $e_{i}$ in $G$. Let $G'$ be the graph $G$ where we replaced $e_{0},\dots,e_{n},\dots$ by a single edge $g$ of weight $\sum_{i\in\mathbb{N}}x_{i}$. Let us consider a $1$-circuit $\bar{\pi}$ in $G'$ that goes through $g$ exactly $l$ times, i.e. $\bar{\pi}=\overline{\rho_{1}g\rho_{2}g\dots\rho_{l}g}$ where for all $1\leqslant i\leqslant l$ the path $\rho_{i}$ does not contain $g$. Let us denote by $F,E$ the following sets:
\begin{eqnarray}
F&=&\{\mu=\rho_{1}e_{i_{1,1}}\dots\rho_{l}e_{i_{1,l}}\rho_{1}e_{i_{2,1}}\dots\rho_{l}e_{i_{2,l}}\dots\rho_{1}e_{i_{m,1}}\dots\rho_{l}e_{i_{m,l}}\}\nonumber\\
E&=&\{\mu\in F~|~\mu\text{ is a $1$-cycle}\}\nonumber
\end{eqnarray}
Then $\bar{E}$ will denote the set of $1$-circuits in $E$, i.e. $\bar{E}$ is the set $E$ quotiented by cyclic permutations, and we have the following equality:
\begin{equation}
-log(1-\omega_{G'}(\bar{\pi}))=\sum_{\bar{\mu}\in\bar{E}}-log(1-\omega_{G}(\bar{\mu}))\nonumber
\end{equation}
\end{corollary}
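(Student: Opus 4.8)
The plan is to derive the countable case as a monotone limit of the finite case established in Corollary~\ref{twiceandmore}. Throughout, $-\log(1-x)$ is read, via the extension introduced before Proposition~\ref{invariant}, as the series $\sum_{k\geqslant 1}x^{k}/k$, which is a sum of nonnegative terms, each nondecreasing in $x$; this is what makes the limiting arguments below instances of monotone convergence, valid in $\mathbb{R}_{\geqslant 0}\cup\{\infty\}$.

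First I would truncate. For $N\in\mathbb{N}$, let $G_{N}$ be the graph obtained from $G$ by deleting the edges $e_{N+1},e_{N+2},\dots$, so that $G_{N}$ keeps only the finitely many parallel edges $e_{0},\dots,e_{N}$; the paths $\rho_{i}$ use none of the $e_{i}$ and are therefore unchanged. Let $G'_{N}$ be obtained from $G_{N}$ by replacing $e_{0},\dots,e_{N}$ with a single edge $g_{N}$ of weight $\sum_{i=0}^{N}x_{i}$. The circuit $\bar{\pi}$ still passes through $g_{N}$ exactly $l$ times, so Corollary~\ref{twiceandmore} applied to the pair $(G_{N},G'_{N})$ gives
\begin{equation}
-\log(1-\omega_{G'_{N}}(\bar{\pi}))=\sum_{\bar{\mu}\in\bar{E}_{N}}-\log(1-\omega_{G}(\bar{\mu})),\nonumber
\end{equation}
where $\bar{E}_{N}$ is the set of those circuit classes in $\bar{E}$ whose inserted edges all carry an index in $\{0,\dots,N\}$, and where I use that a circuit has the same weight in $G_{N}$ as in $G$.

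Next I would let $N\to\infty$ in this identity. On the left, $\omega_{G'_{N}}(\bar{\pi})=\big(\sum_{i=0}^{N}x_{i}\big)^{l}\prod_{j}y_{j}$ increases to $\big(\sum_{i\in\mathbb{N}}x_{i}\big)^{l}\prod_{j}y_{j}=\omega_{G'}(\bar{\pi})$ (writing $y_{j}$ for the weight of $\rho_{j}$), whence each term $\omega_{G'_{N}}(\bar{\pi})^{k}/k$ increases to $\omega_{G'}(\bar{\pi})^{k}/k$ and the series converge accordingly. On the right, every circuit in $\bar{E}$ is a finite sequence and so involves only finitely many of the $e_{i}$; hence $(\bar{E}_{N})_{N}$ is a nondecreasing family with $\bigcup_{N}\bar{E}_{N}=\bar{E}$, and since the summands are nonnegative the finite-stage sums increase to the sum over $\bar{E}$. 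Both sides thus converge monotonically, and since they coincide at every finite stage their limits coincide, which is the desired equality.

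The two monotone-convergence steps are routine; the one point deserving care is verifying that passing to cyclic-permutation classes is compatible with truncation, i.e. that $\bigcup_{N}\bar{E}_{N}=\bar{E}$ genuinely holds at the level of circuits and not merely of cycles. This is clear because a cycle and all its cyclic images use exactly the same finite set of edges (Proposition~\ref{countingeqclas}), so a class lies in $\bar{E}_{N}$ as soon as one, equivalently every, representative does; this is where I expect the only real bookkeeping, though it is minor.
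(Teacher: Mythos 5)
Your proof is correct and follows essentially the same route as the paper's: truncate to the finitely many parallel edges $e_{0},\dots,e_{N}$, apply Corollary~\ref{twiceandmore} at each finite stage, and pass to the limit, your sets $\bar{E}_{N}$ being exactly the paper's $\bar{E}_{\leqslant N}$. Your explicit appeals to monotone convergence (and the check that cyclic-permutation classes are compatible with truncation) merely make rigorous what the paper compresses into ``continuity of the logarithm'' and a rearrangement of the sum over the partition $(\bar{E}_{i})_{i\in\mathbb{N}}$.
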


\begin{proof}
Let us first introduce some notations. We will consider the sets $F_{\leqslant i}$ defined, for every $i\in\mathbb{N}$, as the set of cycles $\pi$ in $F$ such that $e_{k}\in\pi\Rightarrow k\leqslant i$. This allows us to define $F_{i}=F_{\leqslant i}-F_{\leqslant i-1}$ for $i\leqslant 1$ and $F_{0}=F_{\leqslant 0}$ by convention. Notice that $(F_{i})_{i\in\mathbb{N}}$ is a partition of $F$.  Following the preceding notations, we will denote by $E_{i}$ (resp. $E_{\leqslant i}$) the set of $1$-cycles in $F_{i}$ (resp. $F_{\leqslant i}$) and by $\bar{E}_{i}$ (resp. $\bar{E}_{\leqslant i}$) the corresponding set of $1$-circuits.

Then, by continuity of the logarithm and the preceding corollary, we have:
\begin{eqnarray}
-log(1-\omega_{G'}(\bar{\pi}))&=&\lim_{n\to \infty} -log(1-(\sum_{i=0}^{n}x_{i})y)\nonumber\\
&=&\lim_{n\to \infty} \sum_{\bar{\mu}\in\bar{E}_{\leqslant n}} -log(1-\omega_{G}(\bar{\mu}))\nonumber\\
&=&\lim_{n\to\infty} \sum_{i=0}^{n}\sum_{\bar{\mu}\in\bar{E}_{i}} -log(1-\omega_{G}(\bar{\mu}))\nonumber\\
&=&\sum_{n=0}^{\infty} \sum_{\bar{\mu}\in\bar{E}_{i}} -log(1-\omega_{G}(\bar{\mu}))\nonumber\\
&=&\sum_{\bar{\mu}\in\bar{E}} -log(1-\omega_{G}(\bar{\mu}))\nonumber
\end{eqnarray}
\end{proof}

\begin{proof}[Proof of Proposition \ref{invariant}]~\\
Using the preceding Lemma (\ref{once}) and its corollaries (\ref{twiceandmore}, \ref{infinitetimes}), we obtain that contracting all the edges with same source and target does not change the measurement.

Then, we obtain the wanted general result stated in Proposition \ref{invariant} by an iteration of this result over every set of vertices $E_{H}(v,w)$ ($v,w\in V_{H}$) in the graph $H$.
\end{proof}

\section{Geometry of Interaction}\label{goi}

Now we will construct a geometry of interaction based on the three-terms adjunction we obtained. Our objects (projects) will consist of a weighted graph, obviously, but we add to this a real number. This real number is here to compensate for the additional term $\scalar{F,G}$ of the adjunction (see the remark following Theorem \ref{duality}).

\begin{definition}[Projects]
A \emph{project} is a couple $\de{a}=(a,A)$, where $a\in \mathbb{R}_{\geqslant 0}$ is called the \emph{wager}, and $A$ is a weighted directed graph over a finite set of vertices $V_{A}$. The set $V_{A}$ of vertices of $A$ will be called the \emph{carrier} of $\de{a}$. 
\end{definition}

\begin{definition}[Measurement of the interaction]
Let $\de{a}=(a,A)$ and $\de{b}=(b,B)$ be two projects. We define $\sca{a}{b}=a+b+\scalar{A,B}$.
\end{definition}

\begin{definition}[Orthogonality]
Two projects $\de{a}$ and $\de{b}$ of same carrier are said to be \emph{orthogonal} when $\sca{a}{b}\not\in\{0,\infty\}$. We denote it by $\de{a}\simperp\de{b}$ and we define the orthogonal set of a set $A$ of projects of same carrier as $A^{\pol}=\{\de{b}~|~\forall \de{a}\in A,~ \de{a}\simperp\de{b}\}$.
\end{definition}

\begin{remark}\label{proofnet}
We want here to stress an important point related to proof nets. Taking a proof net $\finhyp$, the switchings used in the long-trip criterion define a set of permutations over the atoms: the permutation $\sigma$ induced by the axiom links, and permutations $\tau_{S}$ induced by the remaining links (one for each switching $S$). The correctness criterion tells\footnote{This is an easy reformulation of the Long Trips criterion that can be found in Girard's courses \cite{blindspot}.} us that the proof net is correct if and only if for all switching $S$ the product of the permutations $\sigma$ and $\tau_{S}$ is cyclic. Now, the permutations $\sigma$ and $\tau$ define two graphs $S$ and $T$ such that there exists exactly one alternating cycle in $S\bicol T$, going through all links. If we modify the weight of one of the edges of $T$ to make it strictly less than $1$, we then obtain a graph $T'$ such that $(0,T')\simperp(0,S)$.

We can therefore see switchings as projects that are orthogonal to the axiom links. Conversely, we can consider projects as generalized switching induced permutations.
\end{remark}

Now that the objects and the duality between them have been defined, we can introduce conducts — that will correspond to formulas or types — as sets of objects equal to their biorthogonal.
\begin{definition}[Conducts]
A non-empty set of projects $\cond{S}$ of same carrier $X$ equal to its biorthogonal $\cond{S}^{\pol\pol}$ is called a \emph{conduct}. We will call $X$ the carrier of the conduct $\cond{S}$.
\end{definition}

\begin{remark}
As for any definition of orthogonality, we get, for any sets $A,B$ of designs (of the same carrier), the classical results:
\begin{itemize}
\item $A^{\pol\pol\pol}=A^{\pol}$;
\item $A\subseteq B\Rightarrow B^{\pol}\subseteq A^{\pol}$.
\end{itemize}
\end{remark}

We will now proceed to define connectives on projects, and then on conducts.

\begin{definition}[Tensor]
The \emph{tensor product} of projects of disjoint carriers is defined as:
\begin{equation}
(a,A)\otimes(b,B)=(\sca{a}{b},A\cup B)\nonumber
\end{equation}
\end{definition}

\begin{remark} 
Notice that in this definition, since $\de{a}$ and $\de{b}$ have disjoint carriers, $\sca{a}{b}=a+b$.
\end{remark}

\begin{definition}[Tensor on Conducts]
Let $\cond{A,B}$ be conducts of disjoint carrier. We can form the conduct $\cond{A}\otimes\cond{B}$
\begin{equation}
\cond{A\otimes B}=\{\de{a}\otimes\de{b}~|~\de{a}\in\cond{A},\de{b}\in\cond{B}\}^{\pol\pol}\nonumber
\end{equation}
We will denote by $\cond{A\odot B}$ the set $\{\de{a}\otimes\de{b}~|~\de{a}\in\cond{A},\de{b}\in\cond{B}\}$.
\end{definition}

\begin{definition}[Cut]
We define, when $\sca{f}{g}\not= \infty$, the \emph{cut} $\de{f\plug g}$ of the projects $\de{f}$ and $\de{g}$ as follows:
\begin{equation}
\de{f}\plug\de{g}=(\sca{f}{g},F\plug G)\nonumber
\end{equation}
\end{definition}

\begin{proposition}[Properties of the Tensor]
The tensor product is commutative and associative. Moreover it has a neutral element\footnote{Here our notation differs from Girard's \cite{goi5}, where the unit conduct of the tensor is denoted by $\top$, which in his framework is also the unit conduct of the additive conjunction.}, namely $1=\{(0,(\emptyset,\emptyset))\}^{\pol\pol}=\{(a,(\emptyset,\emptyset))~|~a\geqslant 0\}$, where $(\emptyset,\emptyset)$ denotes the empty graph.
\end{proposition}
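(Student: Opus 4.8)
The plan is to separate two levels: the operation on projects, where everything is arithmetical, and the induced operation on conducts, where the biorthogonal closure must be controlled.

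At the level of projects, since $\de a\otimes\de b=(\sca{a}{b},A\cup B)$ with disjoint carriers, there are no alternating $1$-circuits between $A$ and $B$, so $\scalar{A,B}=0$ and the wager is just $a+b$. Commutativity and associativity then reduce to those of $+$ on $\mathbb{R}_{\geqslant 0}$ and of $\cup$ on finite vertex sets (and of $\disjun$ on edges). I would only flag the locative caveat stressed in the introduction: $\disjun$ is symmetric and associative merely up to the canonical isomorphisms of coproducts, so these are equalities modulo that identification, while the vertex sets — which carry the locativity — are literally equal. Here the project $(0,(\emptyset,\emptyset))$ is visibly neutral.

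To identify the conduct $1$, I note that every project of empty carrier is $(a,(\emptyset,\emptyset))$ with $a\geqslant 0$, that $\scalar{(\emptyset,\emptyset),(\emptyset,\emptyset)}=0$, and hence that $(a,(\emptyset,\emptyset))\poll(b,(\emptyset,\emptyset))$ iff $a+b\neq 0$. A direct computation gives $\{(0,(\emptyset,\emptyset))\}^{\pol}=\{(b,(\emptyset,\emptyset))\mid b>0\}$ and, orthogonalising again, $\{(0,(\emptyset,\emptyset))\}^{\pol\pol}=\{(a,(\emptyset,\emptyset))\mid a\geqslant 0\}$, as announced. For neutrality I would write $\cond A\odot 1=\{(a+c,A)\mid(a,A)\in\cond A,\ c\geqslant 0\}$ (using $A\cup(\emptyset,\emptyset)=A$ and $\scalar{A,(\emptyset,\emptyset)}=0$). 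The inclusion $\cond A\subseteq\cond A\odot 1$ (take $c=0$) gives $\cond A=\cond A^{\pol\pol}\subseteq\cond A\otimes 1$; for the converse I would check $\cond A^{\pol}\subseteq(\cond A\odot 1)^{\pol}$, since $\scalar{(a+c,A),\de b}=\sca{a}{b}+c$ stays in $(0,\infty)$ when $\de b\in\cond A^{\pol}$ and $c\geqslant 0$. Orthogonalising yields $\cond A\otimes 1=\cond A$, and $1\otimes\cond A=\cond A$ by commutativity.

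The one genuinely nontrivial point is that $\otimes$ stays associative after biorthogonal closure, and this is where I expect the main obstacle. The engine is the adjunction identity derived from Theorem \ref{adjunctweight}: for a test project $\de w$ of carrier $V_A\cup V_B$,
\[ \scalar{\de w,\de a\otimes\de b}=\scalar{\de w\plug\de a,\de b}=\scalar{\de w\plug\de b,\de a}, \]
the cuts being defined because a factor $\infty$ on the right would force the left-hand side to be $\infty$. This converts orthogonality to $\de a\otimes\de b$ into orthogonality of the cut $\de w\plug\de a$ with $\de b$, which lets me prove the closure lemma $(U^{\pol\pol}\odot V)^{\pol\pol}=(U\odot V)^{\pol\pol}$ for nonempty $U,V$ of disjoint carriers: $\de w\in(U\odot V)^{\pol}$ iff $\de w\plug\de v\in U^{\pol}$ for every $\de v\in V$, whence for $\de u'\in U^{\pol\pol}$ one gets $\de w\plug\de v\poll\de u'$, i.e. $\de w\poll\de u'\otimes\de v$; this gives $(U\odot V)^{\pol}\subseteq(U^{\pol\pol}\odot V)^{\pol}$, the reverse inclusion being monotonicity. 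Applying the lemma and its symmetric form, with project-level associativity of $\odot$, both $(\cond A\otimes\cond B)\otimes\cond C$ and $\cond A\otimes(\cond B\otimes\cond C)$ reduce to $(\cond A\odot\cond B\odot\cond C)^{\pol\pol}$. The delicate points will be the carrier bookkeeping in the cuts and the handling of the value $\infty$ in the adjunction identity; everything else is routine.
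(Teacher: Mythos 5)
Your proof is correct, and a point of context is worth making explicit: the paper states this proposition \emph{without any proof at all} (the text moves directly on to the definition of linear implication), so you are supplying an argument the author treats as routine. Your project-level observations (wagers add because $\scalar{A,B}=0$ on disjoint carriers; commutativity and associativity of $\cup$ on vertex sets and of $\disjun$ on edge sets up to the canonical coproduct identifications, which the paper itself silently uses) and your computation $\{(0,(\emptyset,\emptyset))\}^{\pol}=\{(b,(\emptyset,\emptyset))~|~b>0\}$, whence the biorthogonal is $\{(a,(\emptyset,\emptyset))~|~a\geqslant 0\}$, are exactly right --- the key point being that wagers lie in $\mathbb{R}_{\geqslant 0}$, so the value $\infty$ never arises on the empty carrier, and orthogonality degenerates to $a+b\neq 0$. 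The neutrality argument via $\sca{(a+c,A)}{b}=\sca{a}{b}+c\in(0,\infty)$ is likewise sound.

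The genuinely valuable contribution is your closure lemma $(U^{\pol\pol}\odot V)^{\pol\pol}=(U\odot V)^{\pol\pol}$, proved from the three-term adjunction of Theorem \ref{adjunctweight} in the same spirit as the paper's proof of Theorem \ref{duality}. This is precisely the ingredient the paper leaves implicit: in the categorical section, the associativity of $\bar{\otimes}$ is reduced to viewing $(\cond{A}\bar{\otimes}\cond{B})\bar{\otimes}\cond{C}$ and $\cond{A}\bar{\otimes}(\cond{B}\bar{\otimes}\cond{C})$ as tensor products of three delocated conducts, which presupposes exactly the collapse of iterated biorthogonals that your lemma establishes. The two delicate points you flagged do require the care you gave them, and your resolutions hold up: in the forward direction of the equivalence ``$\de{w}\in(U\odot V)^{\pol}$ iff $\de{w}\plug\de{v}\in U^{\pol}$ for all $\de{v}\in V$'' the definedness of the cut follows from picking \emph{some} $\de{u}\in U$ (this is where nonemptiness of $U$ is used, and conducts are nonempty by definition) and noting that all summands in $\sca{w}{u\otimes v}$ are nonnegative, so finiteness of the left-hand side forces finiteness of the internal term and hence definedness of $\de{w}\plug\de{v}$; and the carrier bookkeeping works out because $V_{W}\Delta V_{V}=(V_{U}\cup V_{V})\Delta V_{V}=V_{U}$ by disjointness, so $\de{w}\plug\de{v}$ is indeed a project of the carrier on which $U^{\pol}$ lives. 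I see no gap.
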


\begin{definition}[Linear Implication]
Let $\cond{A,B}$ be conducts of disjoint carriers.
\begin{equation}
\cond{A\multimap B}=\{\de{f}~|~\forall \de{a}\in \cond{A}, \de{f}\plug\de{a}\downarrow\in\cond{B}\}
\end{equation}
where the arrow means that $\de{f}\plug\de{a}$ is defined.
\end{definition}

The fact that this defines a conduct is justified by the following proposition.

\begin{theorem}[Duality]\label{duality}
We have that:
\begin{equation}
\cond{A\multimap B}=(\cond{A}\otimes \cond{B}^{\pol})^{\pol}\nonumber
\end{equation}
\end{theorem}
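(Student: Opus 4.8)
The plan is to reduce the whole statement to a single numerical \emph{currying identity} for projects, and then to unfold both sides of the claimed equality into families of orthogonality conditions that this identity matches term by term.

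First I would establish, for projects $\de{f}=(f,F)$, $\de{a}=(a,A)$, $\de{b}=(b,B)$ with $V_{A}\cap V_{B}=\emptyset$ and $V_{A}\cup V_{B}=V_{F}$, the identity
\[
\sca{f}{a\otimes b}=\sca{f\plug a}{b}
\]
as an equality in $\mathbb{R}_{\geqslant 0}\cup\{\infty\}$. This follows by a direct expansion: since $\de{a}\otimes\de{b}=(a+b,A\cup B)$, the left-hand side equals $f+(a+b)+\scalar{F,A\cup B}$; applying Theorem \ref{adjunctweight} to rewrite $\scalar{F,A\cup B}=\scalar{F,A}+\scalar{F\plug A,B}$ and regrouping gives $(f+a+\scalar{F,A})+b+\scalar{F\plug A,B}=\sca{f}{a}+b+\scalar{F\plug A,B}$, which is exactly $\sca{f\plug a}{b}$ because $\de{f}\plug\de{a}=(\sca{f}{a},F\plug A)$. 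The hypotheses $V_{A}\cap V_{B}=\emptyset$ and $V_{A}\cup V_{B}\subseteq V_{F}$ of Theorem \ref{adjunctweight} hold, since a routine carrier computation ($V_{F}\Delta V_{A}=V_{B}$ forces $V_{F}=V_{A}\cup V_{B}$) shows that every $\de{f}$ under consideration has carrier $V_{A}\cup V_{B}$. Note the identity is purely numerical and holds irrespective of whether the cut $\de{f}\plug\de{a}$ is defined.

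Next I would unfold both members. On the right, using the standard orthogonality identity $X^{\pol\pol\pol}=X^{\pol}$ together with $\cond{A}\otimes\cond{B}^{\pol}=(\cond{A}\odot\cond{B}^{\pol})^{\pol\pol}$, I obtain $(\cond{A}\otimes\cond{B}^{\pol})^{\pol}=(\cond{A}\odot\cond{B}^{\pol})^{\pol}$, so that $\de{f}$ belongs to the right-hand side iff $\sca{f}{a\otimes b}\notin\{0,\infty\}$ for all $\de{a}\in\cond{A}$ and $\de{b}\in\cond{B}^{\pol}$. On the left, since $\cond{B}=\cond{B}^{\pol\pol}$, a \emph{defined} cut $\de{f}\plug\de{a}$ lies in $\cond{B}$ iff $\sca{f\plug a}{b}\notin\{0,\infty\}$ for every $\de{b}\in\cond{B}^{\pol}$. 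The currying identity identifies these two families of conditions, so the two inclusions follow once definedness is under control.

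The one genuine subtlety — and the main obstacle — is precisely the definedness clause $\de{f}\plug\de{a}\downarrow$ (that is, $\sca{f}{a}\neq\infty$) built into the definition of $\cond{A\multimap B}$, which has no visible counterpart on the right-hand side. I would dispose of it by exhibiting a distinguished element of $\cond{B}^{\pol}$: the project $\de{e}=(1,(V_{B},\emptyset))$ whose graph is edgeless. For any project $\de{c}=(c,C)$ of carrier $V_{B}$, no alternating circuit can use an edge of the edgeless graph, so $\cycl{(V_{B},\emptyset),C}=\emptyset$ and $\scalar{(V_{B},\emptyset),C}=0$; hence $\sca{e}{c}=1+c\notin\{0,\infty\}$, proving $\de{e}\in\cond{B}^{\pol}$ and in particular $\cond{B}^{\pol}\neq\emptyset$. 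Applying the currying identity with $\de{b}=\de{e}$ gives $\sca{f}{a\otimes e}=\sca{f}{a}+1$, so membership of $\de{f}$ in the right-hand side forces $\sca{f}{a\otimes e}\neq\infty$ and therefore $\sca{f}{a}\neq\infty$: the cut is automatically defined. With definedness secured, the inclusion $\cond{A\multimap B}\subseteq(\cond{A}\otimes\cond{B}^{\pol})^{\pol}$ follows by reading the identity from left to right, and the reverse inclusion follows by first deducing definedness as above, then using $\sca{f\plug a}{b}=\sca{f}{a\otimes b}\notin\{0,\infty\}$ for all $\de{b}\in\cond{B}^{\pol}$ to conclude $\de{f}\plug\de{a}\in\cond{B}^{\pol\pol}=\cond{B}$ (the carrier of $\de{f}\plug\de{a}$ being $V_{F}\Delta V_{A}=V_{B}$), whence $\de{f}\in\cond{A\multimap B}$.
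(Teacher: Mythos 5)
Your proof is correct and follows essentially the same route as the paper: the adjunction (Theorem \ref{adjunctweight}) yields the currying identity $\sca{f}{a\otimes b}=\sca{f\plug a}{b}$, and both memberships are then unfolded through biorthogonality using $X^{\pol\pol\pol}=X^{\pol}$. The only difference is that you explicitly discharge the definedness clause $\de{f}\plug\de{a}\downarrow$ via the edgeless witness $(1,(V_{B},\emptyset))\in\cond{B}^{\pol}$ — a genuine subtlety that the paper's own proof passes over in silence, so this addition strengthens rather than departs from the original argument.
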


\begin{proof}
Let $V_{A}$ and $V_{B}$ be the disjoint carriers of $\cond{A}$ and $\cond{B}$, let $\de{f}$ be a project of carrier $V_{A}\cup V_{B}$, and let $\de{a,b}$ be projects in $\cond{A}$ and $\cond{B}^{\pol}$ respectively. From the adjunction (Theorem \ref{adjunctweight}, see also the following remark) we have the equivalence between $\de{f}\simperp\de{a}\otimes\de{b}$ and $\de{f}\plug\de{a}\simperp\de{b}$. We thus get that $\de{f}\in\cond{A\multimap B}$ if and only if $\de{a}\otimes\de{b}\simperp\de{f}$ — which means that $\de{f}\in\cond{(A\odot B^{\pol})}^{\pol}=\cond{(A\otimes B^{\pol})}^{\pol}$.
\end{proof}

\begin{remark}
The adjunction implies that $\de{f}\simperp\de{a}\otimes\de{b}$ is equivalent to $\de{f}\plug\de{a}\simperp\de{b}$, but it moreover tells us the interaction is exactly the same. Indeed, if either $\de{f}\simperp\de{a}\otimes\de{b}$ or $\de{f}\plug\de{a}\simperp\de{b}$, we have:
\begin{eqnarray*}
\sca{f}{a\otimes b}&=&a+b+f+\scalar{F,A\cup B}\\
&=&a+b+f+\scalar{F,A}+\scalar{F\plug A,B}\\
&=&\sca{f \plug a}{b}
\end{eqnarray*}
By the way, we can see in this computation how the wager compensates for the additional term in the adjunction. Indeed, the wager can be seen as a residue $\scalar{F,A}$ of the composition of graphs, a residue of the internal cycles (as in Figure \ref{internal}) that may appear when plugging $F$ and $A$.
\end{remark}

\begin{proposition}[Mix Rule]\label{mixrule}
Let $\de{a}\poll\de{b}$ and $\de{c}\poll\de{d}$ be projects such that the carrier of $\de{a}$ (and therefore of $\de{b}$) is disjoint from the carrier of $\de{c}$ and $\de{d}$. Then $\de{a}\otimes\de{c}\poll\de{b}\otimes\de{d}$. As a consequence, we have $\cond{A\otimes B}\subset \cond{A\parr B}=\cond{A}^{\pol}\multimap\cond{B}$ for any conducts $\cond{A,B}$ of disjoint carriers.
\end{proposition}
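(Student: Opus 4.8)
The plan is to reduce the Mix Rule to the additivity of the interaction $\scalar{\cdot,\cdot}$ under disjoint carriers, and then to derive $\cond{A\otimes B}\subseteq\cond{A\parr B}$ from it by manipulating biorthogonals. Write $\de{a}=(a,A)$, $\de{b}=(b,B)$ on the common carrier $V_{A}=V_{B}$ and $\de{c}=(c,C)$, $\de{d}=(d,D)$ on the common carrier $V_{C}=V_{D}$, with $V_{A}\cap V_{C}=\emptyset$. Because the carriers are disjoint the tensors are $\de{a}\otimes\de{c}=(a+c,A\cup C)$ and $\de{b}\otimes\de{d}=(b+d,B\cup D)$, both carried by $V_{A}\cup V_{C}$, so that $\sca{a\otimes c}{b\otimes d}=(a+c)+(b+d)+\scalar{A\cup C,B\cup D}$. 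First I would establish
\begin{equation}
\scalar{A\cup C,B\cup D}=\scalar{A,B}+\scalar{C,D}.\nonumber
\end{equation}
The point is that every edge of $A$ or $B$ joins two vertices of $V_{A}$, while every edge of $C$ or $D$ joins two vertices of $V_{C}$; since $V_{A}\cap V_{C}=\emptyset$, the graph $(A\cup C)\bicol(B\cup D)$ has no edge between $V_{A}$ and $V_{C}$, so any alternating cycle --- being a connected run of edges sharing endpoints --- lies entirely in $V_{A}$ or entirely in $V_{C}$. A cycle confined to $V_{A}$ uses only $0$-colored edges of $A$ and $1$-colored edges of $B$, hence is precisely an alternating cycle of $A\bicol B$, and symmetrically for $V_{C}$. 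This yields a weight-preserving identification $\cycl{A\cup C,B\cup D}=\cycl{A,B}\disjun\cycl{C,D}$, and summing $-\log(1-\omega(\pi))$ over the two disjoint families gives the displayed equality. Substituting, $\sca{a\otimes c}{b\otimes d}=\sca{a}{b}+\sca{c}{d}$; as orthogonality puts both summands in $(0,\infty)$, so does their sum, whence $\de{a}\otimes\de{c}\poll\de{b}\otimes\de{d}$.

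For the consequence I would first rewrite the right-hand side. Applying the Duality theorem (Theorem \ref{duality}) to $\cond{A}^{\pol}\multimap\cond{B}$ and then using the standard identity $X^{\pol\pol\pol}=X^{\pol}$, we obtain $\cond{A\parr B}=\cond{A}^{\pol}\multimap\cond{B}=(\cond{A}^{\pol}\otimes\cond{B}^{\pol})^{\pol}=(\cond{A}^{\pol}\odot\cond{B}^{\pol})^{\pol}$, a set that coincides with its own biorthogonal since it is an orthogonal. Now take any generator $\de{a}\otimes\de{b}$ of $\cond{A\odot B}$, with $\de{a}\in\cond{A}$, $\de{b}\in\cond{B}$, and any generator $\de{a}'\otimes\de{b}'$ of $\cond{A}^{\pol}\odot\cond{B}^{\pol}$, with $\de{a}'\in\cond{A}^{\pol}$, $\de{b}'\in\cond{B}^{\pol}$. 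Then $\de{a}\poll\de{a}'$ and $\de{b}\poll\de{b}'$, the two pairs living on disjoint carriers, so the Mix Rule just proved gives $\de{a}\otimes\de{b}\poll\de{a}'\otimes\de{b}'$. As this holds for every such generator, $\cond{A\odot B}\subseteq(\cond{A}^{\pol}\odot\cond{B}^{\pol})^{\pol}=\cond{A\parr B}$. Applying the monotone operation $(\cdot)^{\pol\pol}$ and using that $\cond{A\parr B}$ is biorthogonally closed then yields $\cond{A\otimes B}=\cond{A\odot B}^{\pol\pol}\subseteq\cond{A\parr B}$, as desired.

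The only genuinely technical point is the additivity $\scalar{A\cup C,B\cup D}=\scalar{A,B}+\scalar{C,D}$, and even this reduces to the single observation that carrier-disjointness prevents any alternating cycle from passing between $V_{A}$ and $V_{C}$; the remainder is bookkeeping with the closure properties of orthogonality together with one application of Duality.
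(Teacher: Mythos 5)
Your proof is correct and follows essentially the same route as the paper's: the paper simply declares the identity $\sca{a\otimes c}{b\otimes d}=\sca{a}{b}+\sca{c}{d}$ to be immediate (you supply the correct cycle-splitting justification, that disjointness of $V_{A}$ and $V_{C}$ forces every alternating $1$-circuit to live entirely on one side), and it likewise derives $\cond{A\odot B}\subseteq(\cond{A}^{\pol}\otimes\cond{B}^{\pol})^{\pol}=\cond{A}^{\pol}\multimap\cond{B}$ via Theorem \ref{duality} and then closes under biorthogonal, a step you make explicit. There are no gaps; you have merely filled in details the paper leaves implicit.
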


\begin{proof}
It is immediate that $\sca{a\otimes c}{b\otimes d}=\sca{a}{b}+\sca{c}{d}$. Since both summands are non-zero positive reals, their sum is a non-zero positive real, hence $\de{a\otimes c}\poll\de{b\otimes d}$. Now, let $\de{a}$ and $\de{b}$ be two projects in conducts $\cond{A}$ and $\cond{B}$ of disjoint carriers, we just showed that $\de{a}\otimes\de{b}\in (\cond{A}^{\pol}\otimes\cond{B}^{\pol})^{\pol}$. By Proposition \ref{duality}, we have that $(\cond{A}^{\pol}\otimes\cond{B}^{\pol})^{\pol}=\cond{A}^{\pol}\multimap\cond{B}$, hence $\cond{A\otimes B}\subset\cond{A^{\pol}\multimap B}$.
\end{proof}

Eventually, we define an important object that will be used in the next section.
\begin{definition}[Delocations]\label{deloc}
Let $\de{a}$ be a project of carrier $V_{A}$, $V_{B}$ a set such that $V_{A}\cap V_{B}=\emptyset$, and $\phi: V_{A} \rightarrow V_{B}$ a bijection. We define the \emph{delocation} of $\de{a}=(a,A)$ as $\phi(\de{a})=(a,\phi(A))$, where $\phi(A)$ is exactly the same graph as $A$ on the set of vertices $V_{B}$.
\end{definition}

\begin{remark}
For the sake of simplicity, we will use abusively — mainly in the next section — the notation $\phi(\de{a})$ even when the bijection $\phi$ does not satisfy $dom(\phi)\cap codom(\phi)=\emptyset$ (hence we do not necessarily have $V_{A}\cap\phi(V_{A})=\emptyset$). However, this amounts to nothing more than a simplification: if we define 
\begin{equation}
\begin{array}{lrclcrcl}
\phi': & dom(\phi)\times\{0\} & \rightarrow & codom(\phi)\times\{1\},&~~~ &(x,0) & \mapsto & (\phi(x),1)\\
\iota: & dom(\phi)  & \rightarrow & dom(\phi)\times\{0\},&& x & \mapsto & (x,0)\\
\zeta: & codom(\phi)\times\{1\} & \rightarrow & codom(\phi),&& (x,1) & \mapsto & x\end{array}\nonumber
\end{equation}
 then what we abusively denote by $\phi(\de{a})$ is correctly defined through delocations by $\zeta(\phi'(\iota(\de{a})))$.
\end{remark}

\begin{proposition}
Keeping the notations of Definition \ref{deloc}, we define the project $\de{Fax}_{\phi}=(0,\Phi)$ with
\begin{eqnarray*}
E_{\Phi}&=&\{(a,\phi(a))~|~ a\in V_{A}\}\cup\{(\phi(a),a)~|~ a \in V_{A}\}\\
\Phi&=&(V_{A}\cup V_{B},E_{\Phi},\omega_{\Phi}(e)=1)
\end{eqnarray*}
Then $\de{Fax}_{\phi}\in\cond{A}\multimap\phi(\cond{A})$.
\end{proposition}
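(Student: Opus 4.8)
The plan is to unfold the definition of linear implication and verify the two things it requires. Concretely, I would fix an arbitrary project $\de{c}=(c,C)\in\cond{A}$, so that $C$ is a weighted graph on the carrier $V_{A}$ of $\cond{A}$ (which is exactly $dom(\phi)$), and show that the cut $\de{Fax}_{\phi}\plug\de{c}$ is defined and equals the delocation $\phi(\de{c})$, which by definition belongs to $\phi(\cond{A})$. Since $\de{c}$ is arbitrary, this yields $\de{Fax}_{\phi}\in\cond{A}\multimap\phi(\cond{A})$. The whole argument thus reduces to the slogan that the $Fax$ behaves as an identity up to delocation, i.e. $\de{Fax}_{\phi}\plug\de{c}=\phi(\de{c})$.

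The first step is to compute the underlying graph $\Phi\plug C$. Its carrier is $V_{\Phi}\Delta V_{C}=(V_{A}\cup V_{B})\Delta V_{A}=V_{B}$, using $V_{A}\cap V_{B}=\emptyset$, which already matches the carrier of $\phi(\cond{A})$. To identify the edges I would analyse the alternating paths of $\Phi\bicol C$ with source and target in $V_{B}$. The key structural facts are that every edge of $C$ has both endpoints in $V_{A}$, whereas the edges of $\Phi$ only toggle between $a$ and $\phi(a)$ (all of weight $1$). Starting in $V_{B}$, an alternating path cannot begin with a $C$-edge (none leaves $V_{B}$), so it must begin with the $\Phi$-edge $\phi(a)\to a$; alternation then forces a $C$-edge $a\to b$, and then the $\Phi$-edge $b\to\phi(b)$, after which the path is stuck in $V_{B}$. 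Hence the alternating paths from $V_{B}$ to $V_{B}$ are exactly the length-three paths $\phi(a)\to a\to b\to\phi(b)$ indexed by the $C$-edges $a\to b$, and each has weight $1\cdot\omega_{C}(a\to b)\cdot 1=\omega_{C}(a\to b)$. This exhibits a weight-preserving bijection between the edges of $\Phi\plug C$ and those of $C$ relocated along $\phi$, i.e. $\Phi\plug C=\phi(C)$.

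The second step is to control the wager, for which I must show $\scalar{\Phi,C}=0$, i.e. that $\cycl{\Phi,C}=\emptyset$. The same toggling analysis does this: any alternating cycle would have to alternate $\Phi$- and $C$-edges, but a $C$-edge always lands in $V_{A}$ and the following $\Phi$-edge then lands in $V_{B}$, from which no $C$-edge emanates, so no alternating path can close up into a cycle. With no alternating $1$-circuits, $\scalar{\Phi,C}=0$, and therefore the measurement of the cut is $0+c+\scalar{\Phi,C}=c<\infty$; in particular the cut is defined. Combining the two steps gives $\de{Fax}_{\phi}\plug\de{c}=(c,\phi(C))=\phi(\de{c})\in\phi(\cond{A})$, as required. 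I expect the main obstacle to be purely the bookkeeping of this path analysis: carefully enumerating the alternating paths and cycles of $\Phi\bicol C$ and checking, using the disjointness $V_{A}\cap V_{B}=\emptyset$ together with the fact that $\Phi$ forces the back-and-forth $a\leftrightarrow\phi(a)$, that the $Fax$ neither creates internal cycles (so the wager stays $0$) nor alters the weights.
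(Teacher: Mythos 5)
Your proof is correct and is essentially the intended argument: the paper states this proposition without proof, and your computation that $\de{Fax}_{\phi}\plug\de{c}=\phi(\de{c})$ — via the enumeration of the length-three alternating paths $\phi(a)\to a\to b\to\phi(b)$ and the observation that $\cycl{\Phi,C}=\emptyset$, so the wager stays $c$ and the cut is defined — is exactly the ``simple computation'' the paper later relies on when using $\de{Fax}_{1_{S}}$ as the identity morphism in Proposition \ref{category}. The path and cycle bookkeeping you flag as the main obstacle is handled correctly, including the fact that no $C$-edge leaves $V_{B}$ and that alternation forbids closing any cycle.
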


\section{Denotational Semantics}\label{denot}

We will now prove that our geometry of interaction yields a denotational model of MLL by showing that we can define a $\ast$-autonomous category from it. This $\ast$-autonomous category has the interesting peculiarity of interpreting the multiplicative units by different objects, contrarily to many of the known categorical models of MLL, such as the relational model or the coherence spaces of Girard. Most of this section consists in proving that our category has the required properties, but these technicalities hide the principal interest of explicitly defining the category. We want to stress here the differences between geometry of interaction and denotational semantics. In particular, even if the objects of the category still have a location, we are not working in a \emph{locative} framework since all our definitions (morphisms, composition, functors) are given on delocations of the objects.

The difference comes from the fact that geometry of interaction is a semantics of processes, of actions. This is why there are no elements in $\cond{A\multimap A}$: a process that for all $a$ yields $a$ is not a process, performs no action. The objects that are closest to the identity are the delocations, i.e. a function that makes a copy of $a$ in another location. When defining the category we have to consider delocations as identity maps in order to have some identity morphisms. It is this “quotient" by delocations that implies the loss of locativity.

Before defining the category, we define two functions $\mathbb{N}\rightarrow \mathbb{N}\times\{0,1\}$
\begin{equation}
\left.\begin{array}{lrcl} 
\psi_{0}: & x & \mapsto & (x,0)\\
\psi_{1}: & x & \mapsto & (x,1)
\end{array}\right.
\nonumber
\end{equation}

\begin{definition}[Objects and morphisms of \catmll]
We define the following category:
\begin{equation}
\left.\begin{array}{l}\mathfrak{Obj}=\{\cond{A}~|~\cond{A}=\cond{A}^{\pol\pol}\text{ of carrier }X_{\cond{A}}\subset\mathbb{N}\}\\
\mathfrak{Mor}[\cond{A},\cond{B}]=\{\de{f}\in \psi_{0}(\cond{A})\multimap\psi_{1}(\cond{B})\}\end{array}\right.\nonumber
\end{equation}
\end{definition}

To define the composition of morphisms, we will use in fact three copies of $\mathbb{N}$. We thus define the following useful bijections:
\begin{center}
$\left\{\begin{array}{lrcl}\mu: \!\!\!& \mathbb{N}\times\{0,1\} & \rightarrow & \mathbb{N}\times\{1,2\}\\
                                                 &  (x,i) & \mapsto & (x,i+1)\end{array}\right.$
\hfill
$\left\{\begin{array}{lrcl}\nu: \!\!\!& \mathbb{N}\times\{0,2\} & \rightarrow & \mathbb{N}\times\{0,1\}\\
                                                 &  (x,0) & \mapsto & (x,0)\\
                                                 &  (x,2) & \mapsto & (x,1)\end{array}\right.$
\end{center}

\begin{definition}[Composition in \catmll]
Given two morphisms $\de{f}$ and $\de{g}$ in $\mathfrak{Mor}[\cond{A},\cond{B}]$ and $\mathfrak{Mor}[\cond{B},\cond{C}]$ respectively, we define 
\begin{equation}
\de{g}\circ\de{f}=\nu(\de{f}\plug\mu(\de{g}))\nonumber
\end{equation}
\end{definition}

\begin{proposition}[\catmll is a Category]\label{category}
The sets of objects and morphisms we just defined, together with the composition induced by the reduction of graphs, is a category.
\end{proposition}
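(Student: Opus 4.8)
The plan is to verify the four category axioms in turn: that composition is well-typed (it lands in the correct Hom-conduct and is a genuine project), that each object carries an identity, that the identities are neutral, and that composition is associative. Throughout I use that the reduction $G\plug H$ and the measurement $\scalar{G,H}$ depend only on the combinatorial configuration and are therefore unchanged by any delocation (mere renaming of vertices); in particular $\mu,\nu,\psi_0,\psi_1$ send conducts to conducts and commute with $\multimap$. I also record the carrier arithmetic: writing $X_A,X_B,X_C$ for the carriers, $\de f\in\mathfrak{Mor}[\cond A,\cond B]$ lives on $(X_A\times\{0\})\cup(X_B\times\{1\})$ and $\mu(\de g)$ lives on $(X_B\times\{1\})\cup(X_C\times\{2\})$, so the cut $\de f\plug\mu(\de g)$ eliminates exactly the shared locus $X_B\times\{1\}$ and sits on $(X_A\times\{0\})\cup(X_C\times\{2\})$, which $\nu$ relocates to $(X_A\times\{0\})\cup(X_C\times\{1\})$; thus $\de g\circ\de f$ has the carrier demanded of a morphism $\cond A\to\cond C$.

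For well-typedness I would show $\de g\circ\de f\in\psi_0(\cond A)\multimap\psi_1(\cond C)$ directly from the definition of $\multimap$: fix $\de a\in\psi_0(\cond A)$ and compute $(\de g\circ\de f)\plug\de a$. Since $\nu$ fixes level $0$ and $\de a$ is located there, the outer $\nu$ commutes with this plugging; and because the triple intersection $V(\de a)\cap V(\de f)\cap V(\mu(\de g))$ is empty (the locus $X_A\times\{0\}$ is disjoint from $V(\mu(\de g))$), Proposition~\ref{assocmll} together with the commutativity of $\plug$ gives $(\de f\plug\mu(\de g))\plug\de a=(\de f\plug\de a)\plug\mu(\de g)$. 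Now $\de f\plug\de a\in\psi_1(\cond B)$ because $\de f$ is a morphism, and $\mu$ carries $\psi_0(\cond B)\multimap\psi_1(\cond C)$ to $\psi_1(\cond B)\multimap(\mu\circ\psi_1)(\cond C)$, so $(\de f\plug\de a)\plug\mu(\de g)\in(\mu\circ\psi_1)(\cond C)$; applying $\nu$ and using $\nu\circ\mu\circ\psi_1=\psi_1$ on $\cond C$ lands it in $\psi_1(\cond C)$, as required. The one extra point is that the cut defining the composite must itself be defined, i.e. $\scalar{F,\mu(G)}\neq\infty$; by the alternation condition all the contributing internal $1$-circuits lie entirely on the cut locus $X_B\times\{1\}$, and finiteness is read off from the morphism hypotheses.

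The identity on $\cond A$ is the fax $\de{Id}_{\cond A}=\de{Fax}_\phi$ for the bijection $\phi\colon (x,0)\mapsto(x,1)$ from $\psi_0(X_A)$ to $\psi_1(X_A)$; by the preceding proposition on faxes it lies in $\psi_0(\cond A)\multimap\psi_1(\cond A)=\mathfrak{Mor}[\cond A,\cond A]$. Neutrality rests on a fax-absorption computation: for any project $\de x$ meeting $\Phi$ exactly along one end of the fax, $\de{Fax}_\phi\plug\de x$ is just $\de x$ with that end relocated along $\phi$, and with the wager unchanged. The wager is preserved because there are no alternating $1$-circuits in $\Phi\bicol X$: such a circuit would lie on the shared locus, but every fax edge incident to a shared vertex leaves that locus, so two fax edges can never be joined by a single $X$-edge inside it, whence $\scalar{\Phi,X}=0$. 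Feeding this into $\de f\circ\de{Id}_{\cond A}=\nu(\de{Id}_{\cond A}\plug\mu(\de f))$ and $\de{Id}_{\cond B}\circ\de f=\nu(\de f\plug\mu(\de{Id}_{\cond B}))$, the level shift $\mu$, the fax-relocation, and $\nu$ compose to the trivial renaming, returning $\de f$ on the nose.

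Finally, associativity $(\de h\circ\de g)\circ\de f=\de h\circ(\de g\circ\de f)$ is where the real work lies, and I expect it to be the main obstacle. I would reduce both sides to a single simultaneous reduction of suitably delocated copies of $\de f,\de g,\de h$ placed on pairwise-overlapping but triple-disjoint loci, and then invoke Proposition~\ref{assocmll}. Concretely one proves an auxiliary lemma that $\plug$ commutes with any delocation fixing the cut locus, uses it to push the inner $\nu$'s and $\mu$'s outward, and checks at each cut that the triple intersection of carriers is empty (it is, since the two cuts fall on the disjoint copies of $X_B$ and $X_C$ sitting at different levels). The genuine difficulty is purely bookkeeping: the two association orders introduce the copies of $\cond B$ and $\cond C$ at different levels through nested $\mu$'s, so one must exhibit the explicit delocations reconciling the two simultaneous reductions and verify that they act as the identity modulo the level-relabelling induced by $\nu\circ\mu$; once the loci are aligned, Proposition~\ref{assocmll} closes the equality.
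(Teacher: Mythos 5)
Your proof is correct and follows essentially the same route as the paper's, which is a two-line argument: the identity morphism is the fax $\de{Fax}_{1_{S}}$, and associativity is obtained, after the delocation bookkeeping you describe, from the simultaneous-reduction argument of Proposition~\ref{assocmll}. You are in fact more thorough than the paper, which silently omits both the well-typedness of $\de{g}\circ\de{f}$ (your adjunction argument) and the definedness of the cut $\de{f}\plug\mu(\de{g})$; your remaining elision --- that Proposition~\ref{assocmll} only gives the graph component, the wager component of associativity requiring the three-term adjunction as made explicit later in Lemma~\ref{lemcutelim} --- is exactly the elision the paper's own proof makes.
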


\begin{proof}~\\ We first show that there exists an identity morphism for every object in \catmll, and that it is the neutral for the composition.
\begin{itemize}
\item \textbf{Unit} For $S\subset\mathbb{N}$, define the bijection 
\begin{equation}
1_{S}: \left\{\begin{array}{rcl}S\times\{0\}&\rightarrow& S\times\{1\}\\ (x,0)&\mapsto&(x,1)\end{array}\right.\nonumber
\end{equation}
Then $\de{Fax}_{1_{S}}$ is the identity morphism for all objects of carrier $S\subset\mathbb{N}$.

A simple computation shows that the required diagram commutes.

\item \textbf{Associativity} The fact that the composition is associative follows directly from Proposition \ref{assocmll}.
\end{itemize}
\end{proof}

It is well-known that a $*$-autonomous category yields a model of MLL \cite{seely89}. We shall now build a $*$-autonomous structure on \catmll. We begin by defining a monoidal functor $\bar{\otimes}$ and show that we have a symmetric monoidal closed category. Then we will show that the object $\bot=1^{\pol}$ is dualizing, meaning the category is $*$-autonomous.

\begin{definition}
A monoidal category is a category $\mathbb{K}$ with a bifunctor $\otimes: \mathbb{K}\times\mathbb{K}\rightarrow\mathbb{K}$, a (left and right) unit $1\in \mathfrak{Obj}_{\mathbb{K}}$, satisfying $(A\otimes B)\otimes C\cong A\otimes (B\otimes C)$. In addition, some diagrams concerning associativity and the unit must commute (we refer to Mac Lane \cite{mclane} for a complete definition).\\
It is said to be symmetric when we have $A\otimes B\cong B\otimes A$, and closed when we can associate to each set of morphisms $\mathfrak{Mor}_{\mathbb{K}}[A,B]$ an object $A\rightarrow B\in\mathfrak{Obj}_{\mathbb{K}}$ such that $\mathfrak{Mor}_{\mathbb{K}}[A\otimes B,C]$ is naturally isomorphic to $\mathfrak{Mor}_{\mathbb{K}}[A,B\rightarrow C]$.
\end{definition}

In order to define the bifunctor, we will use the functions $\phi: \mathbb{N}\times\{0,1\}\rightarrow \mathbb{N}$ defined by $\phi((x,i))=2x+i$ and $\tau$
\begin{equation}
\tau: \left\{\begin{array}{rcl}
\mathbb{N}\times\{0,1\}&\rightarrow&\mathbb{N}\times\{0,1\}\\
(2x+1,0)&\mapsto& (2x,1)\\
(2x,1)&\mapsto&(2x+1,0)\\
(x,i)&\mapsto&(x,i)\text{ otherwise}\end{array}\right.\nonumber
\end{equation}
\begin{proposition}
The category $($\catmll$,\bar{\otimes},1)$ is a symmetric monoidal closed category, where the bifunctor $\bar{\otimes}$ is induced by the tensor product defined on objects by
\begin{equation}
\cond{A}\bar{\otimes}\cond{B}=\phi(\psi_{0}(\cond{A})\otimes\psi_{1}(\cond{B}))\nonumber
\end{equation}
and on morphisms as
\begin{equation}
\de{f}\bar{\otimes} \de{g}=\tau(\psi_{0}(\phi(\de{f}))\otimes\psi_{1}(\phi(\de{g})))\nonumber
\end{equation}
and the unit is the conduct $1=\{(0,(\emptyset,\emptyset))\}^{\pol\pol}$,
\end{proposition}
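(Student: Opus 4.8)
The plan is to verify in turn the four ingredients bundled into the statement: that $\bar{\otimes}$ is a bifunctor on \catmll, that it carries an associator and left/right unitors making it monoidal, that it is symmetric, and that it is closed with internal hom given by $\multimap$. Throughout, the guiding principle is that every structural map will be realized by a delocation, i.e.\ a $\de{Fax}$ project induced by a bijection of carriers, so that coherence and naturality questions reduce to equalities between the underlying bijections and to routine reductions of a $\de{Fax}$ against a graph (which merely rename vertices).

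First I would check that $\bar{\otimes}$ is well defined and functorial. On objects, $\cond{A}\bar{\otimes}\cond{B}=\phi(\psi_{0}(\cond{A})\otimes\psi_{1}(\cond{B}))$ is a conduct with carrier in $\mathbb{N}$: the delocations $\psi_{0},\psi_{1}$ send the carriers into the disjoint sets $\mathbb{N}\times\{0\}$ and $\mathbb{N}\times\{1\}$, so the tensor of conducts is defined, and $\phi:(x,i)\mapsto 2x+i$ relocates the result back into $\mathbb{N}$. On morphisms I would show that for $\de{f}\in\mathfrak{Mor}[\cond{A},\cond{B}]$ and $\de{g}\in\mathfrak{Mor}[\cond{C},\cond{D}]$ the project $\de{f}\bar{\otimes}\de{g}$ lies in $\mathfrak{Mor}[\cond{A}\bar{\otimes}\cond{C},\cond{B}\bar{\otimes}\cond{D}]$; the substantive content here is a factorization lemma stating that, over disjoint carriers, reduction of a union factorizes, $(\de{f}\otimes\de{g})\plug(\de{a}\otimes\de{c})=(\de{f}\plug\de{a})\otimes(\de{g}\plug\de{c})$, with the wagers adding up as in Proposition~\ref{mixrule}. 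Preservation of identities is a direct computation on $\de{Fax}$ projects, and preservation of composition (the interchange law) follows from this same factorization together with the associativity of reduction (Proposition~\ref{assocmll}), once the delocations $\mu,\nu,\tau$ are threaded through.

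Next I would produce the associativity isomorphism $(\cond{A}\bar{\otimes}\cond{B})\bar{\otimes}\cond{C}\cong\cond{A}\bar{\otimes}(\cond{B}\bar{\otimes}\cond{C})$ and the unitors $1\bar{\otimes}\cond{A}\cong\cond{A}\cong\cond{A}\bar{\otimes}1$ as the $\de{Fax}$ projects attached to the evident carrier bijections (using the neutrality of the empty graph for the unitors, and the commutativity and associativity of $\otimes$ recorded in the Properties of the Tensor). Symmetry $\cond{A}\bar{\otimes}\cond{B}\cong\cond{B}\bar{\otimes}\cond{A}$ is likewise the $\de{Fax}$ of the swap bijection, essentially the map $\tau$ already introduced. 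Because each of these is a delocation and delocations compose by composing their underlying bijections --- with a $\de{Fax}$ against its inverse $\de{Fax}$ reducing to an identity $\de{Fax}$ --- naturality of each family and the coherence diagrams (pentagon, triangle, hexagon) collapse to equalities of bijections of $\mathbb{N}$, which hold on the nose.

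Finally, for closedness I would take the internal hom of $\cond{B}$ and $\cond{C}$ to be (a delocation of) $\cond{B}\multimap\cond{C}$ and exhibit the natural bijection $\mathfrak{Mor}[\cond{A}\bar{\otimes}\cond{B},\cond{C}]\cong\mathfrak{Mor}[\cond{A},\cond{B}\multimap\cond{C}]$. This is currying: a project $\de{f}$ on carrier $V_{A}\cup V_{B}\cup V_{C}$ can be read either as a map out of the tensor or as a map into the linear implication, and the adjunction (Theorem~\ref{adjunctweight}), in the form of the equivalence $\de{f}\simperp\de{a}\otimes\de{b}\Leftrightarrow\de{f}\plug\de{a}\simperp\de{b}$ together with Theorem~\ref{duality}, makes this assignment a bijection; the remark following Theorem~\ref{duality} shows the interaction values coincide, so no wager is lost. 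Naturality in $\cond{A}$ and $\cond{C}$ again reduces to compatibility of this relocation with composition. The hardest part, I expect, is not any single isomorphism but the bifunctoriality of the second paragraph: proving the factorization of reduction over disjoint carriers and the interchange law while correctly keeping track of the nested delocations $\psi_{0},\psi_{1},\phi,\tau,\mu,\nu$ and of the wagers, since it is precisely the locativity of the framework that makes these bookkeeping steps delicate rather than automatic.
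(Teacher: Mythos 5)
Your proposal follows essentially the same route as the paper: every structural map (the associator via an explicit bijection of $\mathbb{N}$, the unitors via a projection composed with $\phi^{-1}$, the symmetry via the swap $\gamma$) is realized as a delocation, i.e.\ a $\de{Fax}$ induced by a carrier bijection, so that naturality and the coherence diagrams collapse to equalities of bijections, and closure is obtained from Theorem \ref{duality} threaded through the associativity isomorphism exactly as in the paper's chain of hom-set identifications. Your extra second paragraph on bifunctoriality --- the factorization $(\de{f}\otimes\de{g})\plug(\de{a}\otimes\de{c})=(\de{f}\plug\de{a})\otimes(\de{g}\plug\de{c})$ over suitably disjoint carriers, with wagers adding up --- is sound (it is the content of Lemma \ref{lemcutelim} and Corollary \ref{commut}, together with the computation in Proposition \ref{mixrule}) and is in fact more thorough than the paper's proof, which leaves the functoriality of $\bar{\otimes}$ on morphisms implicit.
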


\begin{proof} We have to check first that it is a monoidal category, and then that it is symmetric. We will define the isomorphisms by bijections from $\mathbb{N}$ onto $\mathbb{N}$. Indeed, such a bijection $\alpha$ induces an isomorphism for any $S\subset\mathbb{N}$ by letting
\begin{equation}
\overline{\alpha}_{S}=(0,A_{S})\nonumber
\end{equation}
with $A_{S}$ the weighted graph
\begin{eqnarray*}
V_{A_{S}}&=&(S\times\{0\})\cup(\alpha(S)\times\{1\})\\
E_{A_{S}}&=&\{((x,0),(\alpha(x),1))\}\cup\{((\alpha(x),1),(x,0))\}
\end{eqnarray*}
where all edges are of weight $1$.

\begin{itemize}
\item \textbf{Associativity} Let $\cond{A},\cond{B},\cond{C}$ be three objects of \catmll. For any conducts on disjoint carriers, and any delocation $\theta$, we have $\theta(\cond{A}\otimes\cond{B})=\theta(\cond{A})\otimes\theta(\cond{B})$ because the conducts have a disjoint carrier. We can therefore see $\cond{A}\bar{\otimes}(\cond{B}\bar{\otimes} \cond{C})$ and $(\cond{A}\bar{\otimes}\cond{B})\bar{\otimes} \cond{C}$ as the (localized) tensor product of delocations of $\cond{A,B,C}$, i.e.
\begin{eqnarray}
\cond{A}\bar{\otimes}(\cond{B}\bar{\otimes}\cond{C})&=&\phi(\psi_{0}(\cond{A})\otimes\psi_{1}(\phi(\psi_{0}(\cond{B})\otimes\psi_{1}(\cond{C})))))\nonumber\\
&=&\phi(\psi_{0}(\cond{A}))\otimes\phi(\psi_{1}(\phi(\psi_{0}(\cond{B}))))\otimes\phi(\psi_{1}(\phi(\psi_{1}(\cond{C}))))\nonumber\\
(\cond{A}\bar{\otimes}\cond{B})\bar{\otimes}\cond{C}&=&\phi(\psi_{0}(\phi(\psi_{0}(\cond{A})\otimes\psi_{1}(\cond{B})))\otimes\psi_{1}(\cond{C}))\nonumber\\
&=&\phi(\psi_{0}(\phi(\psi_{0}(\cond{A}))))\otimes\phi(\psi_{0}(\phi(\psi_{1}(\cond{B}))))\otimes\phi(\psi_{1}(\cond{C}))\nonumber
\end{eqnarray}

Once we noticed this, we are left with a simple combinatorics problem, and we easily verify that the following bijection, which does not depend on the objects considered, transforms $\cond{A}\bar{\otimes}(\cond{B}\bar{\otimes}\cond{C})$ into $(\cond{A}\bar{\otimes}\cond{B})\bar{\otimes}\cond{C}$
\begin{equation}
\alpha: n \mapsto \left\{\begin{array}{ll}
2n&\text{ if }n\equiv 0[2]\\
n+1&\text{ if }n\equiv  1[4]\\
(n-1)/2&\text{ if }n\equiv 3[4]
\end{array}\right.
\nonumber
\end{equation}
Hence, we get the associativity up to a natural transformation. Moreover, it satisfies the required pentagonal diagram.
\item \textbf{Unit} The unit satisfies that there exists two natural transformations $\lambda: 1 \bar{\otimes}\cond{A}\cong \cond{A}$ and $\rho: \cond{A}\bar{\otimes}1\cong \cond{A}$. Indeed, we only have to define:
\begin{equation}
\lambda=\rho=\pi\circ\phi^{-1}\nonumber
\end{equation}
where $\pi: \mathbb{N}\times\{0,1\}\rightarrow \mathbb{N}$ are defined as $\pi(n,i)=n$.
Since the required diagram commutes, we have that \catmll is a monoidal category.
\item \textbf{Closure} We already saw that $\cond{A\multimap B}$ is a conduct in the preceding section. Moreover, if $X$ and $Y$ are the carriers of $\cond{A}$ and $\cond{B}$, the conduct $\cond{\phi(\psi_{0}(A)\multimap\psi_{1}(B))}$ is of carrier $\phi(\psi_{0}(X)\cup\psi_{1}(Y))\subset\mathbb{N}$, hence an object of \catmll. Denoting it by $\cond{A}\bar{\multimap}\cond{B}$, we have $\mathfrak{Mor}[\cond{A}\bar{\otimes}\cond{B},\cond{C}]\cong\mathfrak{Mor}[\cond{A},\cond{B}\bar{\multimap}\cond{C}]$ from Theorem \ref{duality} and the associativity isomorphism $\alpha$:
\begin{eqnarray*}
\mathfrak{Mor}[\cond{A}\bar{\otimes} \cond{B}, \cond{C}]&=&(\psi_{0}(\phi(\psi_{0}(\cond{A})\otimes\psi_{1}(\cond{B})))\otimes\psi_{1}(\cond{C})^{\pol})^{\pol}\\
&\stackrel{\phi^{-1}\alpha^{-1}\phi}{\cong}&(\psi_{0}(\cond{A})\otimes\psi_{1}(\phi(\psi_{0}(\cond{B})\otimes\psi_{1}(\cond{C})^{\pol})^{\pol\pol}))^{\pol}\\
&=&(\psi_{0}(\cond{A})\otimes\psi_{1}(\cond{B}\bar{\multimap}\cond{C})^{\pol})^{\pol}\\
&=&\psi_{0}(\cond{A})\multimap\psi_{1}(\cond{B}\bar{\multimap}\cond{C})\\
&=&\mathfrak{Mor}[\cond{A},\cond{B}\bar{\multimap}\cond{C}]
\end{eqnarray*}
\item \textbf{Symmetry} The following bijection can be defined:
\begin{equation}
\gamma: \left\{\begin{array}{rcl}\mathbb{N}&\rightarrow &\mathbb{N}\\
2n&\mapsto& 2n+1\\
2n+1&\mapsto& 2n\end{array}\right.\nonumber
\end{equation}
This bijection defines the isomorphism between $\cond{A}\bar{\otimes}\cond{B}$ and $\cond{B}\bar{\otimes}\cond{A}$. This isomorphism is natural, and since $\gamma^{2}=Id$ we obtain the commutativity of the diagram for the symmetry. We eventually verify by a straightforward computation that (one of) the hexagonal braiding diagrams commute.
\end{itemize}
\end{proof}

\begin{definition}
A $*$-autonomous category $\mathbb{K}$ is a symmetric monoidal closed category $(\mathbb{K},\otimes,1)$ together with a dualizing object $\bot$.
\end{definition}

\begin{proposition}
The object $\bot=1^{\pol}$ is dualizing for \catmll.
\end{proposition}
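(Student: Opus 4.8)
The plan is to verify that the canonical morphism $d_{\cond{A}}\colon \cond{A}\to(\cond{A}\bar{\multimap}\bot)\bar{\multimap}\bot$ — the transpose of the evaluation $\cond{A}\bar{\otimes}(\cond{A}\bar{\multimap}\bot)\to\bot$ — is an isomorphism, natural in $\cond{A}$; this is exactly the requirement that $\bot=1^{\pol}$ be dualizing in the symmetric monoidal closed category \catmll just constructed.

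First I would reduce the computation of the dual to the orthogonality operation. Up to the delocation bookkeeping carried by $\phi,\psi_{0},\psi_{1}$, the object $\cond{A}\bar{\multimap}\bot$ is a copy of the conduct $\cond{A}\multimap\bot$, which by the Duality theorem (Theorem \ref{duality}) equals $(\cond{A}\otimes\bot^{\pol})^{\pol}$. Since $1$ is a conduct, $\bot^{\pol}=1^{\pol\pol}=1$, so $\cond{A}\bar{\multimap}\bot\cong(\cond{A}\otimes 1)^{\pol}$; and as $1$ has empty carrier and is the tensor unit, the unit isomorphism $\rho$ gives $\cond{A}\otimes 1\cong\cond{A}$, whence $\cond{A}\bar{\multimap}\bot\cong\cond{A}^{\pol}$, orthogonality sending isomorphic conducts to isomorphic conducts. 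Iterating this identity once more yields $(\cond{A}\bar{\multimap}\bot)\bar{\multimap}\bot\cong(\cond{A}^{\pol})^{\pol}=\cond{A}^{\pol\pol}$, and since every object $\cond{A}$ is a conduct we have $\cond{A}^{\pol\pol}=\cond{A}$. Thus, at the level of objects, the double dual returns $\cond{A}$ up to delocation.

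The remaining — and genuinely delicate — step is to check that the abstract object isomorphism just obtained is the canonical map $d_{\cond{A}}$, and not merely some isomorphism. Concretely I would track the fixed bijections $\phi,\psi_{0},\psi_{1}$ through the two applications of the Duality theorem and verify that the composite delocation they produce is exactly the $\de{Fax}$ project realizing $\cond{A}\cong\cond{A}^{\pol\pol}$, and that this $\de{Fax}$ coincides with the evaluation-transpose $d_{\cond{A}}$. Because these bijections of $\mathbb{N}$ do not depend on $\cond{A}$, once this identification is made the family $(d_{\cond{A}})_{\cond{A}}$ is automatically natural, and $d_{\cond{A}}$ is invertible, its inverse being the $\de{Fax}$ associated with the inverse delocation, so that either composite is an identity morphism up to delocation, as in Proposition \ref{category}. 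I expect this bookkeeping — reconciling the evaluation map with the $\de{Fax}$ delocation — to be the main obstacle, the underlying algebraic fact $\cond{A}^{\pol\pol}=\cond{A}$ being immediate from the definition of conducts.
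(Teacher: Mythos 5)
Your proposal is correct, and its backbone coincides with the paper's argument, but you reach the object-level statement by a different decomposition, so a comparison is worthwhile. The paper never computes the dual object explicitly: it takes the identity morphism on $\cond{A}\bar{\multimap}\bot$, uncurries it through the closure isomorphism $\phi^{-1}\alpha^{-1}\phi$, precomposes with the symmetry $\gamma$, curries back with $\phi^{-1}\alpha\phi$, and reads off that the canonical morphism $\cond{A}\to(\cond{A}\bar{\multimap}\bot)\bar{\multimap}\bot$ is induced by the function $x\mapsto 4x$, hence is an invertible delocation. You instead first identify the dual object via Theorem \ref{duality}; here your chain of isomorphisms can in fact be sharpened to equalities, which makes your final bookkeeping step easier than you anticipate: since $\bot$ and $\cond{1}$ have empty carrier, $\psi_{1}$ acts trivially on them, $\cond{A}\otimes\cond{1}=\cond{A}$ holds on the nose (adding a finite nonnegative wager to a value in $(0,\infty)$ keeps it there, so $\cond{A}\odot\cond{1}\subseteq\cond{A}$ and the biorthogonal changes nothing), and delocations commute with $(\cdot)^{\pol}$; hence $\cond{A}\bar{\multimap}\bot=\phi\psi_{0}(\cond{A}^{\pol})$ and the double dual is exactly $\phi\psi_{0}\phi\psi_{0}(\cond{A})$, the delocation of $\cond{A}$ along $x\mapsto 4x$ --- this computation, absent from the paper, is a genuinely useful complement that explains where the paper's $4x$ comes from. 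What you correctly isolate as the delicate step --- verifying that the evaluation-transpose $d_{\cond{A}}$ is precisely the $\de{Fax}$ of $(x,0)\mapsto(4x,1)$ and not some other isomorphism --- is the entire content of the paper's proof, and is settled by the explicit curry/symmetry/uncurry computation above; one caution is that your sketch proposes tracking the bijections through ``two applications of the Duality theorem'', whereas the Duality theorem only identifies the Mor-sets: the underlying function of $d_{\cond{A}}$ is determined by the associativity delocation $\alpha$ and the symmetry $\gamma$, so those are the maps to unwind. Granting that (routine) computation, your argument is complete, with invertibility following, as you say, from the fact that the composite of the two opposite $\de{Fax}$ projects reduces to the identity $\de{Fax}_{1_{S}}$, the same calculation used in the invariance-by-cut-elimination theorem.
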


\begin{proof} Taking the identity morphism from $\cond{A}\bar{\multimap} \bot$ to itself, we get a morphism from $(\cond{A}\bar{\multimap}\bot)\bar{\otimes} \cond{A}$ to $\bot$ by applying\footnote{See the proof of the closure of the category.} $\phi^{-1}\alpha^{-1}\phi$. From this, we get a morphism from $\cond{A}\bar{\otimes}(\cond{A}\bar{\multimap}\bot)$ to $\bot$ by precomposing with $\gamma$. Hence, applying $\phi^{-1}\alpha\phi$, we get a morphism from $\cond{A}$ to $(\cond{A}\bar{\multimap}\bot)\bar{\multimap}\bot$ defined by the function $x\mapsto 4x$. It is then an isomorphism, which means that $\bot$ is indeed dualizing.
\end{proof}

As a consequence, we get the following theorem.
\begin{theorem}
The category \catmll is a $*$-autonomous category.
\end{theorem}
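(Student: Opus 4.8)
The plan is to recognize that this theorem is simply the assembly of the two immediately preceding propositions against the stated definition of a $*$-autonomous category. That definition requires exactly two ingredients: a symmetric monoidal closed structure $(\mathbb{K},\otimes,1)$, and a dualizing object $\bot$. Both have already been produced — the first by the proposition exhibiting \catmll, equipped with the bifunctor $\bar{\otimes}$ and unit $1$, as a symmetric monoidal closed category, and the second by the proposition showing that $\bot=1^{\pol}$ is dualizing. So the argument reduces to invoking these results and checking that $*$-autonomy imposes nothing further.

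First I would restate that the symmetric monoidal closed structure is complete: the bifunctor $\bar{\otimes}$, the unit $1=\{(0,(\emptyset,\emptyset))\}^{\pol\pol}$, the associator $\alpha$, the unitors $\lambda,\rho$ and the braiding $\gamma$ are all in place, together with the pentagon, triangle and hexagon coherence diagrams verified in that proposition's proof. In particular the internal hom $\bar{\multimap}$ is available through the closure isomorphism $\mathfrak{Mor}[\cond{A}\bar{\otimes}\cond{B},\cond{C}]\cong\mathfrak{Mor}[\cond{A},\cond{B}\bar{\multimap}\cond{C}]$, itself deduced from Theorem \ref{duality} and the associativity isomorphism $\alpha$.

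Next I would recall that a dualizing object is precisely one for which the canonical morphism $\cond{A}\to(\cond{A}\bar{\multimap}\bot)\bar{\multimap}\bot$ is an isomorphism for every object $\cond{A}$; this is exactly what the preceding proposition established, exhibiting that isomorphism concretely as the map $x\mapsto 4x$. Combining a symmetric monoidal closed category with such a dualizing object yields, by the adopted definition, a $*$-autonomous category, and hence by Seely's theorem \cite{seely89} a denotational model of MLL.

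The hard part is not located in this theorem at all — the genuine content has already been discharged upstream, ultimately resting on the associativity of reduction (Proposition \ref{assocmll}) and the three-term adjunction (Theorem \ref{adjunctweight}), which underlie composition and closure. If any verification remains here, it is only to confirm that the natural transformations $\alpha,\lambda,\rho,\gamma$ and the double-dualization map are mutually coherent, i.e. that $*$-autonomy demands no commuting diagram beyond those already checked for the symmetric monoidal closed structure and the single isomorphism $\cond{A}\cong(\cond{A}\bar{\multimap}\bot)\bar{\multimap}\bot$. Since the definition taken here is exactly ``symmetric monoidal closed plus dualizing object,'' this check is immediate and the theorem follows.
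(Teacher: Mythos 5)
Your proposal is correct and matches the paper exactly: the paper states this theorem as an immediate consequence (``As a consequence, we get the following theorem'') of the two preceding propositions, namely that $($\catmll$,\bar{\otimes},1)$ is symmetric monoidal closed and that $\bot=1^{\pol}$ is dualizing, which by the paper's definition of $*$-autonomy is all that is required. Your additional observation that the real mathematical content lives upstream in Proposition \ref{assocmll} and Theorem \ref{adjunctweight} is accurate and consistent with the paper's structure.
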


\begin{remark} The tensor unit $\cond{1}$ and its dual $\bot$ are not interpreted as the same objects, contrarily to other categorical semantics of multiplicative linear logic with units. Indeed, $\cond{1}$ contains the project $(0,0)$, where $0$ denotes the empty graph on an empty set of vertices, whereas $\bot$ does not.
\end{remark}

\section{Truth}\label{truth}

We can also define a notion of truth inside our framework. We first define a successful project — what corresponds to a correct proof — to be a graph that looks like a set of axiom links, i.e. which is a disjoint union of transpositions. The idea is that a set of axiom-links that interacts correctly with the set of tests (switchings in the case of proof structures) defines a successful proof (a correct structure, i.e. a proof net). Before defining the notion of success, we need to introduce some notations.

First, we will denote by $A^{k}$ the graph of paths of length $k$ in the graph $A$.

Moreover, we define the trace of graph (a mere generalization of the trace of a matrix):
\begin{equation}
Tr(A)= \sum_{v\in V_{A}}\sum_{e\in E_{A}(v,v)} \omega_{A}(e)\nonumber
\end{equation}

We will also say a graph $G$ is \emph{symmetric} when for all vertices $v,w$ there is a weight-preserving bijection between $E_{G}(v,w)$ and $E_{G}(w,v)$.

\begin{definition}[Successful projects]
A project $\de{a}=(a,A)$ is \emph{successful} when $a=0$, the graph $\what{A}$ is symmetric, and is such that $(\what{A})^{~\!3}=\what{A}$ and $Tr(A)=0$.
\end{definition}

\begin{remark}
This definition of truth can be weakened by forgetting about the condition $Tr(A)=0$. All remaining propositions and theorems of this section would still be true (if one replaces “disjoint union of transpositions" by “disjoint union of transpositions and fix points"). We chose to present this particular definition because it better corresponds to our intuition of successful projects as a set of axiom links.

The restriction to graphs such that $Tr(A)=0$ seems moreover necessary to obtain a completeness result.
\end{remark}

\begin{proposition}\label{characterization}
If $\de{a}=(0,A)$ is successful, the graph $\what{A}$ is a disjoint union of transpositions.
\end{proposition}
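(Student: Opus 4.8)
The plan is to unpack the three conditions in the definition of a successful project and show they force $\what{A}$ to be (the graph of) a disjoint union of transpositions. Recall that $\de{a}=(0,A)$ successful means $\what{A}$ is symmetric, $(\what{A})^{3}=\what{A}$, and $Tr(A)=0$. Since $\what{A}$ is simple with weights in $\mathbb{R}_{>0}\cup\{\infty\}$, I can think of it as a (possibly infinite-valued) weighted relation on the finite vertex set $V_{A}$, and the condition $(\what{A})^{3}=\what{A}$ is an identity of such weighted graphs under the path-composition that underlies the notation $A^{k}$ (the graph of paths of length $k$). A transposition here means a pair of vertices $v\neq w$ joined by edges $(v,w)$ and $(w,v)$ both of weight $1$, with no other edges touching $v$ or $w$; so the goal is exactly to prove that every vertex of $\what{A}$ is matched to a unique distinct partner by a weight-$1$ two-cycle.

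\textbf{First I would} use $Tr(A)=0$. Since all weights lie in $]0,1]$ (strictly positive), $Tr(A)=\sum_{v}\sum_{e\in E_{A}(v,v)}\omega_{A}(e)=0$ forces $E_{A}(v,v)=\emptyset$ for every $v$, i.e. $A$, and hence $\what{A}$, has no self-loops. Next I would exploit $(\what{A})^{3}=\what{A}$ together with symmetry to control the weights. Symmetry gives, for each edge $(v,w)$ in $\what{A}$, a matching edge $(w,v)$ of the same weight, so $\what{A}$ is an undirected weighted graph with no loops. The relation $(\what{A})^{3}=\what{A}$ says the weighted adjacency ``matrix'' $M$ of $\what{A}$ satisfies $M^{3}=M$ (reading the path-of-length-$k$ graph as the $k$-th matrix power, with addition over parallel paths matching the $\omega_{\what{G}}$ summation convention). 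Because $M$ is a symmetric nonnegative matrix over a finite vertex set, I can diagonalize: its eigenvalues $\lambda$ satisfy $\lambda^{3}=\lambda$, hence $\lambda\in\{-1,0,1\}$.

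\textbf{The key step} is then to read off the combinatorial structure from this spectral constraint plus nonnegativity and the loop-free condition. From $M^{3}=M$ and no diagonal entries, I would compute the diagonal of $M^{2}$: the $(v,v)$ entry of $M^{2}$ is $\sum_{w}\omega(v,w)\omega(w,v)=\sum_{w}\omega(v,w)^{2}$ by symmetry, and the $(v,v)$ entry of $M^{3}=M$ is $0$. Comparing $M^{3}=M\cdot M^{2}$ and using nonnegativity, I would argue that each vertex has exactly one neighbour and that the corresponding weight is $1$: if $v$ had a neighbour $w$ with $\omega(v,w)=x$, then the length-$3$ walk $v\to w\to v\to w$ contributes $x^{3}$ (times a sum over parallel routes) to the $(v,w)$ entry, which must equal the length-$1$ weight $x$; loop-freeness and nonnegativity eliminate cross terms, forcing $x^{2}\cdot(\text{degree-type sum})=1$ and ultimately $x=1$ with the neighbour unique. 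Thus each vertex sits in a unique weight-$1$ two-cycle with a distinct partner, which is precisely a disjoint union of transpositions.

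\textbf{The main obstacle} I anticipate is making the matrix/relation reading of $(\what{A})^{3}=\what{A}$ fully rigorous in the presence of possibly infinite weights and the paper's particular convention that $\omega_{\what{G}}$ sums weights of parallel edges — I must check that the ``graph of paths of length $3$'' really does correspond to the cube of the weighted adjacency matrix under these conventions, and that no vertex acquires infinite weight (which would have to be ruled out since a transposition demands finite weight $1$). Handling this carefully, together with excluding configurations like a vertex of degree $\geq 2$ that might spuriously satisfy $M^{3}=M$ on the nonnegative cone, is where the real work lies; once the spectral and nonnegativity arguments are pinned down, the identification with disjoint transpositions is immediate.
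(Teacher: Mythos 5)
There is a genuine gap, and it sits exactly where you flagged your worry: the reading of $(\what{A})^{3}=\what{A}$ you adopt --- the matrix identity $M^{3}=M$ for the weighted adjacency matrix, with the weights of parallel length-$3$ paths summed --- is the wrong one, and under that reading the proposition is simply \emph{false}. Take $V_{A}=\{v,w,z\}$ and let $A$ have the four edges $(v,w),(w,v),(w,z),(z,w)$, each of weight $1/\sqrt{2}$. Then $\what{A}=A$ is simple, symmetric, loop-free (so $Tr(A)=0$), and its adjacency matrix
\begin{equation}
M=\frac{1}{\sqrt{2}}\begin{pmatrix}0&1&0\\1&0&1\\0&1&0\end{pmatrix}
\end{equation}
satisfies $M^{3}=M$ (eigenvalues $-1,0,1$), yet $\what{A}$ is a path of length two, not a disjoint union of transpositions. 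This also shows concretely why your key step is a non sequitur: with $x=\omega(v,w)$ and $y=\omega(w,z)$, the $(v,w)$ entry of $M^{3}$ is $x(x^{2}+y^{2})$, so $M^{3}=M$ only forces the ``degree-type sum'' $x^{2}+y^{2}$ to equal $1$; nonnegativity and loop-freeness do \emph{not} eliminate the cross term $xy^{2}$ --- that term is exactly what survives --- and nothing forces $x=1$ or the uniqueness of the neighbour. The spectral constraint $\lambda\in\{-1,0,1\}$ is satisfied by the counterexample as well, so no amount of spectral or positivity massaging can close the argument from this reading.

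The paper's proof uses the other reading, which is the one the definition of $A^{k}$ supports: $(\what{A})^{3}$ is the graph of length-$3$ paths, \emph{one edge per path}, and its equality with the simple graph $\what{A}$ in particular forbids two distinct length-$3$ paths between the same pair of vertices. If a vertex $w$ had two neighbours $v\neq z$, symmetry provides $e=(v,w)$, $e^{-1}=(w,v)$, $f=(w,z)$, $f^{-1}=(z,w)$, and then $ee^{-1}e$ and $eff^{-1}$ are two distinct length-$3$ paths from $v$ to $w$ --- contradiction, with no weight computation at all. Once every vertex has at most one neighbour, the unique length-$3$ path over an edge of weight $\lambda$ has weight $\lambda^{3}$, so $\lambda^{3}=\lambda$ gives $\lambda=1$, and $Tr(A)=0$ excludes fixed points, leaving a disjoint union of transpositions. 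Note that in the counterexample above the two parallel paths $ee^{-1}e$ and $eff^{-1}$ are precisely the ones whose weights your convention merges ($x^{3}+xy^{2}=x$), which is how the summed reading erases the contradiction. Your preliminary step (using $Tr(A)=0$ and positivity of weights to exclude loops) is fine and matches the paper; the fix needed is not technical but interpretive --- read the condition as equality of (multi)graphs of paths, not of summed adjacency matrices.
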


\begin{proof}
The fact that $\what{A}$ is symmetric and satisfies $\what{A}^{3}=\what{A}$ implies that a given vertex cannot be the target of more than one edge, or the source of more than one edge. Indeed, let $e=(v,w)$ and $f=(w,z)$ be two edges in $\what{A}$. Then, there exist edges $e^{-1}$ and $f^{-1}$ from respectively $w$ to $v$ and from $z$ to $w$. Then there is in $\what{A}^3$ more than one edge between $v$ and $w$, namely $ee^{-1}e$ and $eff^{-1}$. Hence it cannot be equal to $\what{A}$.

We have that $\what{A}^3=\what{A}$ implies that all weights equal $1$ (since all weights $\lambda$ satisfy $\lambda^3=\lambda$), which means that $\what{A}$ is just the graph induced by a disjoint union of transpositions and fix points. However, $\what{A}$ cannot contain any fix points, since $Tr(A)=Tr(\what{A})=0$. Thus $\what{A}$ is a disjoint union of transpositions.
\end{proof}

\begin{definition}[Truth]
A conduct of carrier $V$ is \emph{true} if it contains a successful project.
\end{definition}

\begin{theorem}[Consistency]
The conducts $\cond{A}$ and $\cond{A}^{\pol}$ cannot both be true.
\end{theorem}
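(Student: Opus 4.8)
The plan is to argue by contradiction: I assume that both $\cond{A}$ and $\cond{A}^{\pol}$ are true and derive that two particular projects are forced to be orthogonal yet manifestly fail to be. Truth gives a successful project $\de{a}=(0,A)\in\cond{A}$ and a successful project $\de{b}=(0,B)\in\cond{A}^{\pol}$, and these share the common carrier $V$ of the two conducts, so their orthogonality is defined. Since $\de{b}\in\cond{A}^{\pol}$ and $\de{a}\in\cond{A}$, the definition of the orthogonal set forces $\de{a}\poll\de{b}$, i.e. $\sca{a}{b}\notin\{0,\infty\}$. As both wagers vanish, this reads $\sca{a}{b}=\scalar{A,B}\notin\{0,\infty\}$, so the whole argument reduces to computing this interaction.

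First I would pass from the weighted graphs to their simple graphs. By Proposition \ref{characterization}, successfulness makes $\what{A}$ and $\what{B}$ disjoint unions of transpositions; in particular every edge of $\what{A}$ and of $\what{B}$ has weight exactly $1$. Using Proposition \ref{invariant} (which replaces the second argument by its simple graph) together with the evident symmetry $\scalar{G,H}=\scalar{H,G}$ — the set $\cycl{G,H}$ of alternating $1$-circuits is unchanged by swapping the two colours, and the weight of a circuit is the colour-independent product of its edge weights — I obtain $\scalar{A,B}=\scalar{\what{A},\what{B}}$.

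Now I compute the right-hand side directly. An alternating $1$-circuit in $\what{A}\bicol\what{B}$ alternates edges of $\what{A}$ and of $\what{B}$, all of weight $1$, so its weight is the product $1$; hence each such circuit contributes $-\log(1-1)=-\log 0=+\infty$ to the sum defining $\scalar{\what{A},\what{B}}$. Consequently $\scalar{\what{A},\what{B}}=\infty$ when $\cycl{\what{A},\what{B}}$ is non-empty and $\scalar{\what{A},\what{B}}=0$ (the empty sum) otherwise. In either case $\scalar{A,B}=\scalar{\what{A},\what{B}}\in\{0,\infty\}$, contradicting $\sca{a}{b}\notin\{0,\infty\}$, so $\cond{A}$ and $\cond{A}^{\pol}$ cannot both be true.

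The only delicate points I expect are bookkeeping: checking that the two successful projects genuinely live on the same carrier so that orthogonality is defined, and justifying that Proposition \ref{invariant} may be applied to both arguments — which is exactly where the symmetry of $\scalar{-,-}$ does the work. Once the interaction is pinned to $\scalar{\what{A},\what{B}}$ for two weight-$1$ permutation graphs, the dichotomy $0$ versus $\infty$ is forced by the single value $-\log(1-1)$, so the heart of the matter is simply that correct ``axiom-link'' graphs interact with weight $1$ and hence can never interact orthogonally with one another.
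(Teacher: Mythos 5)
Your proof is correct and follows essentially the same route as the paper's: both invoke Proposition \ref{characterization} to reduce to the weight-$1$ simple graphs $\what{A}$ and $\what{B}$, use $\scalar{A,B}=\scalar{\what{A},\what{B}}$ (via Proposition \ref{invariant}), and conclude from the dichotomy that the measurement is $0$ when $\cycl{\what{A},\what{B}}$ is empty and $\infty$ otherwise, contradicting orthogonality. If anything, you are slightly more careful than the paper in spelling out that Proposition \ref{invariant} applies to the second argument only and that the symmetry $\scalar{G,H}=\scalar{H,G}$ handles the first, a step the paper's proof leaves implicit.
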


\begin{proof}
Suppose there exists two successful projects $\de{a}=(0,A)$ and $\de{b}=(0,B)$ in $\cond{A}$ and $\cond{A}^{\pol}$ respectively. All weights in $\what{A}$ and $\what{B}$ being equal to $1$ from Proposition \ref{characterization}, if the graph $\what{A}\bicol \what{B}$ contains no $1$-circuits, we get $\scalar{\what{A},\what{B}}=\scalar{A,B}=\sca{a}{b}=0$, and if it contains at least one $1$-circuit we have $\scalar{\what{A},\what{B}}=\scalar{A,B}=\sca{a}{b}=\infty$. Since both cases contradict the fact that $\de{a}\simperp\de{b}$, we are done.
\end{proof}

To prove compositionality, we will use the following lemma.
\begin{lemma}\label{appli}
Let $A$ and $B$ be two graphs. Then
\begin{equation}
\what{\what{A}\plug\what{B}}=\what{A\plug B}\nonumber
\end{equation}
\end{lemma}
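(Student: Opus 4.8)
The plan is to exploit that both sides are, by construction, \emph{simple} graphs (each is of the form $\what{\cdot}$) sharing the same vertex set. Indeed $V_{\what{A}}=V_{A}$ and $V_{\what{B}}=V_{B}$, so $V_{\what{A}\plug\what{B}}=V_{\what{A}}\Delta V_{\what{B}}=V_{A}\Delta V_{B}=V_{A\plug B}$. It therefore suffices to fix two vertices $v,w\in V_{A}\Delta V_{B}$ and show the weight of the edge $(v,w)$ agrees on both sides. Unfolding the definition of $\what{\cdot}$ and of reduction (whose edges are exactly the alternating paths between the interfaces, each weighted by the product of its constituent edge-weights), the right-hand weight is $\sum_{\pi\in\path{v,w}{A,B}}\omega_{A\bicol B}(\pi)$ and the left-hand weight is $\sum_{\bar{\pi}\in\path{v,w}{\what{A},\what{B}}}\omega_{\what{A}\bicol\what{B}}(\bar{\pi})$.

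The key step is a weight-preserving \emph{projection} of alternating paths. First I would send $\pi=(e_{0},\dots,e_{n})\in\path{v,w}{A,B}$ to $\bar{\pi}=(\bar{e}_{0},\dots,\bar{e}_{n})$, where $\bar{e}_{i}$ is the unique edge of $\what{A}$ or $\what{B}$ (according to the colour of $e_{i}$) joining $s(e_{i})$ to $t(e_{i})$. This map preserves endpoints and the colour sequence, hence carries alternating paths to alternating paths, and it is surjective onto $\path{v,w}{\what{A},\what{B}}$ (a merged edge exists only when at least one underlying edge does). Its fibre over a given $\bar{\pi}$ is exactly the product $\prod_{i=0}^{n}E(s(\bar{e}_{i}),t(\bar{e}_{i}))$ of the original edge-sets lying below each merged edge. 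Since $\omega_{\what{A}\bicol\what{B}}(\bar{e}_{i})$ is by definition the sum of the weights of those underlying edges, distributing the finite product over these sums gives
\[
\omega_{\what{A}\bicol\what{B}}(\bar{\pi})=\prod_{i=0}^{n}\Big(\sum_{e\in E(s(\bar{e}_{i}),t(\bar{e}_{i}))}\omega(e)\Big)=\sum_{\pi\mapsto\bar{\pi}}\omega_{A\bicol B}(\pi).
\]

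Finally I would sum this identity over all $\bar{\pi}\in\path{v,w}{\what{A},\what{B}}$ and regroup along the fibres, which partition $\path{v,w}{A,B}$; this yields $\sum_{\bar{\pi}}\omega_{\what{A}\bicol\what{B}}(\bar{\pi})=\sum_{\pi}\omega_{A\bicol B}(\pi)$, so the two edge-weights coincide (in particular the edge is present on one side iff present on the other), completing the proof. The step I expect to require the most care is the legitimacy of these series manipulations: internal cycles may be traversed arbitrarily often, so $\path{v,w}{A,B}$ can be countably infinite, and the merged weights live in $\mathbb{R}_{>0}\cup\{\infty\}$. However, every term is non-negative, so both the distributivity of the finite product over the (possibly infinite) defining sums and the regrouping of the outer sum along fibres are justified by unconditional convergence of non-negative series (Tonelli for the counting measure), the value $\infty$ being handled by the usual conventions. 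Notably, no case distinction on whether intermediate vertices lie in $V_{A}\cap V_{B}$ is needed, since the argument is purely combinatorial on the alternating paths themselves.
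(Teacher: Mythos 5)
Your proof is correct and takes essentially the same route as the paper's: both reduce the lemma to comparing, for each fixed pair of endpoints, the edge weights on the two sides, and conclude by distributing the product of merged-edge weights over the sums defining them — which is precisely the fibre decomposition of your contraction map $\pi\mapsto\bar{\pi}$ that the paper uses implicitly. Your only genuine addition is the explicit Tonelli-type justification for rearranging the possibly infinite non-negative series, a point the paper glosses over but which your write-up handles cleanly.
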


Before going through the proof, we show on a simple example how the argument works. Taking the two graphs $A$ and $B$ of Figure \ref{ABlemmaappli}, the graphs $A\bicol B$ and $\what{A}\bicol\what{B}$ are given in Figure \ref{pluglemmaappli}. The graphs $\what{A\plug B}$ and $\what{\what{A}\plug\what{B}}$ are both composed of one edge from $a$ to $c$, and their weights are respectively equal to $x_{1}y_{1}+x_{1}y_{2}+x_{2}y_{1}+x_{2}y_{2}$ and $(x_{1}+x_{2})(y_{1}+y_{2})$, hence equal. In fact, the proof relies solely on the distribution of the multiplication over the addition.

\begin{figure}[h]
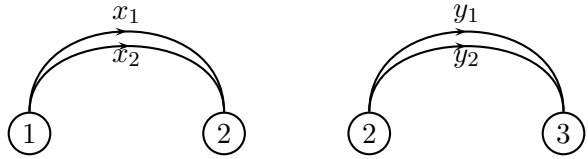

\begin{center}
$\begin{array}{ccc}
~\\
\psmatrix[colsep=1cm,rowsep=1cm]
&~&\\
[mnode=circle]1&&[mnode=circle]2
\psset{ArrowInside=->}
\nccurve[angleA=90,angleB=90,ncurvA=0.9,ncurvB=0.9]{2,1}{2,3}
\nccurve[angleA=90,angleB=90,ncurvA=1.1,ncurvB=1.1]{2,1}{2,3}
\nput{90}{1,2}{x_{1}}
\nput{-90}{1,2}{x_{2}}
\endpsmatrix
&~~~~~&
\psmatrix[colsep=1cm,rowsep=1cm]
&~&\\
[mnode=circle]2&&[mnode=circle]3
\psset{ArrowInside=->}
\nccurve[angleA=90,angleB=90,ncurvA=0.9,ncurvB=0.9]{2,1}{2,3}
\nccurve[angleA=90,angleB=90,ncurvA=1.1,ncurvB=1.1]{2,1}{2,3}
\nput{90}{1,2}{y_{1}}
\nput{-90}{1,2}{y_{2}}
\endpsmatrix
\end{array}$
\end{center}
\caption{The graphs $A$ and $B$}\label{ABlemmaappli}
\end{figure}

\begin{figure}[h]
\begin{center}
$\begin{array}{ccc}
\psmatrix[colsep=0.8cm,rowsep=0.8cm]
&~&&~&\\
[mnode=circle]1&&[mnode=circle]2&&[mnode=circle]3\\
&~&&~&
\psset{ArrowInside=->}
\nccurve[angleA=90,angleB=90,ncurvA=0.9,ncurvB=0.9]{2,1}{2,3}
\nccurve[angleA=90,angleB=90,ncurvA=1.1,ncurvB=1.1]{2,1}{2,3}
\nput{90}{1,2}{x_{1}}
\nput{-90}{1,2}{x_{2}}
\nccurve[angleA=-90,angleB=-90,ncurvA=1,ncurvB=1]{2,3}{2,5}
\nccurve[angleA=-90,angleB=-90,ncurvA=1.2,ncurvB=1.2]{2,3}{2,5}
\nput{90}{3,4}{y_{1}}
\nput{-90}{3,4}{y_{2}}
\endpsmatrix
&
~~~~~
&
\psmatrix[colsep=0.8cm,rowsep=0.8cm]
&~&&~&\\
[mnode=circle]1&&[mnode=circle]2&&[mnode=circle]3\\
&~&&~&
\psset{ArrowInside=->}
\nccurve[angleA=90,angleB=90,ncurvA=1,ncurvB=1]{2,1}{2,3}
\nput{90}{1,2}{x_{1}+x_{2}}
\nccurve[angleA=-90,angleB=-90,ncurvA=1,ncurvB=1]{2,3}{2,5}
\nput{-90}{3,4}{y_{1}+y_{2}}
\endpsmatrix
\end{array}$
\end{center}
\caption{The graphs $A\bicol B$ (on the left) and $\what{A}\bicol\what{B}$ (on the right)}\label{pluglemmaappli}
\end{figure}

\begin{proof}
An edge $f_{0}$ in $\what{\what{A}\plug \what{B}}$ is an alternating path $\pi=\epsilon_{1}\dots\epsilon_{k}$, where the $\epsilon_{i}$ are either in $\what{A}$ or in $\what{B}$ according to the parity of $i$. Write $s_{i}$ and $t_{i}$ the source and targets of the edge $\epsilon_{i}$ (for $i=1,\dots,k$), and suppose, without loss of generality, that $\pi$ begins and ends in $\what{A}$: then for any $0\leqslant j\leqslant (k-1)/2$, the edge $\epsilon_{2j+1}$ is an edge in $\what{A}$ of weight 
\begin{equation}
\omega(\epsilon_{2j+1})=\sum_{e\in E_{A}(s_{2j+1},t_{2j+1}} \omega(e)\nonumber
\end{equation}
Similarly, for any $1\leqslant j\leqslant (k-1)/2$, the edge $\epsilon_{2j}$ is in $\what{B}$ and of weight
\begin{equation}
\omega(\epsilon_{2j})=\sum_{e\in E_{B}(s_{2j},t_{2j})} \omega(e)\nonumber
\end{equation}
Then, the weight of $\pi$, which is the weight of $f_{0}$, is given by:
\begin{equation}
\omega(\pi)=\prod_{1\leqslant i\leqslant k} \omega(\epsilon_{i})\nonumber
\end{equation}
The distribution of the product on the sum gives us that this is equal to:
\begin{eqnarray}
\omega(\pi)&=&\sum_{e_{1}\in E_{A}(s_{1},t_{1})}\sum_{e_{2}\in E_{B}(s_{2},t_{2})}\dots\sum_{e_{k}\in E_{A}(s_{k},t_{k})} \omega(e_{1})\omega(e_{2})\dots\omega(e_{k})\nonumber\\
&=&\sum_{\rho\in\path{s_{1},t_{k}}{A,B}} \omega(\rho)\nonumber\\
&=&\sum_{e\in E_{A\plug B}(s_{1},t_{k})} \omega(e)\nonumber
\end{eqnarray}
The last right-hand term is by definition the weight of the edge $f_{1}$ of $\what{A\plug B}$ which begins at $s_{1}$ and ends at $t_{k}$. Now, it is obvious that there is an edge between two vertices in $\what{\what{A}\plug\what{B}}$ if and only if there is an edge between these same vertices in $\what{A\plug B}$. Hence, since the weights of their corresponding edges are equal, the graphs are equal.
\end{proof}

\begin{theorem}[Compositionality]\label{compositiontruth}~\\
If $\de{f}$ and $\de{a}$ are successful projects in $\cond{A\multimap B}$ and $\cond{A}$ respectively, then the application $\de{f}\plug\de{a}$ is itself a successful project.
\end{theorem}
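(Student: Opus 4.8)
The plan is to verify directly that the project $\de{f}\plug\de{a}=(\sca{f}{a},F\plug A)$ satisfies the three defining conditions of success: the wager vanishes, $\what{F\plug A}$ is symmetric, and $(\what{F\plug A})^{3}=\what{F\plug A}$ with $Tr(F\plug A)=0$. By Proposition \ref{characterization}, the last three conditions are equivalent to $\what{F\plug A}$ being a disjoint union of transpositions. So the real content is to show that the reduction of two such ``permutation-like'' graphs is again of that shape, and that the wager stays $0$.

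First I would deal with the wager. Since $\de{f}$ and $\de{a}$ are successful, $f=0$ and $a=0$; and since $\de{f}\in\cond{A\multimap B}$ while $\de{a}\in\cond{A}$, the cut $\de{f}\plug\de{a}$ is defined and lands in $\cond{B}$ (this is the very meaning of linear implication), so in particular $\sca{f}{a}\neq\infty$. Because $\what{A}$ and $\what{F}$ (restricted to the common carrier) are disjoint unions of transpositions with all weights $1$, any alternating $1$-circuit between $A$ and $F$ would have weight exactly $1$, forcing $\scalar{A,F}=\infty$; hence for the cut to be defined there can be no such $1$-circuit, giving $\scalar{A,F}=0$ and therefore $\sca{f}{a}=0+0+0=0$. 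This is exactly the ``internal cycle'' obstruction visible in Figure \ref{internal}, now ruled out by definedness.

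The heart of the argument is the permutation structure of $\what{F\plug A}$. The key simplification is Lemma \ref{appli}, which gives $\what{F\plug A}=\what{\what{F}\plug\what{A}}$, so I may replace $F,A$ by the genuine permutation graphs $\what{F},\what{A}$ from the start. I would then argue that plugging two involutive permutations and extracting alternating paths again yields an involution: every vertex of the carrier $V_{F}\Delta V_{A}$ has exactly one $\what{F}$-edge and/or one $\what{A}$-edge incident, so each maximal alternating path between such vertices is uniquely determined and reversible, its reverse being the alternating path obtained by reading the transposed edges backwards. This yields at once both symmetry of $\what{\what{F}\plug\what{A}}$ and the fact that each vertex is the source and target of at most one edge, which by the argument of Proposition \ref{characterization} forces $(\what{F\plug A})^{3}=\what{F\plug A}$. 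The condition $Tr(F\plug A)=0$ follows because a loop in $F\plug A$ would correspond to an alternating \emph{cycle}, i.e. precisely one of the forbidden $1$-circuits already excluded above.

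The step I expect to be the main obstacle is making the ``uniquely determined reversible path'' argument fully rigorous while respecting locativity: one must check that alternation plus the involutivity of $\what{F}$ and $\what{A}$ genuinely prevents a maximal path from branching, and that a path cannot re-enter the shared interaction zone $V_{F}\cap V_{A}$ in a way that creates a second outgoing edge at an endpoint. The cleanest route is to track, at each interior vertex, that it carries exactly one edge of each color and that alternation therefore propagates deterministically in both directions, so the reverse path is forced; combined with the absence of alternating cycles (from the wager computation), this shows the paths partition the boundary vertices into pairs, exhibiting $\what{F\plug A}$ as a disjoint union of transpositions. With all three conditions and $\sca{f}{a}=0$ established, $\de{f}\plug\de{a}$ is successful.
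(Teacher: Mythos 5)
Your overall route is the paper's: reduce to the hatted graphs via Lemma \ref{appli}, get the wager from the fact that all weights equal $1$ (so $\scalar{F,A}$ is $0$ or $\infty$, and definedness of the cut rules out $\infty$), and get the permutation structure of $\what{F\plug A}$ from the observation that each vertex carries at most one edge of each colour in each direction, so alternating paths propagate deterministically and reversibly. All of this matches the argument in the paper, which verifies the three success conditions directly (symmetry of the reduction of two symmetric graphs, the at-most-one-edge property transferring to give $\what{B}^{3}=\what{B}$, and the trace condition).

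There is, however, a genuine flaw in your justification of $Tr(F\plug A)=0$. You claim a loop in $F\plug A$ ``would correspond to an alternating cycle, i.e.\ precisely one of the forbidden $1$-circuits already excluded above.'' This is false: a loop of $F\plug A$ sits at a vertex $v\in V_{F}\Delta V_{A}$, say $v\in V_{F}\setminus V_{A}$, so the underlying closed alternating path must begin \emph{and} end with an edge of $F$; it therefore has $\delta(e_{0})=\delta(e_{n})$ and is exactly \emph{not} an alternating cycle in the sense of the definition (which demands $\delta(e_{n})\neq\delta(e_{0})$). Such a path contributes nothing to $\scalar{F,A}$, so it is not excluded by the definedness of the cut — indeed, under the weakened notion of success mentioned in the remark after the definition (fixed points allowed), one can have a fixed point of $\what{A}$ at a vertex $w$ adjacent to $v$ in $\what{F}$, producing a weight-$1$ loop at $v$ in the reduction while $\sca{f}{a}=0$ remains finite. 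The correct source of the trace condition, as in the paper, is $Tr(F)=Tr(A)=0$ itself: by your own determinism and symmetry argument, in a closed alternating path at $v$ with equal endpoint colours the last edge must be the reversal of the first (the unique $\what{F}$-edge into $v$ comes from the vertex the unique $\what{F}$-edge out of $v$ reaches), so the path telescopes — stripping matched outer pairs — down to a single loop of $\what{F}$ or of $\what{A}$, which the hypotheses $Tr(F)=Tr(A)=0$ forbid. With that step repaired, the rest of your argument goes through and coincides with the paper's proof.
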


\begin{proof}
Let $\de{f}=(0,F)$ and $\de{a}=(0,A)$ be the two successful projects, and define $\what{f}=(0,\what{F})$ and $\what{a}=(0,\what{A})$. We show that $\de{b}=(b,B)$, the result of the reduction $\what{\de{f}}\plug\what{\de{a}}$, is indeed successful. As the reduction of two symmetric graphs, $B$ is symmetric.
The fact that $\what{A}$ is symmetric and satisfies $\what{A}^3=\what{A}$ implies that a given vertex cannot be the target or the source of more than one edge (see the proof of Proposition \ref{characterization} for details). Since all this is also true for $\what{F}$, it is clear that each vertex in $B$ is the source (resp. the target) of at most one edge, and this implies, combined with the fact that $B$ is symmetric, that $\what{B}^3=B=\what{B}$. The fact that $Tr(B)=0$ is an easy consequence of the fact that $Tr(A)=0=Tr(F)$.

The only remaining point is the question of the wager. Since all weights appearing in $F$ and $A$ are equal to $1$, we have that $b=\sca{a}{f}=0$ or $b=\sca{a}{f}=\infty$. But since the wager of a project cannot be equal to $\infty$, we have that $b=0$, hence $\de{b}$ is successful. This implies, using Lemma \ref{appli}, that $\de{f}\plug\de{a}$ is successful.
\end{proof}

Now that a notion of truth has been defined, it is quite natural to wonder wether a soundness and completeness theorem holds. While a soundness theorem will be given in the next section, we will not answer the question of the existence of a completeness theorem in this paper. However, the following result of (partial) internal completeness for the tensor product can be seen as a first step towards a positive answer.

\begin{proposition}\label{internalcomp}
If $\de{f}\in\cond{A}\otimes\cond{B}$ is a successful project, then there exists successful projects $\de{a}\in\cond{A}$ and $\de{b}\in\cond{B}$ such that $\de{f}=\de{a}\otimes\de{b}$.
\end{proposition}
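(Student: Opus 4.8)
The plan is to first establish a splitting property for the graph of $\de{f}$, then read off the two factors together with their successfulness, and finally secure membership in the two conducts by a localized orthogonality test; I expect this last step to be the main obstacle. Write $\de{f}=(0,F)$, whose carrier is the (disjoint) union $V_A\cup V_B$ of the carriers of $\cond{A}$ and $\cond{B}$. By Proposition~\ref{characterization}, $\what{F}$ is a disjoint union of transpositions on $V_A\cup V_B$. First I would show that no transposition is \emph{crossing}, i.e.\ that $\what{F}$ never matches a vertex $v\in V_A$ with a vertex $w\in V_B$. Suppose one does, and consider the test $\de{g}=(1,G)$ whose graph $G$ consists of the single edge from $w$ to $v$ with weight $1$, taken on the carrier $V_A\cup V_B$. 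For any $\de{c}\otimes\de{d}$ with $\de{c}=(c,C)\in\cond{A}$ and $\de{d}=(d,D)\in\cond{B}$, the graph $C\cup D$ has no edge between $V_A$ and $V_B$, so no alternating circuit of $G\bicol(C\cup D)$ can traverse the crossing edge of $G$; hence $\scalar{G,C\cup D}=0$ and $\scalar{\de{g},\de{c}\otimes\de{d}}=1+c+d\in(0,\infty)$, giving $\de{g}\in(\cond{A\odot B})^{\pol}$. On the other hand, by Proposition~\ref{invariant} and the symmetry of the measurement one has $\scalar{F,G}=\scalar{\what{F},G}$, and since the weight-$1$ edge of $\what{F}$ from $v$ to $w$ closes with $G$ a circuit of weight $1$, we get $\scalar{F,G}=\infty$, so $\sca{f}{g}=\infty$ and $\de{f}\not\simperp\de{g}$. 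This contradicts $\de{f}\in(\cond{A\odot B})^{\pol\pol}=\cond{A}\otimes\cond{B}$.

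Hence every transposition of $\what{F}$ stays inside $V_A$ or inside $V_B$, so $F$ splits as $F=A_0\cup B_0$, where $A_0$ (resp.\ $B_0$) gathers the edges of $F$ with both endpoints in $V_A$ (resp.\ $V_B$). Setting $\de{a}=(0,A_0)$ and $\de{b}=(0,B_0)$ yields $\de{a}\otimes\de{b}=(0,A_0\cup B_0)=\de{f}$, and since $\what{A_0}=(\what{F})|_{V_A}$ and $\what{B_0}=(\what{F})|_{V_B}$ are again disjoint unions of transpositions, both $\de{a}$ and $\de{b}$ are successful.

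It remains to prove $\de{a}\in\cond{A}$ (the case of $\de{b}$ being symmetric), and this is the delicate step. Merely testing $\de{f}$ against products $\de{x}\otimes\de{d}$ with $\de{x}\in\cond{A}^{\pol}$ and $\de{d}\in\cond{B}^{\pol}$ only recovers $\de{f}\in\cond{A}\parr\cond{B}$ (Proposition~\ref{mixrule}): since interaction is additive over the tensor and all summands are nonnegative, a component equal to $0$ becomes invisible as soon as a positive wager is added, so such tests cannot separate the two factors. The idea that unblocks this is to use a witness of non-membership as the test. Suppose $\de{a}\notin\cond{A}^{\pol\pol}$ and pick $\de{x}=(x,X)\in\cond{A}^{\pol}$ with $\sca{a}{x}\in\{0,\infty\}$; localize it to the project $\de{x'}=(x,X')$ on the full carrier $V_A\cup V_B$ keeping the \emph{same} graph $X$ (so $X'$ has no edge touching $V_B$). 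Because $X'$ meets only $V_A$, for every $\de{c}\otimes\de{d}$ with $\de{c}\in\cond{A}$ and $\de{d}=(d,D)\in\cond{B}$ one computes $\scalar{\de{x'},\de{c}\otimes\de{d}}=\sca{x}{c}+d$; as $\de{x}\in\cond{A}^{\pol}$ forces $\sca{x}{c}\in(0,\infty)$ and $d$ is a finite nonnegative real, the sum stays in $(0,\infty)$, so $\de{x'}\in(\cond{A\odot B})^{\pol}$. But the same localization gives $\sca{f}{x'}=\sca{a}{x}\in\{0,\infty\}$, whence $\de{f}\not\simperp\de{x'}$, again contradicting $\de{f}\in(\cond{A\odot B})^{\pol\pol}$. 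Therefore $\de{a}\in\cond{A}$ and, symmetrically, $\de{b}\in\cond{B}$, completing the decomposition.

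The two computations that must be pinned down carefully are, in the splitting step, the reduction $\scalar{F,G}=\scalar{\what{F},G}$ via Proposition~\ref{invariant} (which is what turns the crossing edge into a genuine weight-$1$ circuit), and, in the membership step, the verification that the localized witness $\de{x'}$ really lands in $(\cond{A\odot B})^{\pol}$ — i.e.\ that localization collapses the interaction with an arbitrary $\de{c}\otimes\de{d}$ to $\sca{x}{c}$ plus the harmless wager of $\de{d}$, so that the extra wagers can never push the value onto $0$ or $\infty$. The membership step is the one I expect to be the real difficulty, since it is exactly there that the tensor must be distinguished from the par.
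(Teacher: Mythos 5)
Your proof is correct, and its first half is exactly the paper's argument: the paper also rules out a crossing edge of $\what{F}$ by testing against a project with a single reversed edge $(w,v)$ of weight $1$ and a nonzero wager, which lies in $(\cond{A\odot B})^{\pol}$ because no alternating circuit can use the crossing edge, while $\scalar{F,G}=\scalar{\what{F},G}=\infty$ via Proposition \ref{invariant}. Where you genuinely go beyond the paper is the membership step. The paper's proof opens with ``we only need to show that $\what{F}$ can be written as the tensor product of two projects in $\cond{A}$ and $\cond{B}$'' and then only establishes the graph-theoretic splitting $F=A_{0}\cup B_{0}$, tacitly treating $\de{a}=(0,A_{0})\in\cond{A}$ and $\de{b}=(0,B_{0})\in\cond{B}$ as automatic --- which it is not, since a failure of the decomposition could a priori come from the components missing their conducts rather than from a crossing edge. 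Your localized-witness argument closes exactly this gap: recarrying a witness $\de{x}\in\cond{A}^{\pol}$ with $\sca{a}{x}\in\{0,\infty\}$ onto $V_{A}\cup V_{B}$ without adding edges forces every alternating circuit against $C\cup D$ (resp.\ against $F$) to live entirely in $V_{A}$, so that $\sca{x'}{c\otimes d}=\sca{x}{c}+d\in(0,\infty)$ while $\sca{f}{x'}=\sca{a}{x}\in\{0,\infty\}$, contradicting $\de{f}\in(\cond{A\odot B})^{\pol\pol}$; the computation is sound, the wagers being finite nonnegative reals by definition. Two minor points worth noting: the edge case $\cond{A}^{\pol}=\emptyset$ is harmless (then $\cond{A}$ contains every project of carrier $V_{A}$, so membership is trivial), and ruling out edges from $V_{A}$ to $V_{B}$ suffices because the symmetry of $\what{F}$ (Proposition \ref{characterization}) excludes the opposite direction as well, which your formulation in terms of unordered transpositions handles implicitly.
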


\begin{proof}
Write $\de{f}=(0,F)$. We only need to show that $\what{F}$ can be written as the tensor product of two projects in $\cond{A}$ and $\cond{B}$. If it is not the case, there exists in $\what{F}$ an edge from a vertex $v$ in $V_{A}$ to a vertex $w$ in $V_{B}$. Now, consider the project $\de{c}=(c,C)$ where $c\not= 0$ and $C=(V_{A}\cup V_{B},\{(w,v)\},\omega((w,v))=1)$. This project is orthogonal to any element of $\cond{A\odot B}$, hence it is orthogonal to any element of $\cond{A\otimes B}$. However, we have that $\sca{f}{c}=\infty$ (since $\de{f}$ is successful, every edge in $\what{F}$ has weight $1$ by Proposition \ref{characterization}, and $\scalar{F,C}=\scalar{\what{F},C}$), so it is not orthogonal to $\de{f}$, which is contradictory.
\end{proof}

\section{Interpretation of Proofs}

In this section, we give the explicit interpretation of proofs of multiplicative linear logic with mix and units. Even though the results of this section were already obtained (for the most part) by defining the categorical model in section \ref{denot}, we believe this construction can help gaining a better understanding of our model, and acquire better insights on locativity.

Let us denote by $\delta$ the bijection $\mathbb{N}\times\mathbb{N}\rightarrow \mathbb{N}$ defined as $(n,m)\mapsto 2^{n}(2m+1)-1$. In this section, we will work up to the bijection $\delta$ and consider graphs whose set of vertices is a finite subset of $\mathbb{N}\times\mathbb{N}$.

\begin{definition}
We fix $\mathcal{V}ar=\{X_{i}(j)\}_{i,j\in\mathbb{N}}$ a set of \emph{localized variables\footnotemark\addtocounter{footnote}{-1}}. For $i\in\mathbb{N}$, the set $X_{i}=\{X_{i}(j)\}_{j\in\mathbb{N}}$ is said to be the \emph{variable name $X_{i}$}, and we call an element of $X_{i}$ a \emph{localized variable of name\footnote{The variable names are the variables in the usual sense (styled spiritual by Girard in ludics \cite{locussolum}), whereas the notion of localized variable is close to that of occurences.} $X_{i}$}. We suppose moreover that each name of variable $X_{i}$ comes with a \emph{size}\footnote{The size allows us to interpret atoms as complex conducts and not only by conduct of carrier of cardinality $1$.}, represented by an integer $n_{i}$. 
\end{definition}

For $i,j\in\mathbb{N}$ we define the \emph{location} $\sharp X_{i}(j)$ of the variable $X_{i}(j)$ as the set $\{(i,m)~|~jn_{i}\leqslant m\leqslant (j+1)n_{i}-1\}$.

\begin{definition}[Formulas of locMLL]
We inductively define the formulas of \emph{localized multiplicative linear logic} locMLL and their \emph{location} as follows:
\begin{itemize}
\item A localized variable $X_{i}(j)$ of name $X_{i}$ is a formula whose location is defined as $\sharp X_{i}(j)$;
\item If $X_{i}(j)$ is a localized variable of name $X_{i}$, then $(X_{i}(j))^{\pol}$ is a formula of location $\sharp X_{i}(j)$.
\item If $A,B$ are formulas of locations $X,Y$ such that $X\cap Y=\emptyset$, then $A\otimes B$ (resp. $A\parr B$) is a formula of location $X\cup Y$;
\item The constants $\cond{1}$ and $\bot$ are formulas of location $\emptyset$.
\end{itemize}
If $A$ is a formula, we will denote by $\sharp A$ the location of $A$. We also define sequents $\vdash \Gamma$ of locMLL when the formulas of $\Gamma$ have pairwise disjoint locations\footnote{This is a natural condition, since the comma in the sequent is interpreted as a $\parr$.}.
\end{definition}

\begin{definition}[Formulas of MLL+MIX]
We define the formulas of MLL by the following grammar:
\begin{equation}
F:=X_{i}~|~X_{i}^{\pol}~|~F\otimes F~|~F\parr F~|~\bot~|~\cond{1}\nonumber
\end{equation}
where $X_{i}$ is a variable name.
\end{definition}

\begin{remark}
In both locMLL and MLL+MIX, the negation of a composed formula is defined by using De Morgan's identities to push down the negation on atoms.
\end{remark}

\begin{remark}
Notice that to any locMLL formula there corresponds a formula of MLL obtained by simply replacing the variables by their names, i.e. applying $X_{i}(j)\mapsto X_{i}$. Conversely, we can \emph{localize} any MLL formula: if $e$ is an enumeration of the occurences of variable names in $\vdash \Gamma$, we can define a locMLL sequent $\vdash \Gamma^{e}$. For instance, the MLL formula $X_{1}\otimes (X_{1}\otimes X_{2})\multimap (X_{1}\otimes X_{2})\otimes X_{1}$ can be localized as $X_{1}(1)\otimes (X_{1}(2)\otimes X_{2}(1))\multimap (X_{1}(3)\otimes X_{2}(2))\otimes X_{1}(4)$, or as $X_{1}(42)\otimes (X_{1}(78)\otimes X_{2}(7))\multimap (X_{1}(99)\otimes X_{2}(88))\otimes X_{1}(1324)$,etc.
\end{remark}

\begin{definition}[Proofs of locMLL]
A proof of locMLL is a proof obtained using the sequent calculus rules of Figure \ref{locMLL}, and such that every variable $X_{i}(j)$ and every negation of variable $(X_{i}(j))^{\pol}$ appear in at most one axiom rule.
\end{definition}

\begin{definition}[Proofs of MLL+MIX]
A proof of MLL+MIX is a proof obtained using the sequent calculus rules of Figure \ref{MLL}.
\end{definition}

\begin{remark}
To each locMLL proof corresponds a MLL proof by replacing each localized variable in the proof by its name. Conversely, being given an enumeration $e$ of the occurences of variable names in the axiom rules of a MLL+MIX proof $\pi$, we can spread the enumeration to the whole derivation tree to obtain a locMLL proof $\pi^{e}$. For instance, the proof of Figure \ref{mllproof} can be localized as one of the proofs of Figure \ref{mlllocal}.
\end{remark}

\begin{figure}
\begin{center}
$\begin{array}{cc}
\begin{minipage}{5cm}
\begin{prooftree}
\AxiomC{}
\RightLabel{\scriptsize{Ax $(j\neq j')$}}
\UnaryInfC{$\vdash X_{i}(j)^{\pol},X_{i}(j')$}
\end{prooftree}
\end{minipage}
&
\begin{minipage}{5cm}
\begin{prooftree}
\AxiomC{$\vdash A,\Delta$}
\AxiomC{$\vdash A^{\pol},\Gamma$}
\RightLabel{\scriptsize{Cut\footnotemark\addtocounter{footnote}{-1}}}
\BinaryInfC{$\vdash \Delta,\Gamma$}
\end{prooftree}
\end{minipage}
\\
\begin{minipage}{5,5cm}
\begin{prooftree}
\AxiomC{$\vdash A,\Delta$}
\AxiomC{$\vdash B,\Gamma$}
\RightLabel{\scriptsize{$\otimes\footnotemark\addtocounter{footnote}{-1}$}}
\BinaryInfC{$\vdash A\otimes B,\Delta,\Gamma$}
\end{prooftree}
\end{minipage}
&
\begin{minipage}{5cm}
\begin{prooftree}
\AxiomC{}
\RightLabel{\scriptsize{$\cond{1}$}}
\UnaryInfC{$\vdash \cond{1}$}
\end{prooftree}
\end{minipage}
\\
\begin{minipage}{5cm}
\begin{prooftree}
\AxiomC{$\vdash A,B,\Gamma$}
\RightLabel{\scriptsize{$\parr$}}
\UnaryInfC{$\vdash A\parr B,\Gamma$}
\end{prooftree}
\end{minipage}
&
\begin{minipage}{5cm}
\begin{prooftree}
\AxiomC{$\vdash \Gamma$}
\RightLabel{\scriptsize{$\bot$}}
\UnaryInfC{$\vdash \bot,\Gamma$}
\end{prooftree}
\end{minipage}
\\
\multicolumn{2}{c}{
\begin{minipage}{5cm}
\begin{prooftree}
\AxiomC{$\vdash \Gamma$}
\AxiomC{$\vdash \Delta$}
\RightLabel{\scriptsize{mix\footnotemark}}
\BinaryInfC{$\vdash \Gamma,\Delta$}
\end{prooftree}
\end{minipage}
}
\end{array}
$\end{center}
\caption{Localized sequent calculus locMLL}\label{locMLL}
\end{figure}
\footnotetext{We need $(\sharp A\cup\sharp\Delta)\cap (\sharp B\cup\sharp\Gamma)=\emptyset$ to apply the $\otimes$ rule and $\sharp\Delta\cap\sharp\Gamma=\emptyset$ to apply the cut and mix rules.}
\begin{figure}
\begin{center}
$\begin{array}{cc}
\begin{minipage}{5cm}
\begin{prooftree}
\AxiomC{}
\RightLabel{\scriptsize{Ax}}
\UnaryInfC{$\vdash X_{i}^{\pol},X_{i}$}
\end{prooftree}
\end{minipage}
&
\begin{minipage}{5cm}
\begin{prooftree}
\AxiomC{$\vdash A,\Delta$}
\AxiomC{$\vdash A^{\pol},\Gamma$}
\RightLabel{\scriptsize{Cut}}
\BinaryInfC{$\vdash \Delta,\Gamma$}
\end{prooftree}
\end{minipage}
\\
\begin{minipage}{5,5cm}
\begin{prooftree}
\AxiomC{$\vdash A,\Delta$}
\AxiomC{$\vdash B,\Gamma$}
\RightLabel{\scriptsize{$\otimes$}}
\BinaryInfC{$\vdash A\otimes B,\Delta,\Gamma$}
\end{prooftree}
\end{minipage}
&
\begin{minipage}{5cm}
\begin{prooftree}
\AxiomC{}
\RightLabel{\scriptsize{$\cond{1}$}}
\UnaryInfC{$\vdash \cond{1}$}
\end{prooftree}
\end{minipage}
\\
\begin{minipage}{5cm}
\begin{prooftree}
\AxiomC{$\vdash A,B,\Gamma$}
\RightLabel{\scriptsize{$\parr$}}
\UnaryInfC{$\vdash A\parr B,\Gamma$}
\end{prooftree}
\end{minipage}
&
\begin{minipage}{5cm}
\begin{prooftree}
\AxiomC{$\vdash \Gamma$}
\RightLabel{\scriptsize{$\bot$}}
\UnaryInfC{$\vdash \bot,\Gamma$}
\end{prooftree}
\end{minipage}
\\
\multicolumn{2}{c}{
\begin{minipage}{5cm}
\begin{prooftree}
\AxiomC{$\vdash \Gamma$}
\AxiomC{$\vdash \Delta$}
\RightLabel{\scriptsize{mix}}
\BinaryInfC{$\vdash \Gamma,\Delta$}
\end{prooftree}
\end{minipage}
}
\end{array}
$\end{center}
\caption{Sequent calculus of MLL+MIX}\label{MLL}
\end{figure}

\begin{figure}
\begin{prooftree}
\AxiomC{}
\RightLabel{\scriptsize{Ax}}
\UnaryInfC{$\vdash X_{1},X_{1}^{\pol}$}
\AxiomC{}
\RightLabel{\scriptsize{Ax}}
\UnaryInfC{$\vdash X_{1},X_{1}^{\pol}$}
\RightLabel{\scriptsize{Cut}}
\BinaryInfC{$\vdash X_{1},X_{1}^{\pol}$}
\AxiomC{}
\RightLabel{\scriptsize{Ax}}
\UnaryInfC{$\vdash X_{2},X_{2}^{\pol}$}
\RightLabel{\scriptsize{$\otimes$}}
\BinaryInfC{$\vdash X_{1}^{\pol},X_{2}^{\pol},X_{1}\otimes X_{2}$}
\RightLabel{\scriptsize{$\parr$}}
\UnaryInfC{$\vdash X_{1}^{\pol}\parr X_{2}^{\pol},X_{1}\otimes X_{2}$}
\end{prooftree}
\caption{A proof of MLL+MIX}\label{mllproof}
\end{figure}

\begin{figure}
\begin{prooftree}
\AxiomC{}
\RightLabel{\scriptsize{Ax}}
\UnaryInfC{$\vdash X_{1}(3),X_{1}(97)^{\pol}$}
\AxiomC{}
\RightLabel{\scriptsize{Ax}}
\UnaryInfC{$\vdash X_{1}(97),X_{1}(23)^{\pol}$}
\RightLabel{\scriptsize{Cut}}
\BinaryInfC{$\vdash X_{1}(3),X_{1}(23)^{\pol}$}
\AxiomC{}
\RightLabel{\scriptsize{Ax}}
\UnaryInfC{$\vdash X_{2}(7),X_{2}(14)^{\pol}$}
\RightLabel{\scriptsize{$\otimes$}}
\BinaryInfC{$\vdash X_{1}(23)^{\pol},X_{2}(12)^{\pol},X_{1}(3)\otimes X_{2}(7)$}
\RightLabel{\scriptsize{$\parr$}}
\UnaryInfC{$\vdash X_{1}(23)^{\pol}\parr X_{2}(12)^{\pol},X_{1}(3)\otimes X_{2}(7)$}
\end{prooftree}
\caption{A proof of locMLL+MIX}\label{mlllocal}
\end{figure}

\begin{definition}[Interpretations]
We define a \emph{basis of interpretation} as a function $\Phi$ which associates to each variable name $X_{i}$ a conduct of carrier $\{0,\dots,n_{i}-1\}$.
\end{definition}

\begin{definition}[Interpretation of formulas of locMLL]
Let $\Phi$ be a basis of interpretation. We define the interpretation $I_{\Phi}(F)$ along $\Phi$ of a formula $F$ inductively:
\begin{itemize}
\item If $F=X_{i}(j)$, then $I_{\Phi}(F)$ is the delocation (i.e. a conduct) of $\Phi(X_{i})$ along the bijection $x\mapsto (i,jn_{i}+x)$;
\item If $F=(X_{i}(j))^{\pol}$, we define the conduct $I_{\Phi}(F)=(I_{\Phi}(X_{i}(j)))^{\pol}$;
\item If $F=\cond{1}$ (resp. $F=\bot$), we define $I_{\Phi}(F)$ as the conduct $\cond{1}$ (resp. $\bot$);
\item If $F=A\otimes B$, we define the conduct $I_{\Phi}(F)=I_{\Phi}(A)\otimes I_{\Phi}(B)$;
\item If $F=A\parr B$, we define the conduct $I_{\Phi}(F)=I_{\Phi}(A)\parr I_{\Phi}(B)$.
\end{itemize}
Moreover, a sequent $\vdash \Gamma$ will be interpreted as the $\parr$ of the formulas of $\Gamma$, which we will denote by $\bigparr \Gamma$.
\end{definition}

\begin{definition}[Interpretation of proofs of locMLL]\label{interpretproofs}
Let $\Phi$ be a basis of interpretation. We define the interpretation of a proof (a project) $I_{\Phi}(\pi)$ inductively as follows:
\begin{itemize}
\item if $\pi$ consists solely of an axiom rule introducing $\vdash (X_{i}(j))^{\pol},X_{i}(j')$, we define $I_{\Phi}(\pi)$ as the $\de{Fax}$ obtained from the bijection $(i,jn_{i}+x)\mapsto (i,j'n_{i}+x)$;
\item if $\pi$ consists solely of a $\cond{1}$ rule, we define $I_{\Phi}(\cond{1})=(0,0)$, where $0$ denotes the empty graph on an empty set of vertices;
\item if $\pi$ is obtained from $\pi'$ by a $\parr$ rule, then $I_{\Phi}(\pi)=I_{\Phi}(\pi')$;
\item if $\pi$ is obtained from $\pi_{1}$ and $\pi_{2}$ by a $\otimes$ rule, we define $I_{\Phi}(\pi)=I_{\Phi}(\pi_{1})\otimes I_{\Phi}(\pi')$;
\item if $\pi$ is obtained from $\pi_{1}$ and $\pi_{2}$ by a cut rule, we define $I_{\Phi}(\pi)=I_{\Phi}(\pi_{1})\plug I_{\Phi}(\pi_{2})$;
\item if $\pi$ is obtained from $\pi_{1}$ and $\pi_{2}$ by a mix rule, we define $I_{\Phi}(\pi)=I_{\Phi}(\pi_{1})\otimes I_{\Phi}(\pi_{2})$;
\item if $\pi$ is obtained from $\pi'$ by a $\bot$ rule, we define $I_{\Phi}(\pi)=I_{\Phi}(\pi')$.
\end{itemize}
\end{definition}

\begin{proposition}[Full localized soundness]\label{locsound}
Let $\Phi$ be a basis of interpretation. If $\pi$ is a proof of conclusion $\vdash \Delta$, then $I_{\Phi}(\pi)$ is a successful project in the conduct $I_{\Phi}(\vdash\Delta)$.
\end{proposition}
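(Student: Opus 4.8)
The plan is to proceed by induction on the structure of the proof $\pi$, establishing simultaneously that (i) $I_{\Phi}(\pi)$ is a project whose carrier is exactly the location $\sharp(\bigparr\Delta)$ of the conclusion sequent, (ii) $I_{\Phi}(\pi)$ belongs to the conduct $I_{\Phi}(\vdash\Delta)$, and (iii) $I_{\Phi}(\pi)$ is \emph{successful} in the sense of the definition preceding Proposition \ref{characterization} (wager $0$, $\what{A}$ symmetric with $\what{A}^{3}=\what{A}$ and $Tr(A)=0$). Carrying (i) and (iii) alongside (ii) is what makes the induction go through, since successfulness is a property of the underlying graph that must be re-verified at each rule.

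First I would treat the base cases. For the axiom rule, $I_{\Phi}(\pi)$ is a $\de{Fax}_{\phi}$; its graph is by construction a disjoint union of transpositions of weight $1$ (each pair $(a,\phi(a))$, $(\phi(a),a)$), so it is symmetric, idempotent up to $\what{A}^{3}=\what{A}$, and traceless, hence successful; membership in the conduct $I_{\Phi}(X_i(j)^{\pol})\multimap I_{\Phi}(X_i(j'))$ follows from the Proposition stating $\de{Fax}_{\phi}\in\cond{A}\multimap\phi(\cond{A})$. For the $\cond{1}$ rule, $I_{\Phi}(\cond{1})=(0,0)$ is trivially successful and lies in $\cond{1}$ by definition of the tensor unit. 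The $\parr$ and $\bot$ rules leave the interpretation unchanged, so successfulness is inherited directly, and membership follows because $\cond{A\parr B}=\cond{A}^{\pol}\multimap\cond{B}$ and the $\bot$ case adds an empty-carrier factor; here I would check that passing from $\vdash A,B,\Gamma$ to $\vdash A\parr B,\Gamma$ does not alter the interpreted conduct, which is immediate from $I_{\Phi}(A\parr B)=I_{\Phi}(A)\parr I_{\Phi}(B)$ and the associativity/commutativity of $\parr$ on conducts.

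Next come the binary rules. For $\otimes$ and mix, $I_{\Phi}(\pi)=I_{\Phi}(\pi_1)\otimes I_{\Phi}(\pi_2)$ with disjoint carriers; successfulness is preserved because the disjoint union of two disjoint unions of transpositions is again one, the trace stays $0$, and the wager is $\sca{a}{b}=a+b=0$ since the carriers are disjoint. Membership for $\otimes$ is immediate from the definition $\cond{A\otimes B}\supseteq\cond{A\odot B}$, and for mix from Proposition \ref{mixrule} giving $\cond{A\otimes B}\subset\cond{A\parr B}$. The genuinely substantial case is the cut rule, where $I_{\Phi}(\pi)=I_{\Phi}(\pi_1)\plug I_{\Phi}(\pi_2)$: here I would invoke Theorem \ref{compositiontruth} (Compositionality) directly, since by induction both premises are successful and the cut of a successful project in $\cond{A\multimap B}$ (read off from $\vdash A^{\pol},\Gamma$, using duality to view it as an implication) with a successful project in $\cond{A}$ is successful. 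Lemma \ref{appli} is the tool ensuring the contraction $\what{\cdot}$ commutes with $\plug$, which is exactly what Compositionality rests upon.

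The main obstacle will be the cut case — specifically, matching the sequent-calculus cut, which eliminates a formula $A$ against $A^{\pol}$ appearing in the two premises, with the project-level operation $\de{f}\plug\de{a}$ and its hypothesis that the cut is defined (wager finite). I must verify that the carriers of the two interpreted premises agree on the cut locations and are disjoint elsewhere, so that the symmetric-difference carrier $V_F\Delta V_A$ of $F\plug A$ coincides with the location of the conclusion $\vdash\Delta,\Gamma$; this is where the locativity discipline of locMLL (pairwise disjoint locations in sequents, each localized variable used in at most one axiom) is essential, and I would phrase the inductive hypothesis to record the carrier explicitly so that this bookkeeping is available. Once the carriers are checked, Compositionality supplies successfulness and Theorem \ref{duality} supplies membership, completing the induction.
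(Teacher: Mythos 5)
Your proposal is correct and follows essentially the same route as the paper's own proof: an induction on the last rule with the identical case analysis, the axiom handled by the $\de{Fax}$ proposition, the cut by Theorem \ref{compositiontruth}, the mix by Proposition \ref{mixrule}, and the $\parr$ and $\bot$ rules leaving the interpretation unchanged. The only spot where the paper is marginally more explicit is the $\otimes$ rule with nonempty contexts, where the fact actually needed is the inclusion $(A_{1}\parr(\bigparr\Gamma_{1}))\otimes(A_{2}\parr(\bigparr\Gamma_{2}))\subseteq(A_{1}\otimes A_{2})\parr(\bigparr\Gamma)$ rather than merely $\cond{A\odot B}\subseteq\cond{A}\otimes\cond{B}$; your explicit carrier bookkeeping is a harmless refinement of what the paper leaves implicit.
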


\begin{proof}
We prove it by induction on the last rule of $\pi$. By definition, the interpretation of the axiom rule introducing $\vdash (X_{i}(j))^{\pol},X_{i}(j')$ gives a successful project in $I_{\Phi}(X_{i}(j))\multimap I_{\Phi}(X_{i}(j'))$ which is equal to $I_{\Phi}((X_{i}(j))^{\pol}\parr X_{i}(j'))$.
Then:
\begin{itemize}
\item if $\pi$ is the $\cond{1}$ rule, then $\pi=(0,0)$ is successful and in $\cond{1}$;
\item if the last rule of $\pi$ is a $\otimes$ rule between $\pi_{1}$ and $\pi_{2}$ with $\pi_{i}$ of conclusion $\vdash A_{i},\Gamma_{i}$, then $\pi=\pi_{1}\otimes\pi_{2}$, which is a successful project in $(A_{1}\parr (\bigparr\Gamma_{1}))\otimes(A_{2}\parr (\bigparr \Gamma_{2}))\subseteq(A_{1}\otimes A_{2})\parr (\bigparr \Gamma)$;
\item if the last rule of $\pi$ is a $\parr$ rule, then $I_{\Phi}(\pi)\in I_{\Phi}(A_{1}\parr A_{2}\parr(\bigparr \Gamma))$ by definition;
\item if $\pi$ ends with a $\bot$ rule on $\pi'$, the interpretation of $\pi$ is the same as the interpretation of $\pi'$, and the interpretation of the formula $\bigparr \Gamma$ is equal to the interpretation of the formula $\bot\parr(\bigparr \Gamma)$ since $\bot$ is the unit of $\parr$;
\item if $\pi$ is obtained through a cut rule between $\pi_{1}$ and $\pi_{2}$, of respective conclusions $\vdash A,\Gamma_{1}$ and $\vdash A^{\pol},\Gamma_{2}$, then Theorem \ref{compositiontruth} tells us that $I_{\Phi}(\pi_{1})\plug I_{\Phi}(\pi_{2})$ is a successful project in $\bigparr \Gamma$;
\item if $\pi$ is obtained from $\pi_{1}$ and $\pi_{2}$ of respective conclusions $\vdash \Gamma_{1}$ and $\vdash \Gamma_{2}$ by a mix rule, we know that $I_{\Phi}(\pi)$ is a successful project in $(\bigparr \Gamma_{1})\otimes(\bigparr \Gamma_{2})$, which is included in the conduct $\bigparr \Gamma$ from Proposition \ref{mixrule}.
\end{itemize}
\end{proof}

\begin{theorem}[Full Soundness of MLL+MIX]
Let $\Phi$ be a basis of interpretation, $\pi$ a proof of MLL+MIX of conclusion $\vdash \Gamma$, and $e$ an enumeration of occurences of variables in the axioms of $\pi$. Then $I_{\Phi}(\pi^{e})$ is a successful project in $I_{\Phi}(\vdash \Gamma^{e})$.
\end{theorem}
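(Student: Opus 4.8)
The plan is to deduce this from the localized soundness result, Proposition \ref{locsound}, which already handles arbitrary locMLL proofs. Once we know that $\pi^{e}$ is a genuine proof of locMLL whose conclusion is exactly $\vdash\Gamma^{e}$, the theorem is obtained by a single application of that proposition, since $I_{\Phi}(\pi^{e})$ and $I_{\Phi}(\vdash\Gamma^{e})$ are computed by the same clauses in either setting.

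First I would make precise the localization procedure described after the definition of locMLL proofs. Given the MLL+MIX proof $\pi$ and the enumeration $e$ of the occurrences of variable names in its axioms, I would propagate $e$ downward through the derivation tree: each axiom $\vdash X_{i}^{\pol},X_{i}$ is turned into a localized axiom $\vdash X_{i}(j)^{\pol},X_{i}(j')$, where $j$ and $j'$ are the indices $e$ assigns to its two occurrences, and every rule below is relabelled accordingly. I would then verify that the outcome is a legal locMLL derivation: the two occurrences of one axiom always receive distinct indices, so the side condition $j\neq j'$ holds; each signed localized variable $X_{i}(j)$ (resp. $(X_{i}(j))^{\pol}$) is produced by a single occurrence and hence appears in at most one axiom rule, as the definition of locMLL proofs demands; and the disjointness conditions attached to the $\otimes$, cut and mix rules follow from the tree structure of $\pi$ together with the fact that occurrences in disjoint subtrees are sent to disjoint locations.

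Second, I would observe that, by construction, the conclusion of $\pi^{e}$ is precisely the localized sequent $\vdash\Gamma^{e}$ obtained by localizing $\vdash\Gamma$ along $e$, and that $I_{\Phi}(\pi^{e})$ is computed exactly as in Definition \ref{interpretproofs}. Applying Proposition \ref{locsound} to the locMLL proof $\pi^{e}$ then gives immediately that $I_{\Phi}(\pi^{e})$ is a successful project in the conduct $I_{\Phi}(\vdash\Gamma^{e})$, which is the desired statement.

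The only real difficulty lies in the first step, namely checking that the spreading of $e$ yields a well-formed locMLL proof. The delicate point is the treatment of the cut rule: for a localized cut to be well-formed one needs the cut formula $A$ and its dual $A^{\pol}$ to carry the same location, $\sharp A = \sharp(A^{\pol})$, so the occurrences of $A$ in the left premise and the matching occurrences of $A^{\pol}$ in the right premise must be assigned identical localized variables (as in the example of Figure \ref{mlllocal}, where the location $97$ is shared by $X_{1}(97)$ and $(X_{1}(97))^{\pol}$). This forces the enumeration to be spread consistently across cuts rather than chosen freely at the leaves; once this compatibility is built into the localization, no further work is required, for all the genuine semantic content---the stability of successful projects under the tensor, cut and mix rules---has already been discharged in Proposition \ref{locsound}.
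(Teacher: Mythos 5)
Your proposal is correct and takes essentially the same route as the paper, whose entire proof is the observation that the theorem is an immediate corollary of Proposition \ref{locsound}. Your additional verification that the spreading of $e$ yields a well-formed locMLL proof---in particular that the enumeration must identify the locations of cut-linked occurrences so that $\sharp A=\sharp(A^{\pol})$ at each cut, as in Figure \ref{mlllocal}---merely makes explicit a point the paper leaves implicit in its remark on localizing MLL+MIX proofs.
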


\begin{proof} It is an immediate corollary of Proposition \ref{locsound}.
\end{proof}

\begin{lemma}\label{lemcutelim}
If $\de{a}_{i}$ ($i=1,2,3$) are projects, then:
\begin{equation}
(\de{a}_{1}\otimes \de{a_{2}})\plug \de{a}_{3}=(\de{a}_{1}\plug\de{a}_{3})\plug\de{a}_{2}\nonumber
\end{equation}
\end{lemma}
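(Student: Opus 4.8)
The plan is to treat a project as a pair (wager, graph) and to prove the identity componentwise: first that both sides carry the same underlying graph, then that they carry the same wager. Write $\de{a}_{i}=(a_{i},A_{i})$ with carrier $V_{i}$. For $\de{a}_{1}\otimes\de{a}_{2}$ to be defined we have $V_{1}\cap V_{2}=\emptyset$, and throughout I assume the two cuts on each side are defined (their measurements are finite), as is implicit in the statement. Recall that $\de{a}_{1}\otimes\de{a}_{2}=(a_{1}+a_{2},A_{1}\cup A_{2})$, that $\de{f}\plug\de{g}=(f+g+\scalar{F,G},F\plug G)$, and that $\scalar{\cdot,\cdot}$ is visibly symmetric and $\plug$ visibly commutative, both being immediate from the color-symmetry of the alternating-path condition.

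For the graphs, the underlying graph of the left-hand side is $(A_{1}\cup A_{2})\plug A_{3}$ and that of the right-hand side is $(A_{1}\plug A_{3})\plug A_{2}$. Since $V_{1}\cap V_{2}=\emptyset$, the remark following the definition of reduction gives $A_{1}\cup A_{2}=A_{1}\plug A_{2}$. I would then chain: $(A_{1}\plug A_{2})\plug A_{3}=A_{1}\plug(A_{2}\plug A_{3})$ by associativity (Proposition \ref{assocmll}, whose hypothesis $V_{1}\cap V_{2}\cap V_{3}=\emptyset$ holds already because $V_{1}\cap V_{2}=\emptyset$); then $A_{1}\plug(A_{2}\plug A_{3})=A_{1}\plug(A_{3}\plug A_{2})$ by commutativity of reduction; and finally $A_{1}\plug(A_{3}\plug A_{2})=(A_{1}\plug A_{3})\plug A_{2}$ by a second application of Proposition \ref{assocmll} (again the triple intersection is empty). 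This settles the graph component.

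For the wagers, unwinding the definitions gives the left-hand wager $(a_{1}+a_{2})+a_{3}+\scalar{A_{1}\cup A_{2},A_{3}}$ and the right-hand wager $(a_{1}+a_{3}+\scalar{A_{1},A_{3}})+a_{2}+\scalar{A_{1}\plug A_{3},A_{2}}$. After cancelling $a_{1}+a_{2}+a_{3}$, it remains to prove $\scalar{A_{1}\cup A_{2},A_{3}}=\scalar{A_{1},A_{3}}+\scalar{A_{1}\plug A_{3},A_{2}}$. Reading the left side as $\scalar{A_{3},A_{1}\cup A_{2}}$ via symmetry, this is the adjunction (Theorem \ref{adjunctweight}) with $F=A_{3}$, $G=A_{1}$, $H=A_{2}$, which yields $\scalar{A_{3},A_{1}}+\scalar{A_{3}\plug A_{1},A_{2}}$; one more use of symmetry together with $A_{3}\plug A_{1}=A_{1}\plug A_{3}$ rewrites this as exactly $\scalar{A_{1},A_{3}}+\scalar{A_{1}\plug A_{3},A_{2}}$.

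The main obstacle is precisely the hypothesis of the adjunction, which asks for $V_{G}\cup V_{H}\subseteq V_{F}$, i.e. $V_{1}\cup V_{2}\subseteq V_{3}$; this is \emph{not} implied by $V_{1}\cap V_{2}=\emptyset$. I would handle it either by building the containment into the cut-elimination context in which the lemma is applied, or, to keep the statement fully general, by re-running the short bijection of Proposition \ref{adjunct} directly. The key observation making the latter work without any containment is that in a $1$-circuit of $(A_{1}\cup A_{2})\bicol A_{3}$ every $A_{2}$-edge is sandwiched between two $A_{3}$-edges by alternation, so its endpoints automatically lie in $V_{2}\cap V_{3}\subseteq V_{3}\setminus V_{1}\subseteq V_{1}\Delta V_{3}$; hence the $A_{1}/A_{3}$-segments between consecutive $A_{2}$-edges are genuine edges of $A_{1}\plug A_{3}$, and the weight-preserving bijection between the two families of $1$-circuits goes through verbatim. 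Everything else is a routine unwinding of the definitions.
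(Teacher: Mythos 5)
Your proof is correct and follows the paper's skeleton---prove the identity componentwise, graph part plus wager part, with the wager part reduced to $\scalar{A_{1}\cup A_{2},A_{3}}=\scalar{A_{1},A_{3}}+\scalar{A_{1}\plug A_{3},A_{2}}$---but it diverges in two places worth noting. For the graph identity the paper argues directly, exhibiting a weight-preserving bijection between the edge sets of $(A_{1}\cup A_{2})\plug A_{3}$ and $(A_{1}\plug A_{3})\plug A_{2}$ (an edge of the latter is an alternation of $A_{2}$-edges and $A_{1}$/$A_{3}$-alternating paths, hence splices into a single path alternating between $A_{3}$ and $A_{1}\cup A_{2}$); you instead chain the remark that $\plug$ coincides with $\cup$ on disjoint carriers with commutativity of reduction and two applications of Proposition \ref{assocmll}, whose hypothesis (empty triple intersection) indeed follows from $V_{1}\cap V_{2}=\emptyset$. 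This is a cleaner route that reuses already-proved machinery, at the cost of hiding the explicit edge correspondence the paper reads off. More importantly, the obstacle you flag is real: the paper simply writes ``using the adjunction'', i.e.\ it applies Theorem \ref{adjunctweight} with $F=A_{3}$, $G=A_{1}$, $H=A_{2}$, even though the stated hypothesis $V_{G}\cup V_{H}\subseteq V_{F}$ (here $V_{1}\cup V_{2}\subseteq V_{3}$) is not implied by the lemma's hypotheses and actually fails in the very cut-elimination steps where the lemma is invoked, since the contexts $\Delta_{1},\Delta_{2}$ lie outside the carrier of $\de{a}_{3}$. Your sandwich observation---alternation forces every $A_{2}$-edge of a $1$-circuit in $(A_{1}\cup A_{2})\bicol A_{3}$ to have its endpoints in $V_{2}\cap V_{3}\subseteq V_{1}\Delta V_{3}$, so the intervening segments are genuine edges of $A_{1}\plug A_{3}$, and the bijection of Proposition \ref{adjunct} only ever consumes $V_{G}\cap V_{H}=\emptyset$---is exactly what is needed to make the wager computation legitimate in full generality. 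So your version is not merely an alternative: it closes a small gap in the paper's own argument.
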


\begin{proof}
Let $\de{a}_{i}=(a_{i},A_{i})$ be projects. First, we notice that $(A_{1}\cup A_{2})\plug A_{3}=(A_{1}\plug A_{3})\plug A_{2}$. Indeed, both graphs are defined on the same set of vertices, and moreover there is a one-to-one function (preserving weights) between their sets of edges: an edge $\{e_{i}\}_{0\leqslant i\leqslant n}$ in the graph $(A_{1}\plug A_{3})\plug A_{2}$ is an alternation of edges in $A_{2}$ and paths alternating between $A_{1}$ and $A_{3}$, and therefore corresponds to one (and exactly one) path alternating between $A_{3}$ and $A_{1}\cup A_{2}$. Then, using the adjunction (we write $\bar{a}=a_{1}+a_{2}+a_{3}$):
\begin{eqnarray}
(\de{a}_{1}\otimes\de{a}_{2})\plug\de{a}_{3}&=&(\bar{a}+\scalar{A_{1}\cup A_{2},A_{3}}, (A_{1}\cup A_{2})\plug A_{3})\nonumber\\
&=&(\bar{a}+\scalar{A_{1},A_{3}}+\scalar{A_{1}\plug A_{3},A_{2}}, (A_{1}\plug A_{3})\plug A_{2})\nonumber\\
&=&(a_{1}+a_{3}+\scalar{A_{1},A_{3}},A_{1}\plug A_{3})\plug (a_{2},A_{2})\nonumber\\
&=&((a_{1},A_{1})\plug (a_{2},A_{2}))\plug (a_{3},A_{3})\nonumber
\end{eqnarray}
\end{proof}

\begin{corollary}\label{commut}
If $\de{a}_{i}$ ($i=1,2,3$) are projects, where $\de{a}_{1}$ and $\de{a}_{2}$ are of disjoint carriers and $\de{a}_{1}$ and $\de{a}_{3}$ are also of disjoint carriers, then:
\begin{equation}
(\de{a}_{1}\otimes \de{a_{2}})\plug \de{a}_{3}=\de{a}_{1}\otimes(\de{a}_{2}\plug\de{a}_{3})\nonumber
\end{equation}
\end{corollary}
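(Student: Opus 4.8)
The plan is to evaluate both sides as explicit projects---pairs $(\text{wager},\text{graph})$---using the definitions of tensor and cut, and then to check that the two wagers, the two carriers, and the two underlying graphs coincide. Writing $\de{a}_{i}=(a_{i},A_{i})$ with carrier $V_{A_{i}}$, the left-hand side unfolds (tensor of disjoint carriers followed by a cut) to the project with wager $a_{1}+a_{2}+a_{3}+\scalar{A_{1}\cup A_{2},A_{3}}$, graph $(A_{1}\cup A_{2})\plug A_{3}$ and carrier $(V_{A_{1}}\cup V_{A_{2}})\Delta V_{A_{3}}$, whereas the right-hand side unfolds (cut followed by a tensor) to the project with wager $a_{1}+a_{2}+a_{3}+\scalar{A_{2},A_{3}}$, graph $A_{1}\cup(A_{2}\plug A_{3})$ and carrier $V_{A_{1}}\cup(V_{A_{2}}\Delta V_{A_{3}})$. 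Here I first note that the hypotheses $V_{A_{1}}\cap V_{A_{2}}=\emptyset$ and $V_{A_{1}}\cap V_{A_{3}}=\emptyset$ guarantee both the tensor on the right (since $V_{A_{1}}$ is then disjoint from $V_{A_{2}}\cup V_{A_{3}}$, hence from $V_{A_{2}}\Delta V_{A_{3}}$) and the equality of the two carriers.

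The wager identity $\scalar{A_{1}\cup A_{2},A_{3}}=\scalar{A_{2},A_{3}}$ is where I would spend the effort. The key observation is that no alternating $1$-circuit of $(A_{1}\cup A_{2})\bicol A_{3}$ can traverse an edge of $A_{1}$: such an edge has both endpoints in $V_{A_{1}}$, and since an alternating circuit must place an $A_{3}$-edge immediately before and after it, one of the shared vertices would have to lie in $V_{A_{3}}$, contradicting $V_{A_{1}}\cap V_{A_{3}}=\emptyset$. Hence $\cycl{A_{1}\cup A_{2},A_{3}}=\cycl{A_{2},A_{3}}$ with matching weights, and the two wagers agree.

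The same inertness argument settles the graphs. An alternating path of $(A_{1}\cup A_{2})\bicol A_{3}$ containing an $A_{1}$-edge cannot have length $\geqslant 2$ (a neighbouring edge would be an $A_{3}$-edge sharing a vertex with $V_{A_{1}}$, impossible), so it is a single $A_{1}$-edge; conversely every $A_{1}$-edge is such a path, with endpoints in $V_{A_{1}}\subseteq(V_{A_{1}}\cup V_{A_{2}})\Delta V_{A_{3}}$. Every other alternating path uses only edges of $A_{2}$ and $A_{3}$, so it is exactly an alternating path between $A_{2}$ and $A_{3}$, and a short check of the endpoint condition shows these are precisely the edges of $A_{2}\plug A_{3}$. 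Thus the edge set of $(A_{1}\cup A_{2})\plug A_{3}$ is $E_{A_{1}}\disjun E_{A_{2}\plug A_{3}}$, which is exactly the edge set of $A_{1}\cup(A_{2}\plug A_{3})$; sources, targets and weights agree edge by edge, so the graphs are equal and the identity follows.

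The main obstacle is precisely this inertness of the $A_{1}$-component, i.e.\ the fact that $V_{A_{1}}$ being disjoint from both $V_{A_{2}}$ and $V_{A_{3}}$ forces every $A_{1}$-edge to sit untouched through both the cut and the tensor. An alternative, shorter-looking route would be to apply Lemma \ref{lemcutelim} to rewrite the left-hand side as $(\de{a}_{1}\plug\de{a}_{3})\plug\de{a}_{2}$ and to use that $\de{a}_{1}\plug\de{a}_{3}=\de{a}_{1}\otimes\de{a}_{3}$ (the carriers being disjoint, the reduction is the union and $\scalar{A_{1},A_{3}}=0$); but re-expressing $(\de{a}_{1}\otimes\de{a}_{3})\plug\de{a}_{2}$ as $\de{a}_{1}\otimes(\de{a}_{2}\plug\de{a}_{3})$ still rests on the very same disjointness fact, so the combinatorial core is unavoidable either way.
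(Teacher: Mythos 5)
Your proof is correct, but it takes a genuinely different route from the paper's. The paper disposes of this corollary in two lines: it invokes Lemma \ref{lemcutelim} (implicitly together with the symmetry of $\otimes$ and of $\plug$, both of which are symmetric by definition) to rewrite $(\de{a}_{1}\otimes\de{a}_{2})\plug\de{a}_{3}$ as $\de{a}_{1}\plug(\de{a}_{2}\plug\de{a}_{3})$, and then uses only the easy observation that a cut along disjoint carriers \emph{is} the tensor: when $V_{F}\cap V_{G}=\emptyset$ the reduction $F\plug G$ equals $F\cup G$ (a remark already made after the definition of reduction), and $\cycl{F,G}=\emptyset$ so $\scalar{F,G}=0$ and the wagers agree. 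You instead verify everything from scratch, unfolding both sides into explicit wager/graph pairs and proving the inertness of the $A_{1}$-component: no alternating circuit, and no alternating path of length $\geqslant 2$, in $(A_{1}\cup A_{2})\bicol A_{3}$ can traverse an $A_{1}$-edge, since an alternating neighbour would be an $A_{3}$-edge sharing a vertex of $V_{A_{1}}$ with $V_{A_{1}}\cap V_{A_{3}}=\emptyset$. This correctly yields $\cycl{A_{1}\cup A_{2},A_{3}}=\cycl{A_{2},A_{3}}$ and $(A_{1}\cup A_{2})\plug A_{3}=A_{1}\cup(A_{2}\plug A_{3})$, with the carrier bookkeeping $(V_{A_{1}}\cup V_{A_{2}})\Delta V_{A_{3}}=V_{A_{1}}\cup(V_{A_{2}}\Delta V_{A_{3}})$ checking out, and as a byproduct it shows both sides are simultaneously defined. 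What the paper's route buys is economy: the path-decomposition combinatorics has already been paid for once, in Lemma \ref{lemcutelim}, and the residual disjoint-carrier fact is far lighter than your full inertness analysis; so your closing claim that ``the combinatorial core is unavoidable either way'' somewhat overstates matters. What your route buys is a self-contained argument that makes the mechanism visible. One caution about your sketched alternative: as literally written, passing from $(\de{a}_{1}\otimes\de{a}_{3})\plug\de{a}_{2}$ to $\de{a}_{1}\otimes(\de{a}_{2}\plug\de{a}_{3})$ is exactly the corollary with $\de{a}_{2}$ and $\de{a}_{3}$ exchanged, hence circular as a proof step; the non-circular version is the paper's, which applies the lemma with the roles of $\de{a}_{1}$ and $\de{a}_{2}$ permuted so as to land on $\de{a}_{1}\plug(\de{a}_{2}\plug\de{a}_{3})$ before invoking disjointness.
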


\begin{proof}
From the preceding lemma, we have:
\begin{equation}
(\de{a}_{1}\otimes \de{a_{2}})\plug \de{a}_{3}=\de{a}_{1}\plug(\de{a}_{2}\plug\de{a}_{3})\nonumber
\end{equation}
Since $\de{a}_{1}$ and $\de{a}_{3}$ are of disjoint carriers, the carriers of $\de{a}_{1}$ and $\de{a}_{2}\plug\de{a}_{3}$ are disjoint ($\de{a}_{1}$ and $\de{a}_{2}$ are of disjoint carrier since their tensor product is defined). Therefore, we have $\de{a_{1}}\plug(\de{a}_{2}\plug\de{a}_{3})=\de{a}_{1}\otimes(\de{a}_{2}\plug\de{a}_{3})$.
\end{proof}

\begin{theorem}[Invariance by cut-elimination]
Let $\Phi$ be a basis of interpretation. If $\pi$ is proof of locMLL and $\pi'$ is the cut-free proof obtained by eliminating the cuts\footnote{We do not define the cut elimination procedure since it is the same as usual, and the fact that we are localized changes nothing.} in $\pi$, then $I_{\Phi}(\pi)=I_{\Phi}(\pi')$.
\end{theorem}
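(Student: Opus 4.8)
The plan is to argue by induction on the cut-elimination procedure. Since cut elimination for locMLL is the standard multiplicative one (localization, as the footnote notes, changes nothing) and is both strongly normalizing and confluent, the normal form $\pi'$ is unique and it suffices to show that every single reduction step $\pi \rew \pi''$ preserves the interpretation, i.e. $I_{\Phi}(\pi) = I_{\Phi}(\pi'')$; iterating along any reduction sequence from $\pi$ to $\pi'$ then gives the claim. I would split the reduction steps into three groups: the principal multiplicative step, the axiom (key) step, and the commutative conversions. Throughout, I would rely on the fact that $\parr$- and $\bot$-rules leave the interpreting project unchanged (by Definition \ref{interpretproofs}), so that the only operations visible at the level of projects are $\otimes$, mix (also $\otimes$) and cut ($\plug$).

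For the principal multiplicative step, a cut whose premises are introduced by a $\otimes$-rule (from $\pi_1 \vdash A, \Gamma_1$ and $\pi_2 \vdash B, \Gamma_2$) and a $\parr$-rule (from $\pi_3 \vdash A^{\pol}, B^{\pol}, \Delta$) reduces to two successive cuts on $A$ and on $B$. Because the $\parr$-rule does not alter the project, the interpretation before reduction is $(I_{\Phi}(\pi_1) \otimes I_{\Phi}(\pi_2)) \plug I_{\Phi}(\pi_3)$, while the interpretation of the reduced proof is $(I_{\Phi}(\pi_1) \plug I_{\Phi}(\pi_3)) \plug I_{\Phi}(\pi_2)$. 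These two projects are equal by Lemma \ref{lemcutelim}. The disjointness of carriers required by that lemma is automatic here, since the formula occurrences of a locMLL sequent have pairwise disjoint locations.

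For the axiom step, a cut between an axiom introducing $\vdash (X_i(j))^{\pol}, X_i(j')$ — interpreted as a $\de{Fax}_{\phi}$ — and a proof $\pi_1$ reduces on the syntactic side to $\pi_1$ with one atom occurrence relocated along $\phi$. The matching semantic fact is that reduction against a $\de{Fax}$ is exactly a delocation, namely $\de{Fax}_{\phi} \plug \de{a} = \phi(\de{a})$. I would prove this by direct inspection of the alternating paths in $\Phi \bicol A$: since the unit-weight edges of $\Phi$ only connect $V_A$ with its image and carry no $A$-colored edges on the image side, the alternation forces each such path to have the shape (Fax, $A$-edge, Fax), transporting a single edge of $A$ from $V_A$ to its $\phi$-image while preserving its weight; moreover $\Phi \bicol A$ has no alternating cycle, so the wager is unchanged. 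The invariance of $I_{\Phi}$ then follows, the syntactic relocation of the atom matching the delocation $\phi$ on the semantic side. The commutative conversions (a cut commuting past a $\otimes$, a mix, a $\bot$ or a $\parr$) are either trivial — for $\bot$ and $\parr$, which do not touch the project — or amount to rearranging an expression built from $\plug$ and $\otimes$; the latter is settled by the associativity of reduction (Proposition \ref{assocmll}) together with Corollary \ref{commut}, once one checks the hypothesis $V_0 \cap V_1 \cap V_2 = \emptyset$ of Proposition \ref{assocmll}, which again holds because distinct formula occurrences carry disjoint vertex sets.

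The main obstacle I anticipate is the axiom case: proving cleanly that $\de{Fax}_{\phi} \plug \de{a} = \phi(\de{a})$, i.e. that the alternating paths crossing the axiom reconstruct $A$ faithfully (edges and weights) on the relocated carrier and introduce no internal cycle, and then aligning this equality with the precise relocation performed by syntactic axiom-cut elimination. By contrast, the multiplicative case is essentially immediate from Lemma \ref{lemcutelim}, and the commutative cases reduce to the already-established associativity; so, apart from the careful bookkeeping of carriers needed to satisfy the various disjointness hypotheses, the genuine work is concentrated in the $\de{Fax}$ computation.
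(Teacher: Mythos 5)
Your proof is correct, and on two of the three families of steps it coincides with the paper's own argument: the principal $\otimes$/$\parr$ step is handled by exactly the same appeal to Lemma \ref{lemcutelim}, and the commutative conversions by Corollary \ref{commut} (which the paper dispatches in one line). The genuine divergence is the axiom step. Since axioms are atomic, the paper only ever treats cuts that are principal on both premises, i.e.\ cuts between \emph{two axioms}, and merely verifies that the reduction of two faxes is again a fax --- the one associated with the composite bijection $(i,jn_{i}+x)\mapsto (i,j''n_{i}+x)$; non-principal atomic cuts are absorbed into the commutations, the paper noting explicitly that these are the only cases because the calculus has atomic axioms. You instead perform the textbook one-step axiom reduction (cut against an axiom disappears, relocating one occurrence) and prove the stronger semantic fact that cutting against a fax \emph{is} a delocation. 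Your path analysis does establish this, with one precision to add: in an actual axiom cut the fax's carrier $\sharp X_{i}(j)\cup\sharp X_{i}(j')$ covers only part of the carrier of $I_{\Phi}(\pi_{1})$, so the identity must read $\de{Fax}_{\phi}\plug\de{a}=\bar{\phi}(\de{a})$ where $\bar{\phi}$ extends $\phi$ by the identity on $\sharp\Gamma$, and the alternating paths have the shape (optional fax edge, one $A$-edge, optional fax edge): edges of $A$ not touching $\sharp X_{i}(j')$ pass through unchanged, and your ``(Fax, $A$-edge, Fax)'' shape covers only those with both endpoints in $\sharp X_{i}(j')$. The cycle-freeness argument (every fax edge touches $\sharp X_{i}(j)$, where $A$ has no vertex, so no alternating cycle exists) and the unit-weight bookkeeping then go through as you describe. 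The trade-off: the paper's decomposition confines all graph computations to the trivial fax--fax composition, at the price of routing atomic cuts through commutations up to the leaves; yours isolates a reusable lemma --- fax-cut equals delocation, essentially the same fact that makes $\de{Fax}_{1_{S}}$ the identity morphism in Proposition \ref{category} --- at the price of the path bookkeeping you rightly identify as the crux.
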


\begin{proof}
We show that interpretation is preserved through every steps of the cut-elimination procedure:
\begin{itemize}
\item if $\pi$ is a cut between two axioms introducing $\vdash (X_{i}(j))^{\pol},X_{i}(j')$ and $\vdash (X_{i}(j'))^{\pol},X_{i}(j'')$ then $I_{\Phi}(\pi)=\de{Fax}_{1}\plug \de{Fax}_{2}$ where $\de{Fax}_{1}$ and $\de{Fax}_{2}$ are given by Definition \ref{interpretproofs}; we easily verify that the reduction of two faxes is a $\de{Fax}$: here it is the one we obtain from the bijection $(i,jn_{i}+x)\mapsto (i,j''n_{i}+x)$ for $0\leqslant x\leqslant n_{i}-1$, i.e. the interpretation of the axiom rule introducing $\vdash (X_{i}(j))^{\pol},X_{i}(j'')$, result of the cut elimination applied to $\pi$:
\begin{prooftree}
\AxiomC{}
\RightLabel{\scriptsize{Ax ($j\neq j''$)}}
\UnaryInfC{$\vdash (X_{i}(j))^{\pol},X_{i}(j'')$}
\end{prooftree}
\item if $\pi$ is a cut between two proofs, one obtained from a tensor rule between $\pi_{1}$ and $\pi_{2}$, and the other obtained from a $\parr$ rule on $\pi_{3}$:
\begin{prooftree}
\AxiomC{$\vdots^{\pi_{1}}$}
\noLine
\UnaryInfC{$\vdash \Delta_{1},A$}
\AxiomC{$\vdots^{\pi_{2}}$}
\noLine
\UnaryInfC{$\vdash \Delta_{2},B$}
\RightLabel{\scriptsize{$\otimes$}}
\BinaryInfC{$\vdash \Delta_{1},\Delta_{2},A\otimes B$}
\AxiomC{$\vdots^{\pi_{3}}$}
\noLine
\UnaryInfC{$\vdash \Delta_{3},A^{\pol},B^{\pol}$}
\RightLabel{\scriptsize{$\parr$}}
\UnaryInfC{$\vdash \Delta_{3},A^{\pol}\parr B^{\pol}$}
\RightLabel{\scriptsize{cut}}
\BinaryInfC{$\vdash \Delta_{1},\Delta_{2},\Delta_{3}$}
\end{prooftree}
We have, denoting by $\de{a}_{i}=I_{\Phi}(\pi_{i})$ ($i=1,2,3$), $I_{\Phi}(\pi)=(\de{a}_{1}\otimes\de{a}_{2})\plug\de{a}_{3}$, which is equal to $(\de{a}_{3}\plug\de{a}_{2})\plug\de{a}_{1}$ by Lemma \ref{lemcutelim}, which is the interpretation of the proof:
\begin{prooftree}
\AxiomC{$\vdots^{\pi_{1}}$}
\noLine
\UnaryInfC{$\vdash \Delta_{1},A$}
\AxiomC{$\vdots^{\pi_{3}}$}
\noLine
\UnaryInfC{$\vdash \Delta_{3},A^{\pol},B^{\pol}$}
\RightLabel{\scriptsize{cut}}
\BinaryInfC{$\vdash \Delta_{1},\Delta_{3},B^{\pol}$}
\AxiomC{$\vdots^{\pi_{2}}$}
\noLine
\UnaryInfC{$\vdash \Delta_{2},B$}
\RightLabel{\scriptsize{cut}}
\BinaryInfC{$\vdash \Delta_{1},\Delta_{3},\Delta_{2}$}
\end{prooftree}
\item the commutation rules are clear from Corollary \ref{commut}.
\end{itemize}
These are the only cases, since we considered a sequent calculus with only atomic axioms.
\end{proof}

The question of the completeness of our model is still open, and will be the object of a future work.

\section{Adjacency Matrices}\label{matrix}

In this section, we will show some results that explain some connections between the operations we defined on graphs and operator-theoretic notions. These results will allow us to show (in the next section) that our framework on graphs, when restricted to a certain class of graphs, is connected to Girard's geometry of interaction \cite{goi5}. In the remaining two sections (and particularly in the next), we will use some operator-theoretic notions that may not be familiar to the reader. While we are not able to write a complete introduction to such matters, we though useful to compile a list of important results and definitions in the appendix \ref{appendix}.

Since objects in \cite{goi5} are hermitian operators of norm $\leqslant 1$, we will restrict to a certain class of graphs that correspond to hermitian matrices of norm $\leqslant 1$. We then show that the different definitions we gave on graphs can be translated into linear algebraic definitions. In particular, we can prove that the adjunction is still valid, which implies that this restriction defines a geometry of interaction in the same way we defined our GoI in section \ref{goi}. Moreover, the linear algebraic definitions that correspond to our definitions on graphs are exactly the same as GoI5 definitions, as we will show in the next section.

Let $\hil{H}$ be an (countable) infinite-dimensional Hilbert space. We fix an orthonormal basis $(e_{i})_{i\in\mathbb{N}}$ of the Hilbert space $\hil{H}$. Given a finite subset $S\subset\mathbb{N}$, there is a projection on the subspace generated by $\{e_{i}~|~i\in S\}$ that we will denote by $p_{S}$. Then the restriction $p_{S}\mathcal{B}(\hil{H})p_{S}$ is isomorphic to the algebra of $n\times n$ matrices $\mathcal{M}_{n}(\mathbb{C})$ where $n$ is the cardinal of $S$. All graphs we consider in this section and the following are such that their set of vertices is a finite subset of $\mathbb{N}$.

\begin{definition}[Localized adjacency matrix]
If $G$ is a simple weighted graph, the adjacency matrix (the matrix of weights) $\mat{G}$ of $G$ defines an operator in $p_{V_{G}}\mathcal{B}(\hil{H})p_{V_{G}}$ (hence in $\mathcal{B}(\hil{H})$) whose matrix is $\mat{G}$ in the base $\{e_{i}\}_{i\in V_{G}}$. We will make an abuse of notations and denote this operator, the \emph{localized adjacency matrix of $G$}, by $\mat{G}$. 
\end{definition}

\begin{definition}[Operator Graph]
We will call \emph{operator graph} a simple symmetric weighted graph $G$ such that $\lVert\mat{G}\rVert \leqslant 1$.
\end{definition}

We recall that if $G,H$ are graphs on the same set of vertices, then the product $\mat{G}\mat{H}$ is the adjacency matrix of the graph of paths of length $2$ with the first edge in $G$ and the second in $H$. This is the key ingredient for the following propositions.

\begin{proposition}\label{measures}
Let $F,G$ be operator graphs, $\mat{F}$ and $\mat{G}$ their localized adjacency matrices. The product of $\mat{F}$ and $\mat{G}$ as elements of $\mathcal{B}(\hil{H})$ gives an operator in $(p_{V_{F}\cup V_{G}})\mathcal{B}(\hil{H})(p_{V_{F}\cup V_{G}})$ and:
\begin{equation}
\scalar{F,G}= \sum_{k=1}^{\infty} \frac{Tr((\mat{F}\mat{G})^{k})}{k}\nonumber
\end{equation}
\end{proposition}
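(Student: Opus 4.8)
The plan is to unfold both sides into weighted sums over alternating closed walks and to match them term by term, reconciling the quotient by cyclic permutations appearing on the left (in $\cycl{F,G}$) with the trace on the right by means of the cyclic-counting of Proposition~\ref{countingeqclas}. First I would invoke the remark preceding the statement: since $\mat{F}$ and $\mat{G}$ are viewed as operators in $p_{V_{F}\cup V_{G}}\mathcal{B}(\hil{H})p_{V_{F}\cup V_{G}}$, i.e.\ on the common vertex set $V_{F}\cup V_{G}$, the product $\mat{F}\mat{G}$ is the adjacency matrix of the graph whose edges are the alternating paths of length $2$ starting with an $F$-edge and ending with a $G$-edge. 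Hence $(\mat{F}\mat{G})^{k}$ is the adjacency matrix of the alternating paths of length $2k$ of shape $FGFG\cdots FG$, its diagonal entry at a vertex $v$ is the weighted sum of all such closed walks based at $v$, and therefore $Tr((\mat{F}\mat{G})^{k})$ is the sum of the weights $\omega_{F\bicol G}(w)$ over all closed alternating walks $w$ of length $2k$ beginning with an $F$-edge. Each such $w$ is automatically an alternating cycle, as it begins with an $F$-edge and ends with a $G$-edge.

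Next I would organise these closed walks according to their primitive root. Every closed walk $w$ of length $2k$ beginning with $F$ factors uniquely as $w=\rho^{d}$ with $\rho$ a primitive (that is, a $1$-)cycle; for the alternation to persist across the seam, $\rho$ must itself begin with an $F$-edge and end with a $G$-edge, hence have even length $2m$ with $md=k$. By Proposition~\ref{countingeqclas} such a primitive $\rho$ has exactly $2m$ distinct cyclic representatives, of which precisely $m$ begin with an $F$-edge, and these are exactly the $F$-starting cycles lying in the circuit $\bar{\rho}\in\cycl{F,G}$. Since $\omega_{F\bicol G}(\rho^{d})=\omega_{F\bicol G}(\bar{\rho})^{d}$, gathering the contributions circuit by circuit yields
\begin{equation}
Tr((\mat{F}\mat{G})^{k})=\sum_{m\mid k}\ \sum_{\substack{\bar{\rho}\in\cycl{F,G}\\ \abs{\bar{\rho}}=2m}} m\,\omega_{F\bicol G}(\bar{\rho})^{k/m}.\nonumber
\end{equation}

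Finally I would divide by $k$, sum over $k$, and reindex via $k=md$, so that the factor $m$ cancels the $1/m$ arising from $k=md$ and leaves exactly $1/d$:
\begin{equation}
\sum_{k=1}^{\infty}\frac{Tr((\mat{F}\mat{G})^{k})}{k}=\sum_{\bar{\rho}\in\cycl{F,G}}\sum_{d=1}^{\infty}\frac{\omega_{F\bicol G}(\bar{\rho})^{d}}{d}=\sum_{\bar{\rho}\in\cycl{F,G}}-\log\bigl(1-\omega_{F\bicol G}(\bar{\rho})\bigr),\nonumber
\end{equation}
the inner series being the Taylor expansion of $-\log(1-x)$; the rearrangement of the doubly-indexed sum is legitimate because all terms are nonnegative (the weights lie in $]0,1]$, the case of weights $\geqslant 1$ being covered by the extended definition of $\scalar{F,G}$). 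The right-hand side is by definition $\scalar{F,G}$, which is the claim. I expect the only delicate point to be the multiplicity bookkeeping of the middle step — namely checking that the factor $m$ counting the $F$-starting cyclic representatives of a circuit is exactly what converts the $1/k$ weighting of the trace into the $1/d$ weighting of the logarithmic series — which is precisely the cyclic-group computation that Proposition~\ref{countingeqclas} was set up to provide.
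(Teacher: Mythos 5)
Your proof is correct and follows essentially the same route as the paper's: identify $Tr((\mat{F}\mat{G})^{k})$ with the weighted count of closed alternating walks, decompose each walk into a power of a primitive cycle with the multiplicity supplied by Proposition~\ref{countingeqclas}, and resum $\sum_{d}\omega^{d}/d$ into $-\log(1-\omega)$. If anything, your bookkeeping is more explicit than the paper's --- you track that only the $m$ $F$-starting rotations among the $2m$ cyclic representatives of a primitive circuit contribute to the trace, and you justify the rearrangement of the double sum by nonnegativity --- whereas the paper states the multiplicity simply as $\sharp\bar{\rho}$ and leaves these points implicit.
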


\begin{proof}
Let $T_{n}=tr((\mat{F}\mat{G})^{n})/n$. We recall that the diagonal coefficient $\delta_{i}$ of $(\mat{F})^{n}$ is equal to the sum of the weights of the cycles of length $n$ in $F$ that begin and end at $i$, and each path is counted exactly once. This means that in $tr((\mat{F}\mat{G})^{n})$ each alternating circuit $\bar{\rho}$ is counted exactly $\sharp\bar{\rho}$ times, where $\sharp\bar{\rho}$ is the cardinality of the set $\bar{\rho}$ defined in Proposition \ref{countingeqclas}. Thus $T_{n}$ is equal to the sum, for all alternating circuits $\bar{\rho}$ of length $n$ in $F\bicol G$, of $\sharp \bar{\rho}.\omega_{F\bicol G}(\rho)/n$. We then have that $T_{n}$ is equal to the sum, for all $d$-circuits $\pi=\rho_{\pi}^{d}$ of length $n$, of the terms $\omega_{F\bicol G}(\pi)/d=\omega_{F\bicol G}(\rho_{\pi})^{d}/d$ (recall that $\bar{\rho_{\pi}}$ is of cardinality $n/d$).

Let us now choose a $1$-circuit $\bar{\pi}$ of length $k$. We have just seen that each term $\Omega^{\bar{\pi}}_{d}=\omega_{F\bicol H}(\bar{\pi})^{d}/d$ appears in $\sum_{n=1}^{\infty} T_{n}$ and it appears only once (in the term $T_{dk}$). Summing these terms, we obtain $-log(1-\omega(\bar{\pi}))$. Eventually, by taking the sum over all $1$-circuits $\bar{\pi}\in\cycl{F,G}$, we obtain $\scalar{F,G}$.
\end{proof}

In the following, when working with complex logarithms, we will always consider the principal branch of the logarithm.

\begin{lemma}
Let $a$ be a square matrix such that $\norm{a}\leqslant 1$. Then, with the convention that $-log(0)=\infty$, $$-log(det(1-a))=\sum_{k=1}^{\infty} Tr(a^{k})/k$$
\end{lemma}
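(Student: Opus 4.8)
The plan is to diagonalize the problem by passing to eigenvalues. First I would invoke the Schur triangularization theorem to write $a = u T u^{*}$ with $u$ unitary and $T$ upper triangular, whose diagonal entries are the eigenvalues $\lambda_{1},\dots,\lambda_{n}$ of $a$ listed with multiplicity. Since both the determinant and the trace are invariant under conjugation, and since $\norm{a}\leqslant 1$ forces $\abs{\lambda_{i}}\leqslant\norm{a}\leqslant 1$ for every $i$, it suffices to work with $T$. Because $1-T$ is upper triangular with diagonal $1-\lambda_{i}$, we get $\det(1-a)=\prod_{i=1}^{n}(1-\lambda_{i})$; likewise $a^{k}$ is conjugate to $T^{k}$, which is upper triangular with diagonal $\lambda_{i}^{k}$, so $\mathrm{Tr}(a^{k})=\sum_{i=1}^{n}\lambda_{i}^{k}$.

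The core is then the scalar identity $-\log(1-\lambda)=\sum_{k=1}^{\infty}\lambda^{k}/k$, valid for every $\lambda$ with $\abs{\lambda}\leqslant 1$ and $\lambda\neq 1$ (using the principal branch, which is unambiguous here because $\abs{\lambda}\leqslant 1,\ \lambda\neq 1$ keeps $1-\lambda$ off the non-positive real axis). Assuming for the moment that no eigenvalue equals $1$, I would then compute $\sum_{k=1}^{\infty}\mathrm{Tr}(a^{k})/k=\sum_{k=1}^{\infty}\frac{1}{k}\sum_{i=1}^{n}\lambda_{i}^{k}=\sum_{i=1}^{n}\sum_{k=1}^{\infty}\frac{\lambda_{i}^{k}}{k}=-\sum_{i=1}^{n}\log(1-\lambda_{i})=-\log\prod_{i=1}^{n}(1-\lambda_{i})=-\log\det(1-a)$. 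The interchange of the two summations is legitimate not because of absolute convergence but simply because the sum over $i$ is finite: a finite sum of convergent series converges to the sum of their limits.

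The step that needs genuine care, and the main obstacle, is convergence on the unit circle. For $\abs{\lambda_{i}}<1$ everything converges absolutely; for $\abs{\lambda_{i}}=1$ with $\lambda_{i}\neq 1$ the inner series $\sum_{k}\lambda_{i}^{k}/k$ is only conditionally convergent, so I would justify its convergence by Dirichlet's test (the partial sums of $\lambda_{i}^{k}$ are bounded and $1/k$ decreases to $0$) and identify its value with $-\log(1-\lambda_{i})$ via Abel's theorem on the radial limit of the analytic function $-\log(1-z)=\sum_{k}z^{k}/k$. Finally I would dispose of the degenerate case: if some eigenvalue equals $1$, then $\det(1-a)=0$, so the left-hand side is $+\infty$ by the stated convention; correspondingly the contribution of that eigenvalue to $\sum_{k}\mathrm{Tr}(a^{k})/k$ is $\sum_{k}1/k$, whose real part diverges to $+\infty$ while all the remaining finitely many eigenvalue contributions stay finite, so the right-hand side is $+\infty$ as well. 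This matches the two sides in every case and completes the argument.
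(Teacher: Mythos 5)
Your proof is correct and follows essentially the same route as the paper's: reduce to the eigenvalues $\lambda_{1},\dots,\lambda_{n}$ (you via Schur triangularization, the paper via the matrix logarithm and $det(exp(A))=exp(Tr(A))$), apply the scalar expansion $-log(1-\lambda)=\sum_{k\geqslant 1}\lambda^{k}/k$ on the closed unit disc minus $\{1\}$, and handle an eigenvalue equal to $1$ separately through the divergence of the harmonic series together with $det(1-a)=0$. The only differences are cosmetic: you justify convergence on the unit circle by Dirichlet's test and Abel's theorem where the paper invokes Dedekind's test, and your degenerate case --- splitting off the divergent $\sum_{k}1/k$ contribution from the finitely many convergent ones --- is if anything a cleaner argument than the paper's pairing of consecutive terms.
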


\begin{proof}
First notice that we have $-log(det(1-a))=Tr(-log(1-a))$, since\footnote{This formula follows from the equality $det(exp(A))=exp(Tr(A))$ for any square matrix $A$, a formula easily shown by considering $A$ written as a triangular matrix.} $det(1-a)=exp(Tr(log(1-a)))$. 

We first suppose that $1$ is not an eigenvalue of $a$ and write $\lambda_{1},\dots,\lambda_{n}$ these eigenvalues. The (principal branch of the) cologarithm of $1-a$ is defined as the series $\sum_{k\geqslant 1} a^{k}/k$ which converges\footnote{This is a straigthforward application of Dedekind's test: $\sum a_{n}b_{n}$ is convergant if $\sum (b_{n}-b_{n+1})$ converges absolutely, $b_{n}\rightarrow 0$ and $\sum a_{n}$ has bounded partial sums (see for instance Knopp's "Theory and Application of Infinite Series" \cite{Knopp}).} for every complex number $a\neq 1$ such that $\abs{a}\leqslant 1$. The matrix $1-a$ being invertible, the logarithm $-log(1-a)$ exists and its eigenvalues are equal to $\sum_{k\geqslant 1} \lambda_{i}^{k}/k=-log(1-\lambda_{i})$. Then we have:
\begin{equation}
Tr(-log(1-a))=\sum_{i=1}^{n}\sum_{k\geqslant 1} \frac{\lambda_{i}^{k}}{k}=\sum_{k\geqslant 1}\frac{Tr(a^{k})}{k}\nonumber
\end{equation}

Let us now suppose that $\lambda_{1}=1$. We rewrite the sum $\sum_{k\geqslant 1}Tr(a^{k})/k$ as $\sum_{k=1}^{\infty} \sum_{i=1}^{n} \lambda_{i}^{k}/k=\sum_{k=1}^{\infty} \sum_{i=1}^{n} \lambda_{i}^{2k}(1/2k+\lambda_{i}/(2k+1))$. This is equal to $\sum_{k=1}^{\infty} [1/2k+1/(2k+1)+\sum_{i=2}^{n}  \lambda_{i}^{2k}(1/2k+\lambda_{i}/(2k+1))]$ which is greater than $\sum_{k=1}^{\infty} 1/2k+1/(2k+1)$. This last series being divergent, we conclude that $\sum_{k=1}^{\infty} Tr(a^{k})/k=\infty$. But, since $1$ is an eigenvalue of $a$, the kernel of $1-a$ is non trivial, hence $det(1-a)=0$, which means that $-log(det(1-a))=\infty=\sum_{k=1}^{\infty} Tr(a^{k})/k$.
\end{proof}

\begin{corollary}\label{ldet}
Let $F,G$ be operator graphs. Then the product of $\mat{F}$ and $\mat{G}$ in $\mathcal{B}(\hil{H})$ gives an operator in $p_{V_{F}\cup V_{G}}\mathcal{B}(\hil{H})p_{V_{F}\cup V_{G}}$ and\footnote{The determinant is defined: since $V_{F}$ and $V_{G}$ are finite, $\mat{F}\mat{G}$ can be written as a square matrix.}:
\begin{equation}
\scalar{F,G}=-log(det(1-\mat{F}\mat{G}))\nonumber
\end{equation}
\end{corollary}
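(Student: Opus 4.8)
The plan is to combine the two results immediately preceding the corollary, so that the statement becomes essentially a direct substitution. By Proposition~\ref{measures} we already know that $\scalar{F,G}=\sum_{k=1}^{\infty} Tr((\mat{F}\mat{G})^{k})/k$, and the Lemma, applied to a square matrix $a$ of norm at most $1$, gives $-log(det(1-a))=\sum_{k\geqslant 1} Tr(a^{k})/k$. Setting $a=\mat{F}\mat{G}$ and chaining the two equalities yields $\scalar{F,G}=-log(det(1-\mat{F}\mat{G}))$. Thus the real content of the argument lies entirely in checking that the hypotheses of the Lemma are satisfied for this particular $a$.

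First I would verify that $det(1-\mat{F}\mat{G})$ is meaningful. As recorded in Proposition~\ref{measures}, the product $\mat{F}\mat{G}$, computed in $\mathcal{B}(\hil{H})$, lands in $p_{V_{F}\cup V_{G}}\mathcal{B}(\hil{H})p_{V_{F}\cup V_{G}}$; since $V_{F}\cup V_{G}$ is finite, this is a finite-dimensional matrix algebra, so $a$ may be treated as an honest square matrix and its determinant is well-defined (exactly the point flagged in the footnote of the statement).

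Second, I would check the norm bound $\norm{a}\leqslant 1$, which is the only genuine hypothesis of the Lemma. By the definition of an operator graph, $\norm{\mat{F}}\leqslant 1$ and $\norm{\mat{G}}\leqslant 1$; submultiplicativity of the operator norm then gives $\norm{a}=\norm{\mat{F}\mat{G}}\leqslant \norm{\mat{F}}\,\norm{\mat{G}}\leqslant 1$. It is worth noting that we do \emph{not} need $a$ to be hermitian: the Lemma is stated for arbitrary square matrices of norm at most $1$, which is fortunate, since the product of two hermitian matrices need not be hermitian.

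With these two points settled, the proof concludes by concatenation, $\scalar{F,G}=\sum_{k\geqslant 1} Tr(a^{k})/k=-log(det(1-a))$, and there is no serious obstacle. The one place requiring slight care is the boundary case where $1$ is an eigenvalue of $\mat{F}\mat{G}$: then $det(1-a)=0$, so the right-hand side is $\infty$ under the convention $-log(0)=\infty$, while the second half of the Lemma shows that the series $\sum_{k\geqslant 1} Tr(a^{k})/k$ likewise diverges to $\infty$. Hence the convention built into the definition of $\scalar{\cdot,\cdot}$ and the convention used in the Lemma agree, and the stated equality holds as an identity in $\mathbb{R}_{\geqslant 0}\cup\{\infty\}$.
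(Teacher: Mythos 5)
Your proof is correct and follows exactly the paper's route: the paper also obtains the corollary as a direct consequence of Proposition~\ref{measures} and the preceding lemma, merely leaving implicit the verifications you spell out (that $\mat{F}\mat{G}$ is a genuine square matrix of norm at most $1$, and that the $-log(0)=\infty$ conventions match when $1$ is an eigenvalue). Your added checks, including the observation that the lemma does not require $\mat{F}\mat{G}$ to be hermitian, are accurate and fill the gap faithfully.
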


\begin{proof}
This is a direct consequence of Proposition \ref{measures} and the preceding lemma.
\end{proof}

\begin{proposition}
Let $F$ and $G$ be operator graphs. If $\scalar{F,G}\neq \infty$, then $\what{F\plug G}$ is total.
\end{proposition}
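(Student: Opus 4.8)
The plan is to pass to the matrix picture, turn the hypothesis $\scalar{F,G}\neq\infty$ into a strict spectral bound on $\mat{F}\mat{G}$, and then recognise the weights of $\what{F\plug G}$ as entries of convergent Neumann series.

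First I would invoke Corollary \ref{ldet}: since $F$ and $G$ are operator graphs, $\scalar{F,G}=-\log(\det(1-\mat{F}\mat{G}))$, where $\mat{F}\mat{G}$ is a finite matrix because $V_{F}\cup V_{G}$ is finite. The assumption $\scalar{F,G}\neq\infty$ therefore forces $\det(1-\mat{F}\mat{G})\neq 0$, i.e. $1$ is not an eigenvalue of $\mat{F}\mat{G}$. Next I would upgrade this to a bound on the spectral radius $\rho(\mat{F}\mat{G})<1$. On one hand, being operator graphs gives $\norm{\mat{F}}\leqslant 1$ and $\norm{\mat{G}}\leqslant 1$, so $\norm{\mat{F}\mat{G}}\leqslant 1$ and hence $\rho(\mat{F}\mat{G})\leqslant 1$. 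On the other hand all entries of $\mat{F}$ and $\mat{G}$ are nonnegative (edge weights lie in $]0,1]$, absent edges give $0$), so $\mat{F}\mat{G}$ is a nonnegative matrix; by the Perron--Frobenius theorem its spectral radius is itself an eigenvalue. Were $\rho(\mat{F}\mat{G})=1$, then $1$ would be an eigenvalue, contradicting the previous step. Hence $\rho(\mat{F}\mat{G})<1$, and consequently $\rho(\mat{G}\mat{F})<1$ since the two products share the same nonzero spectrum.

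Finally I would compute the weights. For $v,w\in V_{F\plug G}=V_{F}\Delta V_{G}$, the weight of the edge of $\what{F\plug G}$ from $v$ to $w$ is, by the definition of $\what{\cdot}$ and of reduction, $\sum_{\rho\in\path{v,w}{F,G}}\omega_{F\bicol G}(\rho)$, a sum of nonnegative reals over all alternating paths. Grouping these paths by their starting colour and parity, and using the fact recalled before Proposition \ref{measures} that products of adjacency matrices count weighted walks, the alternating paths of even length $2n$ ($n\geqslant 1$) starting with an $F$-edge contribute the $(v,w)$-entry of $(\mat{F}\mat{G})^{n}$, those of odd length $2n+1$ contribute the $(v,w)$-entry of $(\mat{F}\mat{G})^{n}\mat{F}$, and symmetrically for paths starting with a $G$-edge via $\mat{G}\mat{F}$. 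Since $\rho(\mat{F}\mat{G})<1$ and $\rho(\mat{G}\mat{F})<1$, the Neumann series $\sum_{n\geqslant 0}(\mat{F}\mat{G})^{n}=(1-\mat{F}\mat{G})^{-1}$ and $\sum_{n\geqslant 0}(\mat{G}\mat{F})^{n}=(1-\mat{G}\mat{F})^{-1}$ converge, so each of the four contributions is a finite entry of a finite matrix and the total weight from $v$ to $w$ is finite. As this holds for every pair $v,w$, the graph $\what{F\plug G}$ is total.

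The step I expect to be delicate is the exclusion of the boundary value $\rho(\mat{F}\mat{G})=1$: the hypothesis only forbids the eigenvalue $1$, and it is precisely the nonnegativity of $\mat{F}\mat{G}$, through Perron--Frobenius, that converts ``$1$ is not an eigenvalue'' into the strict inequality needed for the geometric series to converge. One should also take care that the bookkeeping of alternating paths by the matrices $(\mat{F}\mat{G})^{n}$, $(\mat{F}\mat{G})^{n}\mat{F}$, $(\mat{G}\mat{F})^{n}$ and $(\mat{G}\mat{F})^{n}\mat{G}$ exhausts each alternating path exactly once, so that every weight is accounted for and none is double-counted.
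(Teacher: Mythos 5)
Your proof is correct, but it follows a genuinely different route from the paper's. The paper stays entirely at the graph level: it fixes $v,v'$, observes that there are only finitely many cycle-free alternating paths from $v$ to $v'$ (since $V_{F}\Delta V_{G}$ is finite), and dominates the total weight $\sum_{\pi\in\path{v,v'}{F,G}}\omega_{F\bicol G}(\pi)$ by the number of such skeleton paths times a maximal weight times a quantity controlled by $\scalar{F,G}$ --- implicitly using that $x\leqslant -log(1-x)$, so finiteness of $\scalar{F,G}$ forces the circuit weights to be summable and each strictly below $1$. (As written, the paper's displayed bound is in fact rather loose: a path may traverse several circuits, or the same circuit repeatedly, so a fully rigorous version needs a geometric-series refinement of the factor $\sum_{\pi\in\cycl{F,G}}\omega_{F\bicol G}(\pi)$, and the identification of that sum with $\scalar{F,G}$ only holds as an inequality.) You instead pass to the matrix picture: Corollary \ref{ldet} converts the hypothesis into $det(1-\mat{F}\mat{G})\neq 0$, and your key extra idea --- Perron--Frobenius for entrywise nonnegative matrices, upgrading ``$1$ is not an eigenvalue'' together with $\norm{\mat{F}\mat{G}}\leqslant 1$ to the strict bound $\rho(\mat{F}\mat{G})<1$ --- is exactly what is needed, since without nonnegativity peripheral spectrum (e.g.\ an eigenvalue $-1$) would defeat the Neumann series; you correctly flag this as the delicate point. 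Your four-fold bookkeeping of alternating paths by first colour and parity, as entries of $(\mat{F}\mat{G})^{n}$, $(\mat{F}\mat{G})^{n}\mat{F}$, $(\mat{G}\mat{F})^{n}$ and $(\mat{G}\mat{F})^{n}\mat{G}$, is the correspondence invoked before Proposition \ref{measures}, and the regrouping is legitimate because all terms are nonnegative. What your route buys: it is arguably tighter than the paper's sketch, and it produces the resolvent $(1-\mat{F}\mat{G})^{-1}$ explicitly, dovetailing with Proposition \ref{execution}, where the same Neumann series solves the feedback equation --- so your argument unifies the two propositions. What the paper's route buys: it needs no spectral theory at all and remains purely combinatorial, in the spirit of the rest of Section \ref{graphssection}.
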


\begin{proof}
By definition, $\what{F\plug G}$ is total if and only if for all couple $v,v'$ of vertices in the symmetric difference $S=V_{F}\Delta V_{G}$ the following sum converges:
\begin{equation}
\sum_{\pi\in\path{v,v'}{F,G}} \omega_{F\bicol G}(\pi)\nonumber
\end{equation}
Let us fix $v,v'$ two vertices and denote by $E$ the set of alternating paths from $v$ to $v'$ in $F\bicol G$ that do not contain a cycle. Then, since $S$ is finite we know that $E$ is finite, and there is a path $\gamma$ of maximal weight. Then, we can say that
\begin{equation}
\sum_{\pi=(v,v')\in\path{}{F,G}} \omega_{F\bicol G}(\pi)\leqslant \sharp(E)\omega_{F\bicol G}(\gamma)\left(\sum_{\pi\in\cycl{F,G}}  \omega_{F\bicol G}(\pi)\right)\nonumber
\end{equation}
The right-hand of the equation being equal to $\sharp(E)\omega_{F\bicol G}(\gamma)\scalar{F,G}$, it is finite.
\end{proof}

\begin{proposition}\label{execution}
Suppose $F$ and $G$ are operator graphs, $\scalar{F,G}\neq\infty$. Then $\mat{H}=\mat{\what{F\plug G}}$ is the solution to the feedback equation\footnote{The feedback equation is the operator-theoretic counterpart to the cut-elimination procedure introduced and solved by Girard \cite{feedback}; it is explained and discussed in Girard's \emph{Blind Spot} \cite{blindspot} with a new and more elegant proof.} between $\mat{F}$ and $\mat{G}$, and therefore an operator graph.
\end{proposition}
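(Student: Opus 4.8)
The plan is to read the adjacency matrix of $\what{F\plug G}$ directly off the combinatorial description of alternating paths, resum those paths into a resolvent, recognize the result as the solution of the feedback equation, and finally verify that it is an operator graph; the norm bound is the only genuinely analytic point.

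First I would fix the partition $V_{F}\cup V_{G}=V_{1}\disjun V_{2}\disjun V_{3}$ with $V_{1}=V_{F}\setminus V_{G}$, $V_{2}=V_{F}\cap V_{G}$ and $V_{3}=V_{G}\setminus V_{F}$, so that $S:=V_{F}\Delta V_{G}=V_{1}\disjun V_{3}$ is the carrier of $F\plug G$ and $p_{S}$ the associated projection. Every $\mat F$-edge lies inside $V_{F}=V_{1}\cup V_{2}$ and every $\mat G$-edge inside $V_{G}=V_{2}\cup V_{3}$; since consecutive edges of an alternating path carry opposite colours, each internal vertex of such a path is the junction of an $\mat F$-edge and a $\mat G$-edge, hence lies in $V_{2}$, while the two endpoints lie in $S$. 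An alternating path is therefore a product of matrices drawn alternately from $\{\mat F,\mat G\}$, and by the rule recalled before Proposition \ref{measures} (matrix products compute length-two paths) the $(v,v')$ entry of such a product is the sum of the weights of the alternating paths from $v$ to $v'$ realizing that colour pattern.

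Next I would sum over all colour patterns. Writing $A$ (resp. $B$) for the sum of the products of alternating paths starting with an $\mat F$-edge (resp. an $\mat G$-edge) and grouping by the first letter yields the fixpoint $A=\mat F(1+B)$, $B=\mat G(1+A)$, whose solution gives
\begin{equation}
\Sigma=(1-\mat F\mat G)^{-1}\mat F(1+\mat G)+(1-\mat G\mat F)^{-1}\mat G(1+\mat F),\nonumber
\end{equation}
and by the definition of reduction and of the totalization, $\mat H=p_{S}\,\Sigma\,p_{S}$. The series defining $\Sigma$ converges because the preceding proposition, using $\scalar{F,G}\neq\infty$, shows $\what{F\plug G}$ is total, i.e. the nonnegative path sums are finite; since every partial Neumann sum has nonnegative entries and the totals are finite, $\sum_{k}(\mat F\mat G)^{k}$ converges and is forced to equal $(1-\mat F\mat G)^{-1}$, which exists because $\scalar{F,G}\neq\infty$ gives $\det(1-\mat F\mat G)\neq 0$ by Corollary \ref{ldet}. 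I would then check that this closed form is exactly the solution of the feedback equation between $\mat F$ and $\mat G$, obtained by eliminating the internal $V_{2}$-component through the feedback loop; this is a direct rewriting in the block decomposition above.

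Finally, the claim that $\mat H$ is an operator graph. Simplicity is built into the construction $G\mapsto\what{G}$; symmetry holds because reversing an alternating path and replacing each edge by its weight-matched opposite (available since $F$ and $G$ are symmetric) is a weight-preserving bijection $\path{v,v'}{F,G}\to\path{v',v}{F,G}$, so $\mat H$ is hermitian. The real obstacle is $\norm{\mat H}\leqslant 1$: although $\norm{\mat F},\norm{\mat G}\leqslant 1$, the resolvent $(1-\mat F\mat G)^{-1}$ is in general unbounded in norm, so contractivity of $\mat H$ cannot be read off termwise. I expect to obtain it by exhibiting $\mat H$ as the feedback — equivalently, the Redheffer / linear-fractional combination — of the two hermitian contractions $\mat F$ and $\mat G$ obtained by connecting their $V_{2}$-ports, and invoking the fact, which is precisely the content of Girard's solution of the feedback equation, that such a feedback of contractions is again a contraction. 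This contractivity step is the one I expect to be the crux of the argument.
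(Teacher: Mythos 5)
Your overall route is the paper's: use $\scalar{F,G}\neq\infty$ to resum the alternating paths into a Neumann series, recognize the compressed closed form as Girard's solution of the feedback equation, and inherit hermiticity and the bound $\norm{\mat{H}}\leqslant 1$ from that solution. Your fixpoint derivation $A=\mat{F}(1+B)$, $B=\mat{G}(1+A)$ is a different organization of the same computation: the paper instead writes the explicit formula $(p_{V_{F}'}\mat{F}+p_{V_{G}'})(1-\mat{G}\mat{F})^{-1}(\mat{G}p_{V_{G}'}+p_{V_{F}'})$ and checks the four entry types (according to whether $v,v'$ lie in $V_{F}'$ or $V_{G}'$) directly against the path sums; your compression $p_{S}\Sigma p_{S}$ expands to exactly those blocks, since a path starting with an $\mat{F}$-edge from the symmetric difference necessarily starts in $V_{F}'$, and dually. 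Your treatment of the crux is also the paper's: the paper does not prove contractivity either, it simply asserts that the solution of the feedback equation is ``the hermitian of norm at most $1$'' given by that formula, i.e.\ it delegates to Girard's theorem exactly as you propose.

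The one step that does not work as written is your justification of the convergence of the Neumann series. Totality of $\what{F\plug G}$ only says that path sums between vertices of $V_{F}\Delta V_{G}$ are finite, whereas the entries of $\sum_{k}(\mat{F}\mat{G})^{k}$ you need live between vertices of $V_{F}\cap V_{G}$, and these are not bounded by the former (in the extreme case $V_{F}=V_{G}$ the reduction is the empty graph, trivially total, and totality tells you nothing about the internal cycles). The paper is careful on precisely this point: it reruns the estimate of the preceding proposition for \emph{all} pairs $v,v'\in V_{F}\cup V_{G}$, bounding path sums by the circuit weights (finite since $\scalar{F,G}\neq\infty$), and only then concludes that $1-\mat{G}\mat{F}$ is invertible with the series as its inverse. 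Your fallback via Corollary \ref{ldet}, namely $\det(1-\mat{F}\mat{G})\neq 0$, does not by itself force the series to converge; to close the argument along your lines you would need to add that $\mat{F}\mat{G}$ has nonnegative entries, so by Perron--Frobenius its spectral radius is an eigenvalue, and since $\norm{\mat{F}\mat{G}}\leqslant 1$ a spectral radius equal to $1$ would make the determinant vanish; hence the spectral radius is strictly less than $1$ and the series converges in norm. Either repair is short, but one of them is needed.
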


\begin{proof}
By a similar argument to that of the preceding proof, we can show that for any couple of vertices $v,v'\in V_{F}\cup V_{G}$, the sum
\begin{equation}
\sum_{\pi\in\path{v,v'}{F,G}} \omega_{F\bicol G}(\pi)\nonumber
\end{equation}
is convergent. Supposing that $v,v'\in V_{F}\cap V_{G}$, and since $\omega_{F\bicol G}$ is always positive, it follows that the sum of $\omega_{F\bicol G}(\pi)$ over all paths $\pi$ that begin with an edge in $G$ and ends with an edge in $F$ is convergent. Which means that $\<\sum_{k=0}^{\infty} (\mat{F}\mat{G})^{k} e_{v},e_{v'}\>$ is convergent for all couple $v,v'\in V_{F}\cap V_{G}$. Hence  $1-\mat{G}\mat{F}$ is invertible, and the solution of the feedback equation is the hermitian of norm at most $1$ defined by
\begin{equation}
S=(p_{V_{F}'}\mat{F}+p_{V_{G}'})(1-\mat{G}\mat{F})^{-1}(\mat{G}p_{V_{G}'}+p_{V_{F}'})\nonumber
\end{equation}
It is a straightforward computation to show that $S=\mat{H}$. Let us write $V_{F}'=V_{F}-V_{G}$ and $V_{G}'=V_{G}-V_{F}$. The value $\omega_{H}((v,v'))=\<\mat{H}e_{v},e_{v'}\>=H_{v}^{v'}$ is given by
\begin{equation}
H_{v}^{v'}=\left\{\begin{array}{ll}
\<\sum_{k=0}^{\infty}(\mat{F}\mat{G})^{k}\mat{F}e_{v},e_{v'}\>&\text{for $v,v'\in V_{F}'$}\\
\<\mat{G}\sum_{k=0}^{\infty}(\mat{F}\mat{G})^{k}\mat{F}e_{v},e_{v'}\>&\text{for $v\in V_{F}',v'\in V_{G}'$}\\
\<\mat{F}\sum_{k=0}^{\infty}(\mat{G}\mat{F})^{k}\mat{G}e_{v},e_{v'}\>&\text{for $v\in V_{G}',v'\in V_{F}'$}\\
\<\sum_{k=0}^{\infty}(\mat{G}\mat{F})^{k}\mat{G}e_{v},e_{v'}\>&\text{for $v,v'\in V_{G}'$}
\end{array}\right.
\end{equation}
Thus, $\mat{H}$ is equal to $S$.
%\begin{equation}
%\mat{H}=\left(\begin{array}{cc}
%	H_{1,1}	&	H_{1,2}\\
%	H_{2,1}&H_{2,2}
%	\end{array}\right)
%\label{matH}
%\end{equation}
%where
%\begin{eqnarray}
%H_{1,1}&=&p_{V_{F}'}\sum_{k=0}^{\infty}(\mat{F}\mat{G})^{k}\mat{F}p_{V_{F}'}\nonumber\\
%H_{1,2}&=&p_{V_{F}'}\mat{F}\sum_{k=0}^{\infty}(\mat{G}\mat{F})^{k}\mat{G}p_{V_{G}'}\nonumber\\
%H_{2,1}&=&p_{V_{G}'}\mat{G}\sum_{k=0}^{\infty}(\mat{F}\mat{G})^{k}\mat{F}p_{V_{F}'}\nonumber\\
%H_{2,2}&=&p_{V_{G}'}\sum_{k=0}^{\infty}(\mat{G}\mat{F})^{k}\mat{G}p_{V_{G}'}\nonumber
%\end{eqnarray}
%which is clearly equal to $S$.
\end{proof}

\begin{proposition}[Adjunction]\label{adjunctmatrix}
Let $F,G_{1},G_{2}$ be operator graphs with $V_{F}=V_{G_{1}}\cup V_{G_{2}}$ and $V_{G_{1}}\cap V_{G_{2}}=\emptyset$. Suppose $H=\what{F\plug G_{1}}$ is total. We have the following adjunction.
\begin{equation}
\scalar{F,G_{1}\cup G_{2}}=\scalar{F,G_{1}}+\scalar{H,G_{2}}\nonumber
\end{equation}
\end{proposition}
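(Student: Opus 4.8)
The plan is to reduce the statement to the graph-theoretic adjunction already established in Theorem \ref{adjunctweight}, and then to rewrite the residual term using the invariance result of Proposition \ref{invariant}. Since $V_{G_1}\cap V_{G_2}=\emptyset$ and $V_{G_1}\cup V_{G_2}=V_F$, the hypotheses of Theorem \ref{adjunctweight} are met (with $G=G_1$ and $H=G_2$), so I would first write
\begin{equation}
\scalar{F,G_1\cup G_2}=\scalar{F,G_1}+\scalar{F\plug G_1,G_2}.\nonumber
\end{equation}
It then remains to identify $\scalar{F\plug G_1,G_2}$ with $\scalar{H,G_2}$, i.e. to replace the (non-simple) reduction graph $F\plug G_1$ by its simplification $H=\what{F\plug G_1}$.

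For that step I would use that the pairing $\scalar{\cdot,\cdot}$ is symmetric — swapping the two colours is a weight-preserving bijection between $\cycl{F\plug G_1,G_2}$ and $\cycl{G_2,F\plug G_1}$ — together with Proposition \ref{invariant} applied to the second argument:
\begin{equation}
\scalar{F\plug G_1,G_2}=\scalar{G_2,F\plug G_1}=\scalar{G_2,\what{F\plug G_1}}=\scalar{H,G_2}.\nonumber
\end{equation}
The carriers match throughout, since the carrier of $F\plug G_1$ is $V_F\Delta V_{G_1}=V_{G_2}$. The totality hypothesis on $H$ is exactly what guarantees that $H$ is a genuine operator graph with finite weights, so that $\scalar{H,G_2}$ is well defined and the sums manipulated in Proposition \ref{invariant} converge; combined with Proposition \ref{execution} it moreover certifies that this simplified graph $H$ is precisely the feedback operator $\mat{H}=\mat{\what{F\plug G_1}}$ appearing on the right-hand side, although for the bare equality of measurements only the finiteness of the weights of $H$ is needed. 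Substituting the second displayed equality into the first yields the claim.

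A second, more operator-theoretic route — which is really the content of the $ldet$ adjunction quoted in the introduction — is to pass through Corollary \ref{ldet} and turn everything into a determinant identity. Writing the space over $V_F$ as $\mathcal{H}_1\oplus\mathcal{H}_2$ according to $V_F=V_{G_1}\uplus V_{G_2}$, the operators $\mat{G_1},\mat{G_2}$ are supported on $\mathcal{H}_1,\mathcal{H}_2$ respectively, and the goal becomes the block identity
\begin{equation}
\det(1-\mat{F}(\mat{G_1}+\mat{G_2}))=\det(1-\mat{F}\mat{G_1})\cdot\det(1-\mat{H}\mat{G_2}).\nonumber
\end{equation}
I would prove this by a Schur-complement expansion of $1-\mat{F}(\mat{G_1}+\mat{G_2})$ along the $\mathcal{H}_1$ block, whose determinant is $\det(1-\mat{F}\mat{G_1})$, and then check that the resulting Schur complement is exactly $1-\mat{H}\mat{G_2}$. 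The main obstacle in this approach is this last verification: denoting by $F_{ij}$ the blocks of $\mat{F}$ and by $G_1$ the nonzero block of $\mat{G_1}$, the Schur complement is $F_{22}+F_{21}G_1(1-F_{11}G_1)^{-1}F_{12}$, and one must identify it with the explicit feedback solution $\mat{H}$ of Proposition \ref{execution}. This comes down to the push-through (resolvent) identity $(1-G_1F_{11})^{-1}G_1=G_1(1-F_{11}G_1)^{-1}$, after which taking $-\log$ of the determinant identity gives the adjunction.
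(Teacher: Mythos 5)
Your first argument is exactly the paper's proof: the paper disposes of this proposition in one line, as ``a straightforward corollary of Proposition \ref{invariant} and the adjunction on graphs'', and your write-up merely makes explicit the two points it leaves tacit, namely the symmetry of $\scalar{\cdot,\cdot}$ needed to apply Proposition \ref{invariant} in the \emph{first} argument, and the carrier computation $V_{F}\Delta V_{G_{1}}=V_{G_{2}}$. So on the main route there is nothing to correct.

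Your second, determinant-theoretic route is genuinely different and is sound generically, but as sketched it has one gap: the hypothesis that $H=\what{F\plug G_{1}}$ is total does \emph{not} force $1-F_{11}G_{1}$ to be invertible, so the Schur complement need not exist. For instance, take $V_{G_{1}}=\{1\}$, $V_{G_{2}}=\{2\}$, let $F$ and $G_{1}$ each carry a weight-$1$ loop at vertex $1$ with $F_{12}=F_{21}=0$: then $H$ is total while $\det(1-F_{11}G_{1})=0$ and $\scalar{F,G_{1}}=\infty$. In that degenerate case you must argue separately that both sides of the adjunction are $\infty$, which is immediate on circuits since $\cycl{F,G_{1}}\subseteq\cycl{F,G_{1}\cup G_{2}}$ with equal weights, so $\scalar{F,G_{1}}=\infty$ forces $\scalar{F,G_{1}\cup G_{2}}=\infty$. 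In the invertible case your identification does work (note the Schur complement is $1-\bigl(F_{22}+F_{21}G_{1}(1-F_{11}G_{1})^{-1}F_{12}\bigr)G_{2}$ rather than the bracketed operator itself), and the convergence of the Neumann series $\sum_{k}(F_{11}G_{1})^{k}$ implicit in matching it against the path-sum entries of $\mat{H}$ from Proposition \ref{execution} is guaranteed because the matrices are entrywise nonnegative: by Perron--Frobenius, invertibility of $1-F_{11}G_{1}$ then forces spectral radius strictly below $1$. This case split is precisely what the paper's circuit-counting route avoids, since there all quantities live in $[0,\infty]$ from the start; what your second route buys in exchange is a direct proof of the $ldet$ three-term identity in the form Girard uses it, without passing through graphs at all.
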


\begin{proof}
This is a straightforward corollary of Proposition \ref{invariant} and the adjunction on graphs.
\end{proof}

\begin{remark}
We can define a restriction of our framework to operator graphs by replacing the composition of two graphs $F$ and $G$ by $\what{F\plug G}$. Then, all results of the previous sections hold, since the adjunction holds. This restricted version can moreover be rephrased by replacing graphs by matrices, since all our construction can be translated as operator-theoretical constructions. The model thus obtained, an intermediate framework between Girard's geometry of interaction in the hyperfinite factor and our own framework, is a finite-dimensional version of Girard's approach, as the following section will show.
\end{remark}

All the results of this section can be used in two different ways. They are of some importance in themselves, since — as we explained in the last remark — the restriction to operator projects gives rise to a “type $\text{I}$ geometry of interaction", i.e. a geometry of interaction whose principal objects are matrices. This geometry of interaction can be shown to have all properties of the graph framework defined in section \ref{goi}, and gives rise to a $\ast$-autonomous category  (see section \ref{denot}) and a notion of truth (section \ref{truth}) in the same way graphs do. We won't go any further in that direction since it is a simple adaptation of what we have done precedently. 

The second direction is given by the fact that from these results, one can define an embedding of operator projects into “hyperfinite projects", i.e. projects of Girard last geometry of interaction, and show that, through this embedding, the measurement between projects (i.e. the interaction), and the basic constructions on graphs corresponds to Girard's measurement and constructions.

\section{The Hyperfinite Factor}\label{hyp}

In this section, we will use notations and definitions of the geometry of interaction in the hyperfinite factor, that can be found in the original article \cite{goi5}. The reader will find an overview of the main used notions in the appendix. Our aim is to show how we can map operator projects (projects whose graph is an operator graph) to projects of GoI5 that preserves the measurement between projects $\sca{a}{b}$, and all the basic operations (tensor product, execution). We will first recall some definitions of Girard's GoI5, and the proceed to define the embedding and state the correspondence.

\subsection{Girard's GoI5}

We begin with the definition of the Fuglede-Kadison determinant, and a technical result concerning this determinant (Proposition \ref{tracepreserving}) that will be used in the next subsection. Then, we will make a quick overview of Girard's definitions.

In  any C$^{\ast}$-algebra, elements of the form $A^{\ast}A$ are called positive. Every positive element has a unique square root, and by analogy with the complex numbers, this square root is denoted by $\abs{A}$.

\begin{definition}[Fuglede-Kadison determinant]\label{FKdet}
Let $\mathcal{A}$ be a finite factor, and $T$ its normalized trace. Define, on the group of invertible operators, the Fuglede-Kadison determinant
\begin{equation}
\Delta(A)=exp(T(log(\abs{A})))\nonumber
\end{equation}
Then $\Delta$ can be extended\footnote{In the original article \cite{FKdet}, two extensions are considered: the "algebraic extension" and the "analytic extension". It is shown, however, that none of these extensions is continuous, and the term "extension" we use should not be confused with "extension by continuity".} to $\mathcal{A}$.
\end{definition}

\begin{remark} The Fuglede-Kadison determinant takes only positive values.
\end{remark}

We will also use the following lemma, which is proved in the original paper by Fuglede and Kadison \cite{FKdet}.

\begin{lemma}\label{FKnullspace}
Let $det_{FK}$ be any extension of the Fuglede-Kadison determinant to $\mathcal{A}$. If $u$ is an arbitrary operator with a non-trivial nullspace, $det_{FK}(u)=0$.
\end{lemma}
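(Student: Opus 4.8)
The plan is to reduce the statement to positive operators and then to exploit the faithfulness of the trace together with the singularity of $\log$ at the origin. First I would record that, by its very definition (Definition \ref{FKdet}), the determinant depends on $u$ only through its modulus $\abs{u}=(u^{\ast}u)^{1/2}$: on invertibles $\Delta(u)=\exp(T(\log\abs{u}))$, and any Fuglede-Kadison extension is likewise built from $\abs{u}$, so $det_{FK}(u)=det_{FK}(\abs{u})$. Moreover $\ker u=\ker\abs{u}$, since $\norm{ux}^{2}=\langle u^{\ast}ux,x\rangle=\langle\abs{u}^{2}x,x\rangle=\norm{\abs{u}x}^{2}$ forces $ux=0$ exactly when $\abs{u}x=0$. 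Hence it suffices to treat a positive operator $a=\abs{u}\geq 0$ with non-trivial nullspace.

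Next I would pass to the spectral picture. Let $E$ be the spectral measure of $a$ and set $\mu=T\circ E$, a Borel probability measure carried by $\sigma(a)\subseteq[0,\norm{a}]$ (recall $T$ is normalized, $T(1)=1$). The nullspace of $a$ is the range of the spectral projection $p=E(\{0\})$, which is non-zero by hypothesis. The decisive point is that $\mathcal{A}$ is a \emph{finite} factor, so its normalized trace $T$ is faithful; hence $\mu(\{0\})=T(p)>0$, i.e. the distribution of $a$ has an atom of strictly positive mass at $0$.

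Finally I would compute the determinant through the trace-spectral formula that both the algebraic and the analytic extension of \cite{FKdet} are shown to satisfy, namely $det_{FK}(a)=\Delta(a)=\exp\!\big(\int_{\sigma(a)}\log\lambda\,d\mu(\lambda)\big)$, with the convention $\log 0=-\infty$. Splitting the integral over $\{0\}$ and over $(0,\norm{a}]$, the second piece is bounded above by the finite quantity $\mu\big((0,\norm{a}]\big)\log\norm{a}<\infty$, while the atom contributes $\mu(\{0\})\cdot(-\infty)=-\infty$. Thus $\int\log\lambda\,d\mu(\lambda)=-\infty$ and $\Delta(a)=e^{-\infty}=0$, whence $det_{FK}(u)=0$ by the reduction of the first step.

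The main obstacle is the clause ``any extension'': one must argue that the vanishing is independent of the particular way the determinant is pushed off the invertibles. I would handle this by verifying that each of the two extensions of \cite{FKdet} is given by (or is dominated by) the trace-spectral integral above, so that the atom-at-$0$ argument applies uniformly; the only genuinely analytic inputs are the faithfulness of the normalized trace on a finite factor — which guarantees $\mu(\{0\})>0$ — and the divergence of $\log$ at the origin. This also explains why the hypothesis is \emph{non-trivial nullspace} and not mere non-invertibility: a positive operator whose distribution reaches $0$ without an atom there can still have $\int\log\lambda\,d\mu>-\infty$, hence a non-zero determinant.
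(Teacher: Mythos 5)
The paper does not actually prove this lemma: it is stated as a quotation from the original Fuglede--Kadison article \cite{FKdet} (``which is proved in the original paper''), so there is no in-paper argument to compare yours against. Your reconstruction is, in substance, the classical proof, and it is correct. The reduction to $a=\abs{u}$ is sound ($\ker u=\ker\abs{u}$ by the computation you give, and the determinant is defined through $\abs{u}$ in every version in play), and the core mechanism --- faithfulness of the normalized trace forcing an atom $\mu(\{0\})=T(E(\{0\}))>0$, combined with the divergence of $\log$ at the origin --- is exactly what makes the statement true; your closing remark correctly isolates why the hypothesis is a non-trivial nullspace rather than mere non-invertibility. Two points deserve tightening. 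First, for $\mu=T\circ E$ to be a countably additive Borel measure, and for the monotone-convergence step identifying $\inf_{\epsilon>0}\Delta\bigl((u^{\ast}u+\epsilon)^{1/2}\bigr)$ with $\exp\bigl(\int\log\lambda\,d\mu(\lambda)\bigr)$, you need \emph{normality} of the trace and not only faithfulness; on a finite factor the trace is automatically normal, but this is the property actually being used and should be invoked. Second, as you yourself flag, ``any extension'' cannot mean an arbitrary function extending $\Delta$ off the invertible group --- for such extensions the statement is false (declare the extension to be $1$ on singular operators) --- it means the two extensions of \cite{FKdet} named in the paper's footnote to Definition \ref{FKdet}. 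Your trace-spectral formula settles the analytic extension completely; for the algebraic extension your ``I would verify'' is only a promissory note, but it is easily discharged: either that extension vanishes on non-invertible operators essentially by construction, or, retaining only multiplicativity, write $u=u(1-p)$ with $p=E(\{0\})\neq 0$, observe $det_{FK}(1-p)=0$ by your atom argument applied to the projection $1-p$, and conclude
\begin{equation}
det_{FK}(u)=det_{FK}\bigl(u(1-p)\bigr)=det_{FK}(u)\,det_{FK}(1-p)=0,\nonumber
\end{equation}
which is legitimate because $det_{FK}(u)\leqslant\norm{u}<\infty$. With these two repairs your argument is a complete and faithful substitute for the citation.
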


For the remaining definitions of this subsection, we consider we have chosen a trace $tr$ on the hyperfinite factor $\infhyp $ of type $\text{II}_{\infty}$ once and for all. Moreover, if $\alpha$ is a normal hermitian tracial form on a finite von Neumann algebra $\mathcal{A}$ and $p\in\infhyp $ is a finite projection, then $(p\infhyp  p)\otimes\mathcal{A}$ is a finite von Neumann algebra, and one can define on it the (Fuglede-Kadison) determinant as an extension of the following expression (defined on the group of invertible operators):
\begin{equation}
det^{p}_{tr\otimes\alpha}(A)=exp(tr_{\upharpoonright_{p\infhyp  p}}\otimes\alpha(log(\abs{A})))\nonumber
\end{equation}

\begin{definition}[Girard's project]
A \emph{Girard's project} will be a tuple $\de{a}=(p,a,\mathcal{A},\alpha,A)$ consisting of:
\begin{itemize}
\item a finite projection $p^{\ast}=p^{2}=p\in\infhyp $, the \emph{carrier} of the project $\de{a}$;
\item a real number (eventually infinite) $a\in\mathbf{R}\cup\{\infty\}$, the \emph{wager} of $\de{a}$;
\item a finite and hyperfinite von Neumann algebra $\mathcal{A}$, the \emph{dialect} of $\de{a}$;
\item a normal hermitian tracial form $\alpha$ on $\mathcal{A}$, the \emph{diatrace} of $\de{a}$;
\item a self-adjoint operator $A\in (p\infhyp  p)\otimes\mathcal{A}$ such that $\norm{A}\leqslant 1$.
\end{itemize}
As in Girard's paper, we will denote such an object by $\de{a}=a\cdot +\cdot \alpha+A$.
\end{definition}

For the following definitions, one needs to define two maps. Let $\mathcal{A,B}$ be finite von Neumann algebras, and $A,B$ be operators in respectively $\infhyp \otimes\mathcal{A}$ and $\infhyp \otimes\mathcal{B}$. We define $A^{\dagger_{\mathcal{B}}}$ and $B^{\ddagger_{\mathcal{A}}}$ through the following maps (defining $\tau:\mathcal{B}\otimes\mathcal{A}\rightarrow \mathcal{A}\otimes\mathcal{B}$ in the obvious way):
\begin{equation}
\begin{array}{lrclrcl}
(\cdot)^{\dagger_{\mathcal{B}}}: &\infhyp \otimes\mathcal{A}&\rightarrow&\infhyp \otimes\mathcal{A}\otimes\mathcal{B}, &A&\mapsto& A\otimes 1_{\mathcal{B}}\\
(\cdot)^{\ddagger_{\mathcal{A}}}: &\infhyp \otimes\mathcal{B}&\rightarrow&\infhyp \otimes\mathcal{A}\otimes\mathcal{B}, &B&\mapsto& (Id\otimes\tau)(B\otimes 1_{\mathcal{A}})
\end{array}\nonumber
\end{equation}

\begin{definition}[Orthogonality]
Let $\de{a}=a\cdot+\cdot\alpha+A$ and $\de{b}=b\cdot+\cdot\beta+B$ be two projects with the same carrier $p$. Define the measurement:
\begin{equation}
\sca{a}{b}=a\beta(1_{\mathcal{B}})+\alpha(1_{\mathcal{A}})b-log(det^{p}_{tr\otimes\alpha\otimes\beta}(p-A^{\dagger_{\mathcal{B}}}B^{\ddagger_{\mathcal{A}}}))\nonumber
\end{equation}
We say that $\de{a,b}$ are orthogonal, written $\de{a}\poll\de{b}$, when $\sca{a}{b}\neq 0,\infty$.
\end{definition}

\begin{definition}[Tensor Product]
Let $\de{a}=a\cdot+\cdot\alpha+A$ and $\de{b}=b\cdot+\cdot\beta+B$ be two projects of respective carriers $p,q$ with $pq=0$. The tensor product is defined as:
\begin{eqnarray}
\de{a\otimes b}&=&\sca{a}{b}\cdot +\cdot\alpha\otimes\beta+A^{\dagger_{\mathcal{B}}}+B^{\ddagger_{\mathcal{A}}}\nonumber\\
&=&(p+q,a\beta(1_{\mathcal{B}})+\alpha(1_{\mathcal{A}})b,\mathcal{A\otimes B},\alpha\otimes\beta,A^{\dagger_{\mathcal{B}}}+B^{\ddagger_{\mathcal{A}}})\nonumber
\end{eqnarray}
\end{definition}

\begin{definition}[Cut]
Let $\de{a}=a\cdot+\cdot\alpha+A$ and $\de{b}=b\cdot+\cdot\beta+B$ be two projects of respective carriers $p+q,q+r$ with $pq=qr=pr=0$. The cut is defined, when the feedback equation involving $A^{\dagger_{\mathcal{B}}}$ and $B^{\ddagger_{\mathcal{A}}}$ has a solution (denoted by $A^{\dagger_{\mathcal{B}}}\plug B^{\ddagger_{\mathcal{A}}}$), by:
\begin{eqnarray}
\de{a\plug b}&=&\sca{a}{b}\cdot +\cdot\alpha\otimes\beta+A^{\dagger_{\mathcal{B}}}\plug B^{\ddagger_{\mathcal{A}}}\nonumber\\
&=&(p+r,\sca{a}{b},\mathcal{A\otimes B},\alpha\otimes\beta,A^{\dagger_{\mathcal{B}}}\plug B^{\ddagger_{\mathcal{A}}})\nonumber
\end{eqnarray}
\end{definition}

\subsection{Embedding the graphs}

In this last part of the paper, we will associate to an operator graph an operator in the hyperfinite factor of type $\text{II}_{\infty}$. This allows us to associate to operator projects (projects whose graph is an operator graph) a Girard's project. We will show that this embedding preserves the measurement between projects, giving a combinatorial interpretation to Girard's measurement based on Fuglede-Kadison determinant. Moreover this embedding preserves both the tensor product and cut operations, so the interaction graphs can be seen as a combinatorial approach to Girard's GoI5.

As the reader will notice, the embedding comes down to a simple embedding of graphs in $\mathcal{B}(\hil{H})$. Thus we do not use the "type \text{II}" part of the hyperfinite factor. This can be explained very easily. As long as multiplicatives (and additives) are concerned, the use of a von Neumann algebra other than $\mathcal{B}(\hil{H})$ does not change much. The special features of type $\text{II}$ factors play a role in Girard's setting when dealing with second order quantification (though we believe a combinatorial approach would also work in this case) and with exponentiation.

Moreover, dialects and diatraces will not play a role in this paper, since they are not important when dealing with multiplicatives\footnote{They are important when dealing with additives and to define the contraction rule.}. Thus, the dialects of the Girard's project we will obtain through our embedding will all be equal to $\mathbb{C}$, and the diatrace will be the identity on $\mathbb{C}$ — denoted $1_{\mathbb{C}}$.

Following Girard, we will consider a trace $tr$ on $\infhyp $ given once and for all. For this reason, if $p$ is a finite projection the induced trace on $p\infhyp p$ is not normalized since $tr(1_{p\infhyp p})=tr(p)$. We will therefore denote (abusively) by $det_{FK}$ any extension of the usual Fuglede-Kadison determinant $\Delta$ on $p\infhyp p$ at the power $tr(p)$, a choice that is explained by the following remark.

\begin{remark}
Let $tr$ denote our fixed trace, $\lambda=tr(p)$, and let $T=tr/\lambda$ denote the normalized trace. Then for all invertible operator $A\in p\infhyp p$,
\begin{equation}
\Delta(A)^{\lambda}=exp(\lambda T(log(\abs{A})))=exp(tr(log(\abs{A})))\nonumber
\end{equation}
Hence the Fuglede-Kadison determinant raised to the power $\lambda$ corresponds to the “determinant" defined as in Definition \ref{FKdet} with a non-normalized trace such that $tr(1)=\lambda$ instead of the normalized trace $T$.
\end{remark}

\begin{proposition}\label{tracepreserving}
Let $\xi$ be a trace-preserving $\ast$-morphism from $\mathcal{M}_{n}(\mathbb{C})$ to $\infhyp $, and $u$ a matrix such that $\norm{u}\leqslant 1$, then 
\begin{equation}
det_{FK}(\xi(1-u))=\abs{det(1-u)}\nonumber
\end{equation}
\end{proposition}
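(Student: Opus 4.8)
The plan is to reduce everything to two standard facts — that a $\ast$-morphism commutes with continuous functional calculus, and that $\xi$ is trace-preserving — and to treat the invertible and non-invertible cases of $1-u$ separately, exactly as in the preceding lemma.

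First I would deal with the case where $1$ is not an eigenvalue of $u$, so that $1-u$ is invertible. Writing $det_{FK}(A)=exp(tr(log\abs{A}))$ (the extension at the power $tr(p)$ described in the remark above), the key observation is that, being a $\ast$-morphism, $\xi$ preserves continuous functional calculus on self-adjoint elements. In particular, from $\xi(1-u)^{\ast}\xi(1-u)=\xi((1-u)^{\ast}(1-u))$ and the uniqueness of positive square roots one gets $\abs{\xi(1-u)}=\xi(\abs{1-u})$; and since $1-u$ is invertible, $\abs{1-u}$ has spectrum bounded away from $0$, so $log$ is continuous there and $log\abs{\xi(1-u)}=\xi(log\abs{1-u})$. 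Applying $tr$ and using that $\xi$ is trace-preserving, so that $tr\circ\xi$ is the usual (unnormalized) trace $Tr$ on $\mathcal{M}_{n}(\mathbb{C})$, I would obtain $tr(log\abs{\xi(1-u)})=Tr(log\abs{1-u})=\sum_{i}log\,\sigma_{i}$, where the $\sigma_{i}$ are the singular values of $1-u$. Since $\prod_{i}\sigma_{i}=det(\abs{1-u})=\abs{det(1-u)}$, exponentiating gives $det_{FK}(\xi(1-u))=\abs{det(1-u)}$.

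It remains to handle the case where $1$ is an eigenvalue of $u$, so that $det(1-u)=0$ and the right-hand side vanishes. Here I would invoke Lemma \ref{FKnullspace}: it suffices to show that $\xi(1-u)$ has non-trivial nullspace. Choosing a non-zero vector $v$ with $(1-u)v=0$ and letting $E$ be a non-zero matrix all of whose columns equal $v$, we have $(1-u)E=0$, hence $\xi(1-u)\xi(E)=\xi((1-u)E)=0$ with $\xi(E)\neq 0$, since a $\ast$-morphism out of the simple algebra $\mathcal{M}_{n}(\mathbb{C})$ is injective. Picking $\psi$ with $\xi(E)\psi\neq 0$, the vector $\xi(E)\psi$ lies in the kernel of $\xi(1-u)$, so the latter has non-trivial nullspace and $det_{FK}(\xi(1-u))=0=\abs{det(1-u)}$.

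The routine part is the invertible case, where everything is just functional calculus together with the definition of the Fuglede-Kadison determinant and the trace-preservation of $\xi$. The main point to get right is the commutation $\xi(f(a))=f(\xi(a))$ for $f$ the square root and the logarithm, which is the standard fact that $\ast$-homomorphisms of C$^{\ast}$-algebras preserve continuous functional calculus; and, in the degenerate case, the passage from \emph{$1-u$ singular} to \emph{$\xi(1-u)$ has non-trivial nullspace on $\hil{H}$}, for which the matrix-$E$ trick above is the cleanest route, since it uses only injectivity of $\xi$ rather than any structure theory of finite factors.
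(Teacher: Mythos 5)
Your proof is correct and takes essentially the same route as the paper's: the invertible case via the commutation of the $\ast$-morphism $\xi$ with the continuous functional calculus (square root and logarithm) together with trace preservation, and the singular case via Lemma \ref{FKnullspace}. If anything, your handling of the singular case is more explicit than the paper's, which passes directly from $1\in\text{Spec}_{\infhyp}(\xi(u))$ to the existence of a non-trivial nullspace; your argument with $\xi(E)$ (equivalently, with $\xi(q)$ for $q$ the projection onto $\ker(1-u)$), using injectivity of $\xi$ out of the simple algebra $\mathcal{M}_{n}(\mathbb{C})$, supplies exactly the detail that step glosses over.
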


\begin{proof}
Let $B_{1}$ be the unit ball of $\mathbb{C}$ and first suppose that $\text{Spec}_{\mathcal{M}_{n}(\mathbb{C})}(u)\subset B_{1}-\{1\}$. Then $\xi(u)$ satisfies $\text{Spec}_{\infhyp }(\xi(u))\subset B_{1}-\{1\}$ since the spectrum of $\xi(u)$ is contained in the spectrum of $u$. Moreover, $\xi$ is a $\ast$-homomorphism, and therefore commutes with the functional calculus, which means it commutes with the logarithm and the square root. Hence 
\begin{eqnarray}
det_{FK}(1-\xi(u))&=&exp(tr(log(|1-\xi(u)|)))\nonumber\\
&=&exp(tr(\xi(log(|1-u]))))\nonumber\\
&=&exp(tr(log(|1-u|)))\nonumber\\
&=&det(|1-u|)\nonumber\\
&=&det(((1-u)^{\ast}(1-u))^{\frac{1}{2}})\nonumber\\
&=&(det((1-u)^{\ast}(1-u)))^{\frac{1}{2}}\nonumber\\
&=&\abs{det(1-u)}\nonumber
\end{eqnarray}

Now, if $1\in\text{Spec}_{\mathcal{M}_{n}(\mathbb{C})}(u)$, then $1\in \text{Spec}_{\infhyp }(\xi(u))$ and the operators $u$ and $\xi(u)$ both have a nullspace, hence satisfy $det_{FK}(1-\xi(u))=0=\abs{det(1-u)}$ (using Lemma \ref{FKnullspace} for the left-hand equality).
\end{proof}

We now define the embedding, on operator projects.
\begin{definition}[Operator Project]
An operator project is a project $\de{a}=(a,A)$ where $A$ is an operator graph.
\end{definition}

From now on, we will write the hyperfinite factor $\infhyp $ of type $\text{II}_{\infty}$ as $\mathcal{B}(\hil{H})\vntimes\finhyp$, where $\finhyp$ denotes the hyperfinite factor of type $\text{II}_{1}$. We moreover consider the trace $tr$ defined as the tensor product of the normalized traces on $\finhyp$ and $\mathcal{B}(\hil{H})$. 

Let us denote by $\Phi$ the $\ast$-morphism $\mathcal{B}(\hil{H})\rightarrow \mathcal{B}(\hil{H})\otimes\finhyp$ defined as $a\mapsto a\otimes 1_{\finhyp}$. We associate to each operator project $\de{a}=(a,A)$ a Girard's project $\Phi(\de{a})=a\cdot+\cdot 1_{\mathbb{C}} + \Phi(\mat{A})$ of carrier $\Phi(p_{V_{A}})$, with $\mat{A}$ seen as an operator in $p_{V_{A}}\mathcal{B}(\hil{H})p_{V_{A}}$.

\begin{theorem}\label{meashyp}
The embedding preserves orthogonality and measurement, i.e. for any operator projects $\de{a}$ and $\de{b}$, we have $\sca{a}{b}=\mathopen{\ll}\Phi(\de{a}),\Phi(\de{b})\mathclose{\gg}$. Moreover, $\de{a}\poll\de{b}\Leftrightarrow \Phi(\de{a})\poll\Phi(\de{b})$.
\end{theorem}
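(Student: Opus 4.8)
The plan is to unfold Girard's measurement $\mathopen{\ll}\Phi(\de{a}),\Phi(\de{b})\mathclose{\gg}$ on the two embedded projects and show, term by term, that it reduces to the graph measurement $\sca{a}{b}=a+b+\scalar{A,B}$; the orthogonality equivalence will then be immediate. Write $\de{a}=(a,A)$ and $\de{b}=(b,B)$; since both measurements are only defined when the carriers coincide, I work under the assumption $V_{A}=V_{B}=V$ and set $n=\sharp V$, so that everything lives in the finite corner $p_{V}\mathcal{B}(\hil{H})p_{V}\cong\mathcal{M}_{n}(\mathbb{C})$ and in its image $p\infhyp p$ under $\Phi$, where $p=\Phi(p_{V})$.

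First I would use that the embedding assigns every operator project the dialect $\mathbb{C}$ and the diatrace $1_{\mathbb{C}}$. Then $\alpha(1_{\mathcal{A}})=\beta(1_{\mathcal{B}})=1$, so the wager term $a\beta(1_{\mathcal{B}})+\alpha(1_{\mathcal{A}})b$ collapses to $a+b$. Moreover, with $\mathcal{A}=\mathcal{B}=\mathbb{C}$ the maps $(\cdot)^{\dagger_{\mathcal{B}}}$ and $(\cdot)^{\ddagger_{\mathcal{A}}}$ are just the canonical identifications $\infhyp\otimes\mathbb{C}\cong\infhyp$, so $A^{\dagger_{\mathcal{B}}}B^{\ddagger_{\mathcal{A}}}=\Phi(\mat{A})\Phi(\mat{B})=\Phi(\mat{A}\mat{B})$, the last equality because $\Phi$ is a $\ast$-morphism. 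Hence the argument of the Fuglede--Kadison determinant is $p-\Phi(\mat{A}\mat{B})=\Phi(p_{V}-\mat{A}\mat{B})=\Phi(1-\mat{A}\mat{B})$, and the whole measurement reads $a+b-log(det^{p}_{tr\otimes 1_{\mathbb{C}}\otimes 1_{\mathbb{C}}}(\Phi(1-\mat{A}\mat{B})))$.

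Next I would apply Proposition \ref{tracepreserving}. The restriction of $\Phi$ to $p_{V}\mathcal{B}(\hil{H})p_{V}$ is a trace-preserving $\ast$-morphism into $p\infhyp p$ --- one checks $tr(\Phi(x))=Tr(x)$ since $tr=Tr\otimes\tau$ with $\tau(1_{\finhyp})=1$ --- and $\norm{\mat{A}\mat{B}}\leqslant\norm{\mat{A}}\norm{\mat{B}}\leqslant 1$ because $A$ and $B$ are operator graphs. Taking $u=\mat{A}\mat{B}$ in the proposition therefore gives $det^{p}_{tr\otimes 1_{\mathbb{C}}\otimes 1_{\mathbb{C}}}(\Phi(1-\mat{A}\mat{B}))=\abs{det(1-\mat{A}\mat{B})}$, so that $\mathopen{\ll}\Phi(\de{a}),\Phi(\de{b})\mathclose{\gg}=a+b-log\abs{det(1-\mat{A}\mat{B})}$.

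The one genuine obstacle is the absolute value: Corollary \ref{ldet} gives $\scalar{A,B}=-log(det(1-\mat{A}\mat{B}))$ with no absolute value, so I must argue that $det(1-\mat{A}\mat{B})$ is already a non-negative real. I would do this through the combinatorial reading of Proposition \ref{measures}: since $A$ and $B$ are simple symmetric graphs with positive weights, $\mat{A}$ and $\mat{B}$ are real symmetric matrices and each $Tr((\mat{A}\mat{B})^{k})$ is the non-negative sum of the weights (counted with multiplicity) of the alternating circuits of length $2k$ in $A\bicol B$. Hence $\scalar{A,B}=\sum_{k\geqslant 1}Tr((\mat{A}\mat{B})^{k})/k\geqslant 0$ and $det(1-\mat{A}\mat{B})=exp(-\scalar{A,B})\in{]0,1]}$ (degenerating to $0$ when $\scalar{A,B}=\infty$), so the absolute value is vacuous. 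Combining with Corollary \ref{ldet} then yields $\mathopen{\ll}\Phi(\de{a}),\Phi(\de{b})\mathclose{\gg}=a+b+\scalar{A,B}=\sca{a}{b}$. Finally, as both orthogonality relations are by definition the assertion that the corresponding measurement lies outside $\{0,\infty\}$, the equivalence $\de{a}\poll\de{b}\Leftrightarrow\Phi(\de{a})\poll\Phi(\de{b})$ follows at once from this equality of measurements.
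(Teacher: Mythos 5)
Your proof is correct and follows essentially the same route as the paper's: unfold Girard's measurement with trivial dialect $\mathbb{C}$ and diatrace $1_{\mathbb{C}}$ (so the wager term collapses to $a+b$ and the $\dagger/\ddagger$ maps disappear), apply Proposition \ref{tracepreserving} to the trace-preserving $\ast$-morphism $\Phi$, and conclude via Corollary \ref{ldet}. The only (harmless) divergence is how the absolute value is discharged: the paper notes that $\mat{A}\mat{B}$ is a real matrix of norm $\leqslant 1$, so $\abs{det(1-\mat{A}\mat{B})}=det(1-\mat{A}\mat{B})$, whereas you deduce $det(1-\mat{A}\mat{B})=exp(-\scalar{A,B})\in[0,1]$ from the non-negativity of $\scalar{A,B}$ via Proposition \ref{measures} --- both arguments are valid, yours being slightly more self-contained since it never appeals to spectral properties of real matrices.
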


\begin{proof}
The map $\Phi$ is obviously a trace-preserving injective $\ast$-morphism, hence its restrictions to $p \hil{H}p$, where $p$ is a finite projection, satisfy the hypotheses of Proposition \ref{tracepreserving}. From the facts that $\norm{\mat{A}\mat{B}}\leqslant 1$ and that $\mat{A}\mat{B}$ is a real matrix, we have that $\abs{det(1-\mat{A}\mat{B})}=det(1-\mat{A}\mat{B})$. Then, by Corollary \ref{ldet} and Proposition \ref{tracepreserving}, we obtain: 
\begin{eqnarray}
\sca{a}{b}&=&-log(det(1-\mat{A}\mat{B}))\nonumber\\
&=&-log(det_{FK}(1-(\mat{A}\otimes 1_{\finhyp})(\mat{B}\otimes 1_{\finhyp})))\nonumber\\
&=&\mathopen{\ll}\Phi(\de{a}),\Phi(\de{b})\mathclose{\gg}\nonumber
\end{eqnarray}

It immediately follows that $\de{a}\poll\de{b}$ if and only if $\Phi(\de{a})\poll\Phi(\de{b})$.
\end{proof}

\begin{theorem}
Let $\de{a}$ and $\de{b}$ be operator projects with disjoint carriers. Then 
\begin{equation}
\Phi(\de{a}\otimes\de{b})=\Phi(\de{a})\otimes\Phi(\de{b})\nonumber
\end{equation}
\end{theorem}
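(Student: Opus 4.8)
The plan is to prove the equality of two Girard's projects by unfolding both sides into their five components (carrier, wager, dialect, diatrace, operator) and checking that these agree one by one. Both $\Phi(\de{a}\otimes\de{b})$ and $\Phi(\de{a})\otimes\Phi(\de{b})$ are Girard's projects whose dialect is the trivial algebra $\mathbb{C}$ and whose diatrace is $1_{\mathbb{C}}$ (on the right this is because $\mathbb{C}\otimes\mathbb{C}\cong\mathbb{C}$ and $1_{\mathbb{C}}\otimes 1_{\mathbb{C}}=1_{\mathbb{C}}$), so those two components match automatically. The real content lies in the carrier, the wager, and the operator.

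First I would unfold the left-hand side. By the tensor product on projects, $\de{a}\otimes\de{b}=(\sca{a}{b},A\cup B)$, and since the carriers are disjoint $\sca{a}{b}=a+b$. Applying the embedding, $\Phi(\de{a}\otimes\de{b})$ has wager $a+b$, carrier $\Phi(p_{V_{A\cup B}})$, and operator $\Phi(\mat{A\cup B})$. Here disjointness does the work: because $V_A\cap V_B=\emptyset$ we have $p_{V_A\cup V_B}=p_{V_A}+p_{V_B}$ and, viewing adjacency matrices as operators in $\mathcal{B}(\hil{H})$, $\mat{A\cup B}=\mat{A}+\mat{B}$ with the two summands supported on the orthogonal projections $p_{V_A},p_{V_B}$. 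Using linearity of the $\ast$-morphism $\Phi$, the carrier becomes $\Phi(p_{V_A})+\Phi(p_{V_B})=p+q$ and the operator becomes $\Phi(\mat{A})+\Phi(\mat{B})$.

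Next I would unfold the right-hand side via Girard's tensor product. Writing $\Phi(\de{a})=(p,a,\mathbb{C},1_{\mathbb{C}},\Phi(\mat{A}))$ and $\Phi(\de{b})=(q,b,\mathbb{C},1_{\mathbb{C}},\Phi(\mat{B}))$ with $pq=\Phi(p_{V_A}p_{V_B})=0$, Girard's definition gives carrier $p+q$, wager $a\,1_{\mathbb{C}}(1_{\mathbb{C}})+1_{\mathbb{C}}(1_{\mathbb{C}})\,b=a+b$, and operator $\Phi(\mat{A})^{\dagger_{\mathbb{C}}}+\Phi(\mat{B})^{\ddagger_{\mathbb{C}}}$. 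The crux is to observe that for the trivial dialect the maps $(\cdot)^{\dagger_{\mathbb{C}}}$ and $(\cdot)^{\ddagger_{\mathbb{C}}}$ reduce to the identity under the canonical isomorphisms $\infhyp\otimes\mathbb{C}\cong\infhyp$ and $\infhyp\otimes\mathbb{C}\otimes\mathbb{C}\cong\infhyp$ (here $\tau:\mathbb{C}\otimes\mathbb{C}\to\mathbb{C}\otimes\mathbb{C}$ is trivial), so the operator is exactly $\Phi(\mat{A})+\Phi(\mat{B})$. Comparing with the left-hand side, every component agrees and the theorem follows.

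I expect the main obstacle to be bookkeeping rather than a genuine difficulty: I must carefully track the identifications $\infhyp\otimes\mathbb{C}\cong\infhyp$ that render $\dagger_{\mathbb{C}}$ and $\ddagger_{\mathbb{C}}$ invisible, and justify that disjointness of the carriers is precisely what produces both the block decomposition $\mat{A\cup B}=\mat{A}+\mat{B}$ and the collapse $\sca{a}{b}=a+b$ — equivalently, that the determinant term in Girard's measurement vanishes because $\Phi(\mat{A})^{\dagger_{\mathbb{C}}}\Phi(\mat{B})^{\ddagger_{\mathbb{C}}}=0$ whenever $pq=0$. Once these identifications are in place, the verification is a direct componentwise comparison.
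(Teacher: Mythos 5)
Your proof is correct and takes essentially the same route as the paper's: both rest on the block decomposition $\mat{A\cup B}=\mat{A}+\mat{B}$ for disjoint vertex sets (the paper phrases it as $\mat{A}\oplus\mat{B}$ viewed as a sum of localized operators), after which the equality follows componentwise. Your explicit checks that the maps $(\cdot)^{\dagger_{\mathbb{C}}}$ and $(\cdot)^{\ddagger_{\mathbb{C}}}$ collapse to the identity and that the determinant term in Girard's wager vanishes when $pq=0$ simply make precise what the paper's terse computation leaves implicit.
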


\begin{proof}
It is immediate that, when $A,B$ are simple graphs on disjoint sets of vertices, $\mat{A\cup B}$ is equal to $\mat{A}\oplus\mat{B}$. Since $\mat{A}$ and $\mat{B}$ are considered as operators in $p_{V_{A}}\mathcal{B}(\hil{H})p_{V_{A}}$ and $p_{V_{B}}\mathcal{B}(\hil{H})p_{V_{B}}$, their direct sum, as an operator in $(p_{V_{A}}+p_{V_{B}})\mathcal{B}(\hil{H})(p_{V_{A}}+p_{V_{B}})$ is equal to $\mat{A}+\mat{B}$. Hence 
\begin{eqnarray}
\Phi(a+b,A\cup B)&=&a+b\cdot+\cdot 1_{\mathbb{C}}+\mat{A\cup B}\otimes 1_{\finhyp}\nonumber\\
&=&a+b\cdot+\cdot 1_{\mathbb{C}}+(\mat{A}+\mat{B})\otimes 1_{\finhyp}\nonumber\\
&=&a+b\cdot+\cdot 1_{\mathbb{C}}+\mat{A}\otimes 1_{\finhyp}+\mat{B}\otimes 1_{\finhyp}\nonumber\\
&=&\Phi(a,A)\otimes\Phi(b,B)\nonumber\qedhere
\end{eqnarray}
\end{proof}

\begin{theorem}
Let $\de{a}$ and $\de{b}$ be operator projects, with $\sca{a}{b}\neq\infty$ (i.e. $\de{a}\plug\de{b}$ is defined). Then $\Phi(\de{a}\plug \de{b})=\Phi(\de{a})\plug\Phi(\de{b})$.
\end{theorem}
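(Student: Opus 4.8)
The plan is to compare the two sides componentwise, exploiting that the embedded operators live in the copy $\mathcal{B}(\hil{H})\vntimes 1_{\finhyp}$ of $\mathcal{B}(\hil{H})$ inside $\infhyp$, so that Girard's cut restricts to the feedback construction already analysed in Proposition \ref{execution}. Writing $\de{a}=(a,A)$ and $\de{b}=(b,B)$ with $A,B$ operator graphs, the operator-project cut stays inside operator graphs via the construction $\what{A\plug B}$ (the restriction described after Proposition \ref{adjunctmatrix}), so the left-hand side is $\Phi(\sca{a}{b},\what{A\plug B})=\sca{a}{b}\cdot+\cdot 1_{\mathbb{C}}+\Phi(\mat{\what{A\plug B}})$, of carrier $\Phi(p_{V_{A}\Delta V_{B}})$. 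On the right-hand side, since both dialects equal $\mathbb{C}$, the diatrace is $1_{\mathbb{C}}\vntimes 1_{\mathbb{C}}=1_{\mathbb{C}}$ and, after the identification $\mathbb{C}\vntimes\mathbb{C}\cong\mathbb{C}$, the maps $(\cdot)^{\dagger_{\mathbb{C}}}$ and $(\cdot)^{\ddagger_{\mathbb{C}}}$ are the identity; thus Girard's cut produces $\mathopen{\ll}\Phi(\de{a}),\Phi(\de{b})\mathclose{\gg}\cdot+\cdot 1_{\mathbb{C}}+(\Phi(\mat{A})\plug\Phi(\mat{B}))$, where $\Phi(\mat{A})\plug\Phi(\mat{B})$ denotes the solution of the feedback equation between $\Phi(\mat{A})$ and $\Phi(\mat{B})$.

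First I would settle the wager and the carrier. By Theorem \ref{meashyp} the measurement is preserved, so $\mathopen{\ll}\Phi(\de{a}),\Phi(\de{b})\mathclose{\gg}=\sca{a}{b}$, and the two wagers coincide. For the carriers, decompose $\Phi(p_{V_{A}})=p+q$ and $\Phi(p_{V_{B}})=q+r$ with $q=\Phi(p_{V_{A}\cap V_{B}})$, $p=\Phi(p_{V_{A}\setminus V_{B}})$ and $r=\Phi(p_{V_{B}\setminus V_{A}})$; these are pairwise orthogonal since they project onto disjoint families of basis vectors, so $pq=qr=pr=0$ and the cut is legal, with resulting carrier $p+r=\Phi(p_{V_{A}\Delta V_{B}})$, matching the left-hand side.

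The remaining and main point is that the operator components agree, i.e. that $\Phi(\mat{\what{A\plug B}})=\Phi(\mat{A})\plug\Phi(\mat{B})$. Since $\sca{a}{b}\neq\infty$, Proposition \ref{execution} applies and $\mat{\what{A\plug B}}$ is the solution $S$ of the feedback equation between $\mat{A}$ and $\mat{B}$, given there by an explicit algebraic expression in $\mat{A}$, $\mat{B}$, the projections $p_{V_{A}\setminus V_{B}},p_{V_{B}\setminus V_{A}}$ and the inverse $(1-\mat{B}\mat{A})^{-1}$. The map $\Phi\colon x\mapsto x\vntimes 1_{\finhyp}$ is a unital injective $\ast$-morphism, hence it commutes with sums, products and adjoints, preserves the projections, and (being injective) preserves invertibility together with inverses, so $\Phi((1-\mat{B}\mat{A})^{-1})=(1-\Phi(\mat{B})\Phi(\mat{A}))^{-1}$. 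Applying $\Phi$ to the formula of Proposition \ref{execution} therefore yields exactly the feedback solution for the pair $\Phi(\mat{A}),\Phi(\mat{B})$; equivalently, since $\Phi(\mat{A}),\Phi(\mat{B})$ lie in $\mathcal{B}(\hil{H})\vntimes 1_{\finhyp}$, the whole feedback computation takes place in this subalgebra and returns $S\vntimes 1_{\finhyp}=\Phi(\mat{\what{A\plug B}})$. The one place demanding care is checking that the feedback equation on the $\infhyp$ side does have a solution and that it is the one produced by Girard's cut: existence follows because $1-\Phi(\mat{B})\Phi(\mat{A})$ is invertible (its preimage is, by the argument of Proposition \ref{execution} using $\sca{a}{b}\neq\infty$, and $\Phi$ reflects invertibility), and uniqueness of the feedback solution then forces it to coincide with $\Phi(S)$. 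Combining the three matchings gives $\Phi(\de{a}\plug\de{b})=\Phi(\de{a})\plug\Phi(\de{b})$.
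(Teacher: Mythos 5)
Your proof is correct and takes essentially the same route as the paper's: preservation of the wager via Theorem \ref{meashyp}, and the observation that applying $\Phi\colon x\mapsto x\vntimes 1_{\finhyp}$ to the explicit feedback solution of Proposition \ref{execution} yields the feedback solution for $\Phi(\mat{A}),\Phi(\mat{B})$ --- the step the paper compresses into ``it is then clear that $\mat{F}\otimes 1_{\finhyp}$ is solution to the feedback equation,'' which you usefully spell out, together with the carrier and dialect bookkeeping the paper leaves implicit. One terminological nitpick: where you say $\Phi$ ``reflects'' invertibility, you only need that it \emph{preserves} it, which is the trivial direction, since $\Phi\bigl((1-\mat{B}\mat{A})^{-1}\bigr)$ is directly an inverse of $1-\Phi(\mat{B})\Phi(\mat{A})$ in the relevant corner algebra.
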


\begin{proof}
Let $(f,F)=\de{a}\plug\de{b}$. We showed that $\mat{F}$ is solution to the feedback equation between $\mat{A}$ and $\mat{B}$ (Proposition \ref{execution}). It is then clear that $\mat{F}\otimes 1_{\finhyp}$ is solution to the feedback equation between $\mat{A}\otimes 1_{\finhyp}$ and $\mat{B}\otimes 1_{\finhyp}$. Therefore $\mat{F}\otimes 1_{\finhyp}=\mat{A}\otimes 1_{\finhyp}\plug \mat{B}\otimes 1_{\finhyp}$. Moreover, we showed that $\sca{a}{b}=\mathopen{\ll}\Phi(\de{a}),\Phi(\de{b})\mathclose{\gg}$. Hence $\Phi((f,F))=\Phi(\de{a})\plug\Phi(\de{b})$.
\end{proof}

The last three theorems show how our framework can be regarded as a combinatorial approach to the operator algebraic construction of Girard \cite{goi5}. However, nothing insures us that our notion of success is preserved, and some hard work is required for that. In order to have a success-preserving embedding, we would need to construct a more explicit embedding of $\mathcal{B}(\hil{H})$ into the hyperfinite factor $\infhyp $ of type $\text{II}_{\infty}$ by means of operators obtained by pre-composition with measure-preserving maps. Such a construction would be very involved, and we believe that it extends beyond the scope of this paper.

\section{Conclusion}

We have shown how we can define a localized semantics where objects are graphs which yields a denotational semantics and a notion of truth. Eventually, we showed how we can reformulate all of our notions (when restricted to a certain class of graphs) in linear algebraic terms, which corresponds to GoI5 definitions exactly. Thus, we can consider our geometry of interaction as a combinatorial approach to GoI5. However, there is still some work to be done, and we develop below some directions that seem most interesting.

\paragraph{Extending our interaction graphs to additive and exponential connectives.} Our construction is very close to the construction that appears in Girard's last paper \cite{goi5}, but one could argue that its simplicity is due to its restriction to multiplicative connectives. We are however convinced that we will be able to construct all the required tools for the construction of additive and exponential connectives. Two directions seem to be of interest. The first one consists in extending our notion of graph by considering edge-colored graphs — which seems very promising, and the second would be to extend the set of possible weights to matrices — which is very close to Girard's use of \emph{dialects}.

\paragraph{An embedding in the hyperfinite factor that preserves truth.} In order to obtain an equivalence between the notions of success in our framework and Girard's geometry of interaction, it is necessary to use partial isometries that are the image (w.r.t. the viewpoint) of a partial measure-preserving bijection of $\mathbb{R}$ with the Lebesgue measure. It seems that we can give explicitly the projections and partial isometries used to represent matrices in the hyperfinite factor. Such a construction would therefore contain a rather technical explicit construction of the hyperfinite factor as a crossed product which we believe extended beyond the scope of this paper.

\paragraph{Extending our interaction graphs to allow sums of projects.} As we pointed out in the first section, the results obtained in sections \ref{goi} and \ref{denot} do not depend upon the choice of the function $m:]0,1]\rightarrow \mathbb{R}_{\geqslant 0}\cup\{\infty\}$ used to measure $1$-cycles. Indeed, the choice of the function $m(x)=-log(1-x)$ we used in this paper is needed for the sole purpose of the embedding into Girard's setting. In particular, it seems that this variability in the definitions could be interesting were we to look for a way to adapt our model to differential linear logic by allowing non-deterministic sums of projects.

\bibliographystyle{alpha}
\bibliography{goimult}

\appendix
\section{On von Neumann Algebras}\label{appendix}

This appendix is a survey of important results in the theory of operator algebras that we thought would help the reader grasp some of the technical results used in sections \ref{matrix} and \ref{hyp}. In the following, we will suppose familiarity with the notions of linear maps, Banach spaces, and Hilbert spaces. The reader interested in learning more of the theory can read the book of Murphy \cite{murphy} on the theory of C$^{\ast}$-algebras, the classical books of Kadison and Ringrose \cite{kadison1,kadison2}, or the more recent and quite complete series of Takesaki \cite{takesaki1,takesaki2,takesaki3}.

We here present some results of \emph{separable} operator algebras. We will consider given a Hilbert space $\hil{H}$ of infinite denumerable dimension, together with its inner product $\<\cdot,\cdot\>$ and the associated norm $\norm{\cdot}$, and develop the theory of operators (and algebras of operators) on $\hil{H}$. In particular, the notions we define are that of \emph{separable} concrete C$^{\ast}$-algebra, and \emph{separable} von Neumann algebra, even if it will never be explicit in the text.

\paragraph{Operators and Adjoints}~\\

We recall that an operator $T$ is a linear map from $\hil{H}$ to $\hil{H}$ that is continuous. The set of operators on $\hil{H}$ is denoted by $\mathcal{B}(\hil{H})$. A standard result tells us that $T$ being continuous is equivalent to $T$ being bounded, i.e. that there exists a constant $C$ such that for all $\xi\in\hil{H}$, $\norm{T\xi}\leqslant C\norm{\xi}$. The smallest such constant defines a norm on $\mathcal{B}(\hil{H})$ which we will denote by $\norm{T}$.

Being given an operator $T$ in $\mathcal{B}(\hil{H})$, we can show the existence of its \emph{adjoint} — denoted by $T^{\ast}$, the operator that satisfies $\< T\xi,\eta\>=\<\xi,T^{\ast}\eta\>$ for all $\xi,\eta\in\hil{H}$. It is easily shown that $T^{\ast\ast}=T$, i.e. that $(\cdot)^{\ast}$ is an involution, and that is satisfies the following conditions:
\begin{enumerate}
\item For all $\lambda\in\mathbb{C}$ and $T\in\mathcal{B}(\hil{H})$, $(\lambda T)^{\ast}=\bar{\lambda}T^{\ast}$;
\item For all $S,T\in\mathcal{B}(\hil{H})$, $(S+T)^{\ast}=S^{\ast}+T^{\ast}$;
\item For all $S,T\in\mathcal{B}(\hil{H})$, $(ST)^{\ast}=T^{\ast}S^{\ast}$.
\end{enumerate}

\paragraph{Topologies}~\\

In a Hilbert space $\hil{H}$ there are two natural topologies, the topology induced by the norm on $\hil{H}$, and a weaker topology defined by the inner product.
\begin{enumerate}
\item The strong topology: we say a sequence $\{\xi_{i}\}_{i\in\mathbb{N}}$ converges strongly to $0$ when $\norm{\xi_{i}}\rightarrow 0$.
\item The weak topology: a sequence $\{\xi_{i}\}_{i\in\mathbb{N}}$ converges weakly to $0$ when $\<\xi_{i},\eta\>\rightarrow 0$ for all $\eta\in\mathcal{B}(\hil{H})$. Weak convergence is thus a point-wise or direction-wise convergence.
\end{enumerate}

On $\mathcal{B}(\hil{H})$, numerous topologies can be defined, each of which having its own advantages and disadvantages. The five most important topologies are the norm topology, the strong operator topology, the weak operator topology, the ultra-strong (or $\sigma$-strong) topology and the ultra-weak (or $\sigma$-weak) topology. We can easily characterize the first three topologies in terms of converging sequences as follows :
\begin{enumerate}
\item The norm topology: $\{T_{i}\}_{i\in\mathbb{N}}$ converges (for the norm) to $0$ when $\norm{T_{i}}\rightarrow 0$;
\item The strong operator topology (SOT), which is induced by the strong topology on $\hil{H}$: $\{T_{i}\}_{i\in\mathbb{N}}$ converges strongly to $0$ when, for any $\xi\in\hil{H}$, $T_{i}\xi$ converges strongly to $0$;
\item The weak operator topology (WOT), which is induced by the weak topology on $\hil{H}$: $\{T_{i}\}_{i\in\mathbb{N}}$ converges weakly to $0$ when, for any $\xi\in\hil{H}$, $T_{i}\xi$ converges weakly to $0$.
\end{enumerate}

We won't however give the definitions of the ultra-strong and ultra-weak topologies here, and refer the interested reader to any standard textbook.

\paragraph{C$^{\ast}$-algebra and the continuous functional calculus}~\\

Notice that $\mathcal{B}(\hil{H})$, together with addition, composition and scalar multiplication has an algebra structure. An algebra possessing an involution $(\cdot)^{\ast}$ satisfying conditions (1)-(3) is called a $\ast$-algebra, or \emph{involutive algebra}. A $\ast$-subalgebra will therefore be a subalgebra of an involutive algebra which is closed under the involution. A $\ast$-morphism $\phi$ is an algebra morphism that satisfies $\phi(a^{\ast})=\phi(a)^{\ast}$.

\begin{Adefinition}[C$^{\ast}$-algebras]
A (concrete) C$^{\ast}$-algebra is a $\ast$-subalgebra of $\mathcal{B}(\hil{H})$ which is  norm-closed.
\end{Adefinition}

\begin{remark}
The abstract definition of a C$^{\ast}$-algebra says that $A$ is a C$^{\ast}$-algebra when $A$ is a Banach algebra satisfying $\norm{a}=\norm{a^{\ast}}$ and $\norm{a^{\ast}a}=\norm{a}^{2}$. The construction of Gelf'and-Naimark-Segal shows that any (abstract) C$^{\ast}$-algebra can be represented as a concrete C$^{\ast}$-algebra on a suitable Hilbert space.
\end{remark}

\begin{Adefinition}[Spectrum]
Let $T$ be an operator in a C$^{\ast}$-algebra $A$. We define its spectrum by $\text{Spec}_{A}(T)=\{\lambda\in \mathbb{C}~|~T-\lambda.1\text{ is not invertible in $A$}\}$.
\end{Adefinition}

\begin{remark}
In finite dimension, the spectrum of an operator (which can be written as a matrix in a given basis) is just the set of its eigenvalues.
\end{remark}

\begin{remark}
Let $X$ be a compact Hausdorff space. Let us write $C(X)$ for the set of continuous functions from $X$ to $\mathbb{C}$. Then it is a commutative C$^{\ast}$-algebra when considered with complex scalar pointwise multiplication and addition of functions, and and where $(\cdot)^{\ast}$ is defined by $f^{\ast}(x)=\overline{f(x)}$ ($\overline{(\cdot)}$ denotes complex conjugation). In this case, the spectrum of a function $f$ is its \emph{image}, i.e. the set $f(X)$.
\end{remark}

\begin{remark}
The spectrum of an operator $a$ in a unital Banach algebra $A$ is a non-empty closed subset of the disc of radius $\norm{a}$ centered on $0$ in the complex plane.
\end{remark}

\begin{Atheorem}[Continuous functional calculus]\label{spectral}
Let $a$ be a normal (i.e. $aa^{\ast}=a^{\ast}a$) element of a unital C$^{\ast}$-algebra $A$, and let $z$ be the inclusion map of $\text{Spec}_{A}(a)$ in $\mathbb{C}$. Then there exists a unique unital (i.e. $\phi(1)=1$) $\ast$-homomorphism $\phi : C(\text{Spec}_{A}(a))\rightarrow A$ such that $\phi(z)=a$.
\end{Atheorem}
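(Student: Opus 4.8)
The plan is to reduce the statement to the commutative Gelfand--Naimark theorem by passing to the C$^{\ast}$-subalgebra generated by $a$. First I would let $B$ denote the smallest norm-closed unital $\ast$-subalgebra of $A$ containing $a$; since $a$ is normal, $a$ and $a^{\ast}$ commute, so $B$ is a \emph{commutative} unital C$^{\ast}$-algebra. The point of working inside $B$ rather than $A$ is that the spectrum is unchanged: by spectral permanence for C$^{\ast}$-algebras one has $\text{Spec}_{B}(a)=\text{Spec}_{A}(a)$, so it makes no difference which algebra we use to compute the spectrum of $a$.

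Next I would invoke the Gelfand transform. Writing $\Omega(B)$ for the character space of $B$ (the nonzero $\ast$-homomorphisms $B\rightarrow\mathbb{C}$, with the weak-$\ast$ topology), the Gelfand--Naimark theorem provides an isometric $\ast$-isomorphism $\Gamma\colon B\rightarrow C(\Omega(B))$, $b\mapsto\hat{b}$, where $\hat{b}(\tau)=\tau(b)$. The key geometric step is to identify $\Omega(B)$ with $\text{Spec}_{A}(a)$: the evaluation map $\tau\mapsto\tau(a)$ is continuous, its image is exactly $\text{Spec}_{B}(a)=\text{Spec}_{A}(a)$, and it is injective because a character of $B$ is determined by its value on the generator $a$ (its value on $a^{\ast}$ being the conjugate). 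A continuous bijection between compact Hausdorff spaces is a homeomorphism, so $\hat{a}\colon\Omega(B)\xrightarrow{\sim}\text{Spec}_{A}(a)$. Pulling back along this homeomorphism gives an isometric $\ast$-isomorphism $C(\text{Spec}_{A}(a))\rightarrow C(\Omega(B))$, $f\mapsto f\circ\hat{a}$, and composing with $\Gamma^{-1}$ yields the desired map $\phi\colon C(\text{Spec}_{A}(a))\rightarrow A$. By construction $\phi(z)=\Gamma^{-1}(z\circ\hat{a})=\Gamma^{-1}(\hat{a})=a$ and $\phi(1)=1$, so existence is established.

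For uniqueness I would combine Stone--Weierstrass with continuity. Any two unital $\ast$-homomorphisms $\phi,\psi$ sending $z$ to $a$ must agree on $z$ and, since they preserve the involution, also on $\bar{z}$ (both send it to $a^{\ast}$); hence they agree on the $\ast$-subalgebra of $C(\text{Spec}_{A}(a))$ generated by $z$, i.e. on all polynomials in $z$ and $\bar{z}$. This subalgebra is self-adjoint, contains the constants, and separates the points of $\text{Spec}_{A}(a)$, so by Stone--Weierstrass it is dense in $C(\text{Spec}_{A}(a))$. Finally, every $\ast$-homomorphism between C$^{\ast}$-algebras is norm-decreasing, hence continuous, so $\phi$ and $\psi$ agree on the whole algebra, giving $\phi=\psi$.

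The genuinely substantial input is the commutative Gelfand--Naimark theorem (the isometric isomorphism $B\cong C(\Omega(B))$) together with spectral permanence; these are the heart of the matter, and in a fully self-contained treatment they, rather than the bookkeeping above, would be the main obstacle. Granting those two results, the remaining work — the homeomorphism $\Omega(B)\cong\text{Spec}_{A}(a)$ and the Stone--Weierstrass uniqueness argument — is routine.
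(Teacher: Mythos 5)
Your proof is correct, and it is the canonical argument: pass to the commutative unital C$^{\ast}$-subalgebra generated by $a$, use spectral permanence and the commutative Gelfand--Naimark theorem to identify it with $C(\text{Spec}_{A}(a))$ via the homeomorphism between the character space and the spectrum, and obtain uniqueness from Stone--Weierstrass together with the automatic continuity (contractivity) of $\ast$-homomorphisms. Note, however, that the paper itself gives no proof of this statement --- it appears in the appendix as part of a background survey, with the reader deferred to the cited textbooks (e.g.\ Murphy) where precisely your argument is the one given --- so your proposal matches the intended standard proof rather than diverging from it.
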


\begin{remark}
If $f$ is a continuous function on $\text{Spec}_{A}(a)$, we define $f(a)$ as the operator $\phi(f)\in A$. It satisfies $\text{Spec}_{A}(f(a))=f(\text{Spec}_{A}(a))$.
\end{remark}

\paragraph{von Neumann Algebras}~\\

\begin{Adefinition}[von Neumann algebras]
A von Neumann algebra is a SOT-closed $\ast$-subalgebra of $\mathcal{B}(\hil{H})$.
\end{Adefinition}

Let $M\subset \mathcal{B}(\hil{H})$. We define the commutant of $M$ to be the set $M'=\{x\in\mathcal{B}(\hil{H})~|~\forall m\in M, mx=xm\}$. We will denote by $M''$ the double commutant $(M')'$ of $M$.

\begin{Atheorem}[von Neumann double commutation theorem]
Let $M$ be a $\ast$-subalgebra of $\mathcal{B}(\hil{H})$ with $1_{\hil{H}}\in M$. Then $M$ is a von Neumann algebra if and only if $M=M''$.
\end{Atheorem}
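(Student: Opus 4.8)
The plan is to prove the two implications separately, the reverse one being essentially formal and the forward one carrying all the content. First I would observe that for \emph{any} subset $S\subseteq\mathcal{B}(\hil{H})$ the commutant $S'$ is an SOT-closed unital $\ast$-subalgebra: it is the intersection over $s\in S$ of the sets $\{x\mid sx=xs\}$, and for each fixed $s$ the maps $x\mapsto sx$ and $x\mapsto xs$ are SOT-continuous, so each such set, being the preimage of $\{0\}$ under $x\mapsto sx-xs$, is SOT-closed. Consequently $M''=(M')'$ is always SOT-closed, so if $M=M''$ then $M$ is an SOT-closed $\ast$-subalgebra, i.e. a von Neumann algebra; this settles the implication $M=M''\Rightarrow M$ von Neumann. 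For the converse, assume $M$ is a von Neumann algebra. The inclusion $M\subseteq M''$ is immediate, so it suffices to prove $M''\subseteq M$, and since $M$ is SOT-closed this reduces to showing $M$ is SOT-dense in $M''$: for every $T\in M''$, every finite family $\xi_{1},\dots,\xi_{n}\in\hil{H}$ and every $\epsilon>0$ there should be $S\in M$ with $\norm{(T-S)\xi_{i}}<\epsilon$ for all $i$.

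I would first establish the single-vector case, noting that the argument uses only that $M$ is a unital $\ast$-subalgebra. Fix $\xi\in\hil{H}$, set $K=\overline{M\xi}$, and let $p$ be the orthogonal projection onto $K$. The key claim is $p\in M'$: since $1_{\hil{H}}\in M$ and $M$ is an algebra, $K$ is $M$-invariant, and because $M$ is self-adjoint the orthogonal complement $K^{\perp}$ is $M$-invariant as well (for $\eta\in K^{\perp}$ and $m,m'\in M$ one computes $\<m\eta,m'\xi\>=\<\eta,m^{\ast}m'\xi\>=0$), so $p$ commutes with every element of $M$. Now take $T\in M''$. As $T$ commutes with $p\in M'$ we have $Tp=pT$, and since $\xi=1_{\hil{H}}\xi\in K$ gives $p\xi=\xi$, we obtain $T\xi=Tp\xi=pT\xi\in K=\overline{M\xi}$; hence some $S\in M$ satisfies $\norm{(T-S)\xi}<\epsilon$.

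The general case follows by the standard amplification. On $\hil{H}^{n}=\bigoplus_{i=1}^{n}\hil{H}$ I identify $\mathcal{B}(\hil{H}^{n})$ with $n\times n$ matrices over $\mathcal{B}(\hil{H})$ and set $M^{(n)}=\{\mathrm{diag}(m,\dots,m)\mid m\in M\}$, again a unital $\ast$-subalgebra. A direct matrix computation gives $(M^{(n)})'=M_{n}(M')$, and hence $(M^{(n)})''=\{\mathrm{diag}(T,\dots,T)\mid T\in M''\}$. Applying the single-vector argument to $M^{(n)}$, the vector $(\xi_{1},\dots,\xi_{n})\in\hil{H}^{n}$ and the operator $\mathrm{diag}(T,\dots,T)$ yields $S\in M$ with $\norm{(T-S)\xi_{i}}<\epsilon$ simultaneously for all $i$. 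This establishes that $M$ is SOT-dense in $M''$, and combined with the SOT-closedness of $M$ and the inclusion $M\subseteq M''$ it gives $M=M''$.

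The main obstacle is the geometric heart of the argument: identifying the projection onto $\overline{M\xi}$ and verifying it lies in $M'$, which is precisely where the self-adjointness of $M$ and the presence of the unit $1_{\hil{H}}$ are both indispensable (without self-adjointness $K^{\perp}$ need not be invariant, and without the unit one cannot guarantee $\xi\in K$). Everything else — the automatic SOT-closedness of commutants and the amplification from a single vector to finitely many — is formal bookkeeping around this one idea.
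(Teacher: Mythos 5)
Your proof is correct. Note that the paper itself gives no proof of this statement: it appears in the appendix, which is an expository survey deferring to the standard references (Murphy, Kadison--Ringrose, Takesaki), and your argument is exactly the classical bicommutant proof found there --- SOT-closedness of commutants for the easy implication, the projection onto the cyclic subspace $\overline{M\xi}$ lying in $M'$ for single-vector approximation, and the amplification to $\hil{H}^{n}$ via $(M^{(n)})'=\mathcal{M}_{n}(M')$ for finitely many vectors. One negligible slip: for an \emph{arbitrary} subset $S\subseteq\mathcal{B}(\hil{H})$ the commutant $S'$ is an SOT-closed unital subalgebra but is closed under adjoints only when $S$ is self-adjoint; this is harmless here, since the forward implication needs only SOT-closedness ($M$ being a $\ast$-subalgebra by hypothesis), and where $\ast$-closedness matters $M$, hence $M'$ and $M''$, is self-adjoint.
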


\begin{remark} Since the strong operator topology is weaker than the norm topology, a von Neumann algebra $M$ is also norm closed, hence a C$^{\ast}$-algebra. Moreover, since $M$ is the commutant of a set of operators, it contains the identity operator of $\mathcal{B}(\hil{H})$, hence it is a unital C$^{\ast}$-algebra. Therefore, we can define the continuous functional calculus for any normal operator of $M$.
\end{remark}

\paragraph{Factors and Types}~\\

Let $M$ be a von Neumann algebra. We define the \emph{center} of $M$ to be the von Neumann algebra $\mathcal{Z}(M)=M\cap M'$. 
\begin{Adefinition}[Factor]
A von Neumann algebra $M$ is called a \emph{factor} when its center is trivial, i.e. when $\mathcal{Z}(M)=\mathbb{C}.1_{M}$.
\end{Adefinition}

There exists a classification of factors based on the study of the set of projections and operators (partial isometries) between them. We recall that a projection is an operator $p$ satisfying $p^{2}=p=p^{\ast}$. If $M$ is a von Neumann algebra, we will write $\Pi(M)$ the set of all projections in $M$. We say two projections $p,q$ are disjoint when $pq=0$. It is standard that we can define a partial order on the set of projection $\Pi(\mathcal{B}(\hil{H}))$ by saying that $p\preceq q$ when $pq=p$. If $M$ is a von Neumann algebra, the restriction of this partial order to $M$ is obviously a partial order on $\Pi(M)$. 

An operator $u$ such that $u^{\ast}u$ is a projection (or equivalently, $uu^{\ast}$ is a projection) is called a partial isometry. In a von Neumann algebra $M$, we can define an equivalence relation on $\Pi(M)$ by saying that $p\sim_{M} q$ when there exists a partial isometry $u\in M$ such that $uu^{\ast}=p$ and $u^{\ast}u=q$. 

The partial order on the set of projections gives rise to a partial order $\precsim_{M}$ on the equivalence classes of projections, i.e. on $\Pi(M)/\sim$.

\begin{remark}
The fact that $p\preceq q$ means that $p\hil{H}$ is a subspace of $q\hil{H}$. The fact that $p\sim_{M} q$ means that $p\hil{H}$ and $q\hil{H}$ are \emph{internally (w.r.t. $M$) isomorphic}, i.e. they are isomorphic through an isomorphism $\phi$ which is an element of $M$. Therefore, the fact that $p\precsim_{M} q$ means that $p\hil{H}$ is \emph{internally} isomorphic to a subspace of $q\hil{H}$, hence that $p\hil{H}$ is somewhat \emph{internally (w.r.t $M$) smaller} than $q\hil{H}$.
\end{remark}

\begin{Adefinition}
We say a projection $p$ in a von Neumann algebra $M$ is infinite (in $M$) when there exists $q\prec p$ (i.e. a proper subprojection) such that $q\sim_{M} p$. A projection which is not infinite is called finite.
\end{Adefinition}

\begin{Aproposition}
Let $M$ be a von Neumann algebra. Then $M$ is a factor if and only if the relation $\precsim_{M}$ is a total order.
\end{Aproposition}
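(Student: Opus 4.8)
The plan is to prove the two implications separately, the forward one resting on the Murray--von Neumann comparison theorem and the backward one by a short contrapositive argument.

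First I would treat the direction ``$M$ a factor $\Rightarrow \precsim_{M}$ total''. The essential input here is the \emph{comparison theorem}: for any two projections $p,q\in\Pi(M)$ there exists a central projection $z\in\mathcal{Z}(M)$ such that $zp\precsim_{M}zq$ and $(1-z)q\precsim_{M}(1-z)p$. I would invoke this as a known result, since it is the standard tool underlying the whole type classification. Once it is available the argument is immediate: as $M$ is a factor, $\mathcal{Z}(M)=\mathbb{C}\cdot 1_{M}$, so the only projections it contains are $0$ and $1$. Taking $z=1$ in the comparison theorem yields $p\precsim_{M}q$, while $z=0$ yields $q\precsim_{M}p$; in either case the two classes are comparable, so $\precsim_{M}$ is a total order.

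For the converse I would argue contrapositively: assuming $M$ is \emph{not} a factor, I would exhibit two incomparable projections. Since $\mathcal{Z}(M)\neq\mathbb{C}\cdot 1_{M}$, there is a central projection $z$ with $z\neq 0$ and $z\neq 1$, and I claim $z$ and $1-z$ are incomparable for $\precsim_{M}$. The key elementary lemma is that whenever $e$ is a central projection and $p\precsim_{M}e$, one has $p\preceq e$: writing $p\sim_{M}q$ with $q\preceq e$ through a partial isometry $u$ satisfying $u^{\ast}u=q$ and $uu^{\ast}=p$, the identity $u=uu^{\ast}u=uq$ together with $qe=q$ gives $ue=uqe=uq=u$, whence $eu^{\ast}=u^{\ast}$ by taking adjoints, and finally $pe=uu^{\ast}e=u(u^{\ast}e)=uu^{\ast}=p$ using centrality of $e$; that is, $p\preceq e$. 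Applying this lemma with $e=1-z$ shows that $z\precsim_{M}1-z$ would force $z\preceq 1-z$, i.e.\ $z=z(1-z)=0$, a contradiction; symmetrically, applying it with $e=z$ shows that $1-z\precsim_{M}z$ would force $(1-z)z=1-z=0$, i.e.\ $z=1$, again a contradiction. Thus $z$ and $1-z$ are incomparable and $\precsim_{M}$ fails to be total.

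The only genuinely hard ingredient is the comparison theorem itself, which I expect to be the main obstacle and which I would cite rather than reprove in this survey appendix; all the remaining steps are short algebraic verifications with partial isometries and the triviality (resp.\ nontriviality) of the center.
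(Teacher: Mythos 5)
Your proof is correct, but there is nothing in the paper to compare it against: this proposition sits in the background appendix, which is explicitly a survey of standard operator-algebra facts stated without proof (the reader is referred to Murphy \cite{murphy} and Kadison--Ringrose \cite{kadison1,kadison2}), so the paper supplies no argument of its own. What you give is the standard textbook proof, and it is sound. For the forward direction, invoking the Murray--von Neumann comparison theorem is the right call; in a factor the central projection $z$ it produces must be $0$ or $1$, and either value yields comparability of $p$ and $q$, so citing it rather than reproving it is appropriate --- it is the one genuinely nontrivial ingredient, and reproving it (via a Zorn's lemma maximality argument on families of partial isometries) would be out of scale for this setting. Your converse is also correct, and your absorption lemma is verified cleanly: from $u^{\ast}u=q$, $uu^{\ast}=p$, $qe=q$ you get $ue=uqe=uq=u$ (using $uu^{\ast}u=u$), hence $eu^{\ast}=u^{\ast}$, and centrality of $e$ upgrades this to $u^{\ast}e=u^{\ast}$, giving $pe=uu^{\ast}e=uu^{\ast}=p$; applied with $e=1-z$ and $e=z$, and using $z(1-z)=0$, this makes $z$ and $1-z$ incomparable, as claimed. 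One small point worth noting: totality of $\precsim_{M}$ on $\Pi(M)/\sim_{M}$ in principle also requires antisymmetry (a generalized Schr\"oder--Bernstein theorem for projections), but since the paper already asserts that $\precsim_{M}$ is a partial order on the equivalence classes, you are right to address only comparability, which is the actual content of the equivalence with factoriality.
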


In the statement of the following theorem, we use the usual notion of order type with the exception that we make a difference between $\infty$ and any other element, considering that $\infty$ represents a class of infinite projections. For instance, $\{0,1\}$ and $\{0,\infty\}$ should be considered as different order types since the first contains no infinite element, while the second do.

\begin{Aproposition}[Type of a factor]
Let $M$ be a factor. We say that:
\begin{itemize}
\item $M$ is of type $\text{I}_{n}$ when $\precsim_{M}$ is of the same order type as $\{0,1,\dots,n\}$;
\item $M$ is of type $\text{I}_{\infty}$ when $\precsim_{M}$ is of the same order type as $\mathbb{N}\cup\{\infty\}$;
\item $M$ is of type $\text{II}_{1}$ when $\precsim_{M}$ is of the same order type as $[0,1]$;
\item $M$ is of type $\text{II}_{\infty}$  when $\precsim_{M}$ is of the same order type as $\mathbb{R}_{\geqslant 0}\cup\{\infty\}$;
\item $M$ is of type $\text{III}$  when $\precsim_{M}$ is of the same order type as $\{0,\infty\}$, i.e. all non-zero projections are infinite.
\end{itemize}
Moreover, $\precsim_{M}$ cannot be of any other order type.
\end{Aproposition}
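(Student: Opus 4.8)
The plan is to realise the Murray--von Neumann dimension theory and simply read off the order type of $(\Pi(M)/\sim_{M},\precsim_{M})$ in each case; the five displayed order types will turn out to be exactly the possible ranges of a dimension function. Throughout I use the preceding proposition, which tells us that since $M$ is a factor the relation $\precsim_{M}$ is a \emph{total} order on the equivalence classes of projections. Consequently comparability is available: for any two projections $p,q$ either $p\precsim_{M}q$ or $q\precsim_{M}p$, and it suffices to determine which totally ordered sets can occur. The argument proceeds by an exhaustive case split on two dichotomies: whether $M$ possesses a non-zero finite projection, and whether it possesses a minimal (atomic) projection.

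First I would dispose of the purely infinite case. If every non-zero projection of $M$ is infinite, then by comparability together with infiniteness any two non-zero projections are $\sim_{M}$-equivalent, so $\Pi(M)/\sim_{M}$ consists of precisely the two classes $0\prec\infty$; this is type $\text{III}$, with order type $\{0,\infty\}$. Otherwise $M$ has a non-zero finite projection, and I split on the existence of a minimal projection $e$ (one with no proper non-zero subprojection, hence necessarily finite, so this case is disjoint from the previous one). When such an $e$ exists, comparability shows every non-zero projection dominates a copy of $e$, and choosing a maximal pairwise-orthogonal family of copies of $e$ below a given $p$ — whose supremum must equal $p$, for otherwise a further copy could be adjoined — exhibits $p$ as an orthogonal sum of copies of $e$. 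Counting these copies gives a dimension valued in $\mathbb{N}\cup\{\infty\}$ that is monotone and injective on $\Pi(M)/\sim_{M}$, whence the order type is $\{0,1,\dots,n\}$ if $1_{M}$ is a sum of $n$ copies of $e$ (type $\text{I}_{n}$) and $\mathbb{N}\cup\{\infty\}$ if $1_{M}$ is infinite (type $\text{I}_{\infty}$).

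It remains to treat the continuous semifinite case: $M$ has a non-zero finite projection but no minimal one. The key device is the \emph{halving lemma}: in a factor without minimal projections every projection $p$ decomposes as $p=p_{1}+p_{2}$ with orthogonal $p_{1}\sim_{M}p_{2}$. Iterating halving and using comparability, I would build a normalized dimension $d:\Pi(M)\to[0,\infty]$ that is additive on orthogonal projections and satisfies $p\precsim_{M}q\iff d(p)\leqslant d(q)$ and $p\sim_{M}q\iff d(p)=d(q)$, defined first on dyadic multiples of a fixed finite reference projection and then extended by monotone suprema. Normalizing $d(1_{M})=1$ when $1_{M}$ is finite then forces the range to be $[0,1]$ (type $\text{II}_{1}$), while an infinite $1_{M}$ yields range $\mathbb{R}_{\geqslant 0}\cup\{\infty\}$ (type $\text{II}_{\infty}$). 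Since the four dichotomy branches (all projections infinite; minimal projection present; finite $1_{M}$; infinite $1_{M}$) are exhaustive and produce exactly the five order types $\{0,\dots,n\}$, $\mathbb{N}\cup\{\infty\}$, $[0,1]$, $\mathbb{R}_{\geqslant 0}\cup\{\infty\}$ and $\{0,\infty\}$, no other order type can occur.

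The main obstacle is the last step in the type $\text{II}$ case: showing that the range of $d$ is an \emph{entire interval} rather than merely an order-dense subset of one, so that the order type is genuinely that of the continuum $[0,1]$ (or $[0,\infty]$) with no gaps. This is the heart of the classical theorem and is where one must use that the projection lattice of a von Neumann algebra is complete and that an increasing net of projections converges in the strong operator topology to its supremum; this completeness lets one realise every prescribed real value as an actual dimension by taking suprema of increasing families of dyadically-sized projections. The remaining ingredients — comparability, the exhaustion argument for atomic factors, and the halving lemma — are comparatively routine once comparability is in hand.
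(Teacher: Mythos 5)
Your proposal reconstructs the classical Murray--von Neumann dimension-theory argument, and in outline it is correct; note that the paper itself gives no proof of this statement --- it appears in the appendix survey and is explicitly delegated to the cited textbooks (Murphy, Kadison--Ringrose, Takesaki), whose treatment is precisely the route you follow: comparability from the preceding proposition, the trichotomy (a minimal projection exists / no minimal projection but some non-zero finite one / all non-zero projections infinite), an exhaustion argument in the atomic case, halving plus a dyadically constructed dimension function in the continuous case, and completeness of the projection lattice to show the range of $d$ is a full interval. You also correctly locate the analytic heart of the matter, namely the surjectivity of $d$ onto $[0,1]$ (resp.\ $[0,\infty]$) rather than onto a dense subset.

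One step, as written, does not go through: ``by comparability together with infiniteness any two non-zero projections are $\sim_{M}$-equivalent.'' Comparability only gives one inequality, say $p\precsim_{M}q$; the infiniteness of the projections does not by itself return $q\precsim_{M}p$. The standard repair needs three further ingredients: the Schr\"oder--Bernstein property for projections ($p\precsim_{M}q$ and $q\precsim_{M}p$ imply $p\sim_{M}q$); the fact that in a factor every infinite projection is properly infinite, i.e.\ dominates countably many mutually orthogonal copies of itself; and \emph{countable decomposability}, so that a maximal orthogonal family of copies of $p$ sitting under $q$ is countable and its residual is $\precsim_{M}p$ by maximality, whence $q\precsim_{M}\sum_{n\in\mathbb{N}}p_{n}\sim_{M}p$. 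Without countable decomposability the claim --- and indeed the proposition as stated --- is false: for non-separable $\hil{H}$ the factor $\mathcal{B}(\hil{H})$ has order type the set of cardinals $\leqslant\dim\hil{H}$ rather than $\mathbb{N}\cup\{\infty\}$, and non-$\sigma$-finite factors can carry inequivalent infinite projections, splitting the single top class $\infty$ that you use in types $\text{I}_{\infty}$, $\text{II}_{\infty}$ and $\text{III}$. The paper's standing assumption, announced at the head of the appendix, is that all von Neumann algebras considered are separable, hence $\sigma$-finite; your argument is correct once you invoke this hypothesis explicitly at those points (the same remark applies to the well-definedness of the value $\infty$ of your counting dimension in the type $\text{I}$ case).
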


It can be shown that a type $\text{I}_{n}$ factor is isomorphic to $\mathcal{M}_{n}(\mathbb{C})$, the algebra of complex $n\times n$ matrices. A type $\text{I}_{\infty}$ factor is isomorphic to $\mathcal{B}(\hil{H})$.

\begin{remark}
Following the preceding remark, when restricting to a subalgebra of $\mathcal{B}(\hil{H})$, we lose some operators, and in particular some partial isometries.  There is here an obvious analogy to make with Skolem's paradox, where we can find non-denumerable sets in a denumerable model of set theory. Therefore, a set $X$ is \emph{internally} non-denumerable because there are no maps \emph{inside the model} from $\omega$ to $X$, but it is externally denumerable because once out of the model, one will find a suitable map. Here, the same thing happens: the subspaces $p\hil{H}$ and $q\hil{H}$ can be isomorphic, but one would have to step \emph{outside} of $M$ to see it (i.e. there are no partial isometry from $p\hil{H}$ onto $q\hil{H}$ in $M$). So the subspaces are \emph{not isomorphic} from the point of view of $M$, even though they are from the point of view of $\mathcal{B}(\hil{H})$.
\end{remark}

\paragraph{Traces}~\\

\begin{Adefinition}
Let $a$ be a self-adjoint operator in $M$. We say that $a$ is \emph{positive} if $\text{Spec}_{M}(a)\subset \mathbb{R}_{\geqslant 0}$. We denote by $M^{+}$ the set of positive operators in $M$.
\end{Adefinition}

\begin{Aproposition}
We have $M^{+}=\{u^{\ast}u~|~u\in M\}$.
\end{Aproposition}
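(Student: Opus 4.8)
The plan is to establish the two inclusions separately, exploiting the concrete realisation $M\subseteq\mathcal{B}(\hil{H})$ together with the continuous functional calculus (Theorem \ref{spectral}), which is available on $M$ because, as noted in the preceding remark, $M$ is a unital C$^{\ast}$-algebra on which every normal operator admits a functional calculus.

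First I would prove $\{u^{\ast}u\mid u\in M\}\subseteq M^{+}$. Given $u\in M$, the element $a=u^{\ast}u$ is self-adjoint, since $(u^{\ast}u)^{\ast}=u^{\ast}u^{\ast\ast}=u^{\ast}u$ using $u^{\ast\ast}=u$. To see that $\text{Spec}_{M}(a)\subseteq\mathbb{R}_{\geqslant 0}$, I would use the concrete Hilbert-space structure: for every $\xi\in\hil{H}$ we have $\<a\xi,\xi\>=\<u\xi,u\xi\>=\norm{u\xi}^{2}\geqslant 0$. For a self-adjoint operator the extreme points of the spectrum coincide with the extremes of the numerical range $\{\<a\xi,\xi\>\mid\norm{\xi}=1\}$, so the non-negativity of this range forces the spectrum into $\mathbb{R}_{\geqslant 0}$. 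Here one should record that the spectrum computed inside $M$ agrees with the one computed in $\mathcal{B}(\hil{H})$, by spectral permanence for unital C$^{\ast}$-subalgebras, so that $a\in M^{+}$ indeed.

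For the reverse inclusion $M^{+}\subseteq\{u^{\ast}u\mid u\in M\}$, I would produce an explicit square root. Let $a\in M^{+}$, so $a$ is self-adjoint (hence normal) and $\text{Spec}_{M}(a)\subseteq\mathbb{R}_{\geqslant 0}$. The function $f(t)=\sqrt{t}$ is continuous on $\text{Spec}_{M}(a)$, so the continuous functional calculus yields an operator $u=f(a)$, which lies in $M$ precisely because $M$ is a unital C$^{\ast}$-algebra. Since $f$ is real-valued, $u$ is self-adjoint, and since $f(t)^{2}=t$ on the spectrum, the homomorphism property of the functional calculus gives $u^{\ast}u=u^{2}=(f^{2})(a)=a$. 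Thus $a$ is of the required form.

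The main obstacle is the spectral step in the first inclusion: translating the numerical positivity $\<a\xi,\xi\>\geqslant 0$ into the assertion $\text{Spec}(a)\subseteq\mathbb{R}_{\geqslant 0}$ relies on the spectral description of the endpoints of the spectrum of a self-adjoint operator, combined with spectral permanence to pass freely between $M$ and $\mathcal{B}(\hil{H})$. (In the abstract C$^{\ast}$-algebraic setting this positivity of $u^{\ast}u$ is the genuinely delicate point, but the concrete von Neumann setting of this appendix trivialises it via the inner product.) The reverse inclusion, by contrast, is a clean application of the functional calculus once one observes that the square-root function is continuous on the non-negative reals.
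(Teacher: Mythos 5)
Your proof is correct; note, however, that the paper itself states this proposition without proof --- it appears in the survey appendix as a standard fact compiled from the operator-algebra literature --- so there is no paper argument to compare against. Both of your inclusions are sound: the square-root direction is exactly the textbook argument, and its use of the continuous functional calculus inside $M$ is explicitly licensed by the paper's remark following the double commutant theorem (a von Neumann algebra is a unital C$^{\ast}$-algebra, so $f(a)\in M$ for $f$ continuous on $\text{Spec}_{M}(a)$). For the easy direction, your passage from the numerical positivity $\langle u^{\ast}u\,\xi,\xi\rangle=\norm{u\xi}^{2}\geqslant 0$ to $\text{Spec}(u^{\ast}u)\subseteq\mathbb{R}_{\geqslant 0}$ via the endpoints of the numerical range of a self-adjoint operator is valid, and you were right to flag spectral permanence, since the definition of $M^{+}$ refers to $\text{Spec}_{M}$ rather than $\text{Spec}_{\mathcal{B}(\hil{H})}$. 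One simplification worth recording: in the von Neumann setting you do not need the general C$^{\ast}$-algebraic spectral permanence theorem, because if $x\in M$ is invertible in $\mathcal{B}(\hil{H})$ then for any $y\in M'$ one has $x^{-1}y=x^{-1}yxx^{-1}=x^{-1}xyx^{-1}=yx^{-1}$, so $x^{-1}\in M''=M$ by the double commutant theorem, and the two spectra coincide directly. Your closing observation is also accurate: the positivity of $u^{\ast}u$ is only delicate in the abstract C$^{\ast}$-algebra setting, and the concrete representation on $\hil{H}$ assumed throughout the appendix dissolves the difficulty exactly as you say.
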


\begin{Adefinition}
A \emph{trace} $\tau$ on a von Neumann algebra $M$ is a function from $M^{+}$ in $[0,\infty]$ satisfying :
\begin{enumerate}
\item $\tau(x+y)=\tau(x)+\tau(y)$ for all $x,y\in M^{+}$.
\item $\tau(\lambda x)=\lambda\tau(x)$ for all $x\in M^{+}$ and $\lambda\geqslant 0$ 
\item $\tau(x^{\ast}x)=\tau(xx^{\ast})$ for all $x\in M$
\end{enumerate}
We say that it is \emph{faithful} if $\tau(x)>0$ for all $x\not= 0$ in $M^{+}$, that it is \emph{finite} when $\tau(1)<\infty$, that it is \emph{normal} when $\tau(\sup \{x_{i}\})=\sup \{\tau(x_{i})\}$ for any bounded increasing net $\{x_{i}\}$ in $M^{+}$.
\end{Adefinition}

\begin{Atheorem}
If $M$ is a finite factor (i.e. the identity is finite), it admits a finite faithful normal trace $\tau$. Moreover, any other finite faithful normal trace $\rho$ is proportional to $\tau$.

If $M$ is of type $\text{II}_{1}$, we call the \emph{normalized} trace the unique finite faithful normal trace $T$ such that $T(1)=1$.
\end{Atheorem}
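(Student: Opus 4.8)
The plan is to build the trace out of a \emph{dimension function} on the projection lattice, using the total order on $\Pi(M)/\!\sim_{M}$ that the comparison theory guarantees for a factor, and then to deduce uniqueness from the observation that a normal trace is already pinned down, up to a single scalar, by its values on projections. For the type $\mathrm{I}_n$ case the factor is $\mathcal{M}_n(\mathbb{C})$ and the trace is the usual (normalized) matrix trace, whose existence and uniqueness are elementary; so the substance lies in the type $\mathrm{II}_1$ case, which is also the one relevant to the present paper, where the pertinent finite factor is the hyperfinite $\mathrm{II}_1$ factor $\finhyp$.

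\textbf{Existence.} First I would construct $D : \Pi(M) \to [0,1]$ with $D(1)=1$, $D(p)=0 \iff p=0$, $D(p)=D(q)$ whenever $p \sim_{M} q$, and $D(p+q)=D(p)+D(q)$ whenever $pq=0$. The structural input for a finite factor is that every projection can be \emph{halved}: one can write $p = p_1 + p_2$ with $p_1 p_2 = 0$ and $p_1 \sim_M p_2$. Iterating this yields, for each $p$, a decreasing chain of dyadic subprojections which, combined with the comparison theorem (any two projections of a factor are $\precsim_M$-comparable), assigns to $p$ a dyadic real $D(p)$; equivalence-invariance and additivity are then read off from these approximations. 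Next I would extend $D$ to a positive functional: for $a \in M^{+}$ I approximate $a$ from below by positive simple combinations $\sum_i \lambda_i p_i$ of pairwise orthogonal spectral projections of $a$ (furnished by the continuous functional calculus recalled above) and set $\tau(a) = \sup \sum_i \lambda_i D(p_i)$, the supremum being independent of the approximating family by additivity and monotonicity of $D$. A general element decomposes as a complex combination of four positive operators, extending $\tau$ linearly to all of $M$. The trace identity $\tau(xy)=\tau(yx)$ reduces, through the polar decomposition $x = u|x|$ and the spectral theorem, to the equality $D(u^{\ast} u) = D(u u^{\ast})$ for partial isometries $u$, which holds since $u^{\ast} u \sim_M u u^{\ast}$ by the very definition of $\sim_M$; finiteness ($\tau(1)=1$), faithfulness, and normality follow from the corresponding properties of $D$ together with the monotone continuity of the spectral approximation.

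\textbf{Uniqueness.} Let $\rho$ be a second finite faithful normal trace. Its restriction to $\Pi(M)$ is additive on orthogonal projections and, by the trace property applied to partial isometries, constant on $\sim_M$-classes. Since $\Pi(M)/\!\sim_M$ is totally ordered and admits halving, any such function is forced to coincide with $D$ up to a single positive constant $c = \rho(1)$; that is, $\rho(p) = c\,D(p)$ for every projection $p$. By the spectral theorem and normality of both functionals this propagates to $\rho = c\,\tau$ on all of $M$. When $M$ is of type $\mathrm{II}_1$ the normalization $T(1)=1$ fixes $c$, yielding the unique normalized trace $T$.

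\textbf{Main obstacle.} The delicate part is the construction of $D$ and the proof of its additivity---the technical core of Murray and von Neumann's comparison theory. One must verify that the dyadic halving procedure is consistent, that $D(p)$ does not depend on the chosen halvings, and that $D$ is genuinely additive on orthogonal projections rather than merely monotone. Once $D$ is available, the remaining steps---extension via the functional calculus, the trace identity, normality, and uniqueness up to a scalar---are comparatively routine.
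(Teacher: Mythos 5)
The paper never proves this statement: it sits in the survey appendix as classical background, with the reader directed to the textbooks of Murphy, Kadison--Ringrose and Takesaki. So the only meaningful benchmark is the classical proof, and your proposal is exactly that: the Murray--von Neumann route through a dimension function $D$ on $\Pi(M)$, built from halving and the comparison theorem, extended to $M^{+}$ by spectral approximation, with the trace identity reduced via polar decomposition to the $\sim_{M}$-invariance of $D$, and uniqueness read off from the fact that the restriction of any finite normal trace to projections is an additive, $\sim_{M}$-invariant function on a totally ordered, halvable projection lattice. The outline is correct, and you rightly isolate the construction of $D$ and its additivity as the technical core.

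There is, however, one step in your plan that would fail as described. You classify the extension of $D$ to a functional as ``comparatively routine'', but your supremum formula $\tau(a)=\sup\sum_{i}\lambda_{i}D(p_{i})$ yields $\tau(a+b)=\tau(a)+\tau(b)$ only when $a$ and $b$ commute, since only then do the two spectral families admit a common refinement; additivity on non-commuting positive elements is a second genuine obstacle of the classical theory, comparable in weight to the additivity of $D$ itself. The textbook treatments either work hard inside the dimension theory to obtain linearity, or bypass the construction entirely via the Dixmier averaging property (for a finite factor the norm-closed convex hull of $\{uxu^{\ast}\mid u \text{ unitary}\}$ meets the center in a single scalar), which produces a linear trace from the outset. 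Two smaller repairs: the halving claim as stated (``every projection in a finite factor can be halved'') is false in type $\text{I}_{n}$ for projections of odd rank, so it must be confined to the type $\text{II}_{1}$ case you have already carved out; and normality of the constructed $\tau$ requires the complete (countable) additivity of $D$ on orthogonal families, which belongs on the list of properties to be verified in the core construction rather than being ``read off''. Your uniqueness argument is sound and in fact needs less than you invoke: once $\rho(p)=c\,D(p)$ on projections, the automatic boundedness of the positive functional $\rho$ together with the norm-density in $M$ of the span of projections already forces $\rho=c\,\tau$, without a second appeal to normality.
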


\begin{remark} Since the positive operator in $M$ linearly span the von Neumann algebra $M$, a finite trace $\tau$ extends uniquely to a positive linear functional on $M$ which we will call $\tau$. In particular, one can define the trace of any operator $a$ in a type $\text{II}_{1}$ factor.
\end{remark}

\paragraph{Hyperfiniteness.}~\\

\begin{Adefinition}
We say that a von Neumann algebra $M$ is \emph{hyperfinite} (or \emph{approximately finite dimensional}) if there is a directed collection $M_{i}$ of ﬁnite-dimensional $\ast$-subalgebras of $M$ such that the union $\cup_{i}M_{i}$ is dense in $M$ for the ultra-weak topology. 
\end{Adefinition}

\begin{remark}
The hyperfiniteness of the factor $M$ should be thought of as the fact that the operators in $M$ can be approximated by matrices.
\end{remark}

\begin{Atheorem}
The hyperfinite factor $\finhyp$ of type $\text{II}_{1}$ is unique up to isomorphism. 
\end{Atheorem}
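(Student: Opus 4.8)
The plan is to prove the classical Murray--von Neumann theorem by exhibiting one fixed model and showing that every hyperfinite factor of type $\text{II}_{1}$ is isomorphic to it. Concretely, I take $M$ to be an arbitrary hyperfinite $\text{II}_{1}$ factor and show $M\cong R_{0}$, where $R_{0}=\left(\bigotimes_{n\geqslant 1} M_{2}(\mathbb{C})\right)''$ is the weak closure of the infinite tensor product of $2\times 2$ matrix algebras in the GNS representation of the product of normalized traces. Since $R_{0}$ does not depend on $M$, establishing $M\cong R_{0}$ yields uniqueness at once. Throughout I work with the unique normalized trace $T$ on $M$ supplied by the trace theorem quoted above and with the associated $2$-norm $\norm{x}_{2}=T(x^{\ast}x)^{1/2}$; on bounded sets this norm induces the strong operator topology, and hyperfiniteness provides an increasing filtration of finite-dimensional $\ast$-subalgebras whose union is $\norm{\cdot}_{2}$-dense in $M$.

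First I would construct inside $M$ an increasing chain of dyadic matrix subalgebras $N_{1}\subseteq N_{2}\subseteq\cdots$ with $N_{k}\cong M_{2^{k}}(\mathbb{C})$, each containing the unit of $M$ and carrying the restriction of $T$ as its normalized trace, such that $\bigcup_{k}N_{k}$ is $\norm{\cdot}_{2}$-dense in $M$. The heart of the matter is a local approximation lemma: given a finite-dimensional $\ast$-subalgebra $B\subseteq M$, a finite set $x_{1},\dots,x_{n}\in M$, and $\varepsilon>0$, there is a matrix subfactor $F\cong M_{2^{m}}(\mathbb{C})$ of $M$ such that $B$ is $\varepsilon$-contained in $F$ and each $x_{i}$ lies within $\varepsilon$ of $F$ in $\norm{\cdot}_{2}$. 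To prove it I would use hyperfiniteness to capture $B$ and approximants of the $x_{i}$ inside some $A=\bigoplus_{i} M_{n_{i}}(\mathbb{C})$ from the filtration, and then exploit \emph{factoriality}: because $M$ is a factor of type $\text{II}_{1}$, its trace takes every value in $[0,1]$ on projections, which lets me cut and amplify the matrix units of $A$ so as to absorb $A$ into a single matrix algebra with uniform dyadic trace weights.

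The second ingredient is a perturbation lemma: two finite-dimensional subalgebras that are sufficiently close (each approximately contained in the other in $\norm{\cdot}_{2}$) are conjugate by a unitary of $M$ close to $1$. This upgrades the approximate containments produced by the local lemma into genuine containments $N_{k}\subseteq N_{k+1}$, after applying a small inner automorphism. With both lemmas available I would run an inductive exhaustion: fix a countable $\norm{\cdot}_{2}$-dense sequence $(y_{j})$ in $M$ (available by the separability assumed throughout the appendix), and at stage $k$ apply the local lemma and then the perturbation lemma to enlarge $N_{k-1}$ to a matrix subfactor $N_{k}\supseteq N_{k-1}$ that now also approximates $y_{1},\dots,y_{k}$ to within $2^{-k}$. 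The nested $(N_{k})$ generate $M$, and their compatible systems of matrix units define a trace-preserving $\ast$-isomorphism of $\bigcup_{k}N_{k}$ onto $\bigcup_{k}M_{2^{k}}(\mathbb{C})$ which extends to an isomorphism $M\cong R_{0}$.

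I expect the main obstacle to be the local approximation lemma, specifically the passage from an arbitrary finite-dimensional subalgebra $A$ --- whose central summands have unrelated dimensions and whose matrix units carry non-uniform trace --- to a single matrix subfactor with uniform dyadic weights, while keeping the $\norm{\cdot}_{2}$-error controlled. This is exactly where the hypotheses bite: factoriality forces the center to be trivial so the summands can be tied together, and the type $\text{II}_{1}$ structure supplies the continuum of projection traces needed to balance the weights; without either the conclusion fails. By contrast the perturbation lemma is a soft consequence of the comparison theory of projections in a finite von Neumann algebra, and the concluding exhaustion argument is routine once the two lemmas are in place.
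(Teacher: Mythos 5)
The paper does not prove this statement at all: it appears in the survey appendix as background, quoted without proof from the standard literature (Murray--von Neumann; see the references to Murphy, Kadison--Ringrose and Takesaki), so there is no in-paper argument to compare yours against. Judged on its own, your proposal is the classical proof in correct outline: fix the model $R_{0}=\bigl(\bigotimes_{n\geqslant 1}M_{2}(\mathbb{C})\bigr)''$, use the unique normalized trace and the $\norm{\cdot}_{2}$-topology, prove a local approximation lemma (any finite-dimensional subalgebra and any finite set of elements are $\varepsilon$-captured by a single dyadic matrix subfactor) together with a perturbation lemma (close finite-dimensional subalgebras are conjugate by a unitary near $1$), and run a separability-based exhaustion to produce a nested chain $N_{k}\cong M_{2^{k}}(\mathbb{C})$ generating $M$, whence a trace-preserving isomorphism $M\cong R_{0}$ via the GNS construction. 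You also correctly locate where each hypothesis is used: hyperfiniteness for the filtration, factoriality plus the type $\text{II}_{1}$ continuum of projection traces for balancing the central summands of $A=\bigoplus_{i}M_{n_{i}}(\mathbb{C})$ into uniform dyadic weights, and separability (which the appendix assumes throughout) for the dense sequence.

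Two points deserve explicit care if you write this out in full. First, in the inductive step, after the perturbation lemma hands you a unitary $u$ close to $1$ with $uN_{k-1}u^{\ast}\subseteq F$, you should conjugate $F$ rather than $N_{k-1}$ (set $N_{k}=u^{\ast}Fu$), so that the already-constructed chain is genuinely fixed and the $2^{-k}$-approximation of $y_{1},\dots,y_{k}$ survives the conjugation; the quantitative control $\norm{u-1}_{2}$ small is exactly what makes this work, and your sketch should state the error bookkeeping. Second, at the final step, the extension of the $\ast$-isomorphism from $\bigcup_{k}N_{k}$ to $M$ is not automatic from density alone: one uses that the map is trace-preserving to induce a unitary between the GNS spaces $L^{2}(M,T)$ and $L^{2}(R_{0},tr)$ intertwining the two algebras, which yields a normal isomorphism; this is also where the uniqueness of the normalized trace (quoted just above in the appendix) enters, since it guarantees the restriction of $T$ to each $N_{k}$ is the normalized matrix trace without further argument. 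With those caveats, the plan is sound and is exactly the standard route to the Murray--von Neumann uniqueness theorem.
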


\begin{Atheorem}
The hyperfinite factor $\infhyp $ of type $\text{II}_{\infty}$ is unique up to isomorphism. In particular, it is isomorphic to the von Neumann algebra tensor product $\mathcal{B}(\hil{H})\otimes \finhyp$.
\end{Atheorem}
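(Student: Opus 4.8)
The plan is to prove the two assertions in turn, treating the isomorphism $\infhyp\cong\mathcal{B}(\hil{H})\otimes\finhyp$ first, since the uniqueness statement then reduces to the already-recalled uniqueness of the hyperfinite factor $\finhyp$ of type $\text{II}_{1}$.

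First I would check that $\mathcal{B}(\hil{H})\otimes\finhyp$ is indeed a hyperfinite factor of type $\text{II}_{\infty}$. That it is a factor follows from the identity $\mathcal{Z}(M\otimes N)=\mathcal{Z}(M)\otimes\mathcal{Z}(N)$ for the centres, both tensorands having trivial centre. Equipping $\mathcal{B}(\hil{H})$ with its canonical semifinite trace $\mathrm{Tr}$ and $\finhyp$ with its normalised trace $T$, the product trace $\mathrm{Tr}\otimes T$ is a faithful normal semifinite trace which is not finite, so the factor is semifinite and infinite; computing the order type of $\precsim$ on projections — the finite projections receive all values in $\mathbb{R}_{\geqslant 0}$ while infinite projections are present, giving $\mathbb{R}_{\geqslant 0}\cup\{\infty\}$ from the product of the $\text{I}_{\infty}$ scale $\mathbb{N}\cup\{\infty\}$ and the $\text{II}_{1}$ scale $[0,1]$ — identifies it as type $\text{II}_{\infty}$. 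Hyperfiniteness is inherited: if $(M_{i})$ and $(N_{j})$ are directed nets of finite-dimensional $\ast$-subalgebras with ultra-weakly dense unions in $\mathcal{B}(\hil{H})$ and $\finhyp$ respectively, then $(M_{i}\otimes N_{j})$ is a directed net of finite-dimensional subalgebras whose union is ultra-weakly dense in the tensor product.

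Next, for an arbitrary hyperfinite factor $M$ of type $\text{II}_{\infty}$, I would reduce to the type $\text{II}_{1}$ case by cutting down with a finite projection. Choosing a nonzero finite projection $p\in M$, the corner $pMp$ is a factor of type $\text{II}_{1}$, and it inherits hyperfiniteness (the corners $pM_{i}p$ of finite-dimensional approximants are again finite dimensional, with ultra-weakly dense union). By the uniqueness of $\finhyp$ one gets $pMp\cong\finhyp$. Because $1_{M}$ is infinite while $p$ is finite, one can write $1_{M}$ as an orthogonal sum of countably many projections each equivalent to $p$; the associated system of partial isometries provides a family of matrix units generating a copy of $\mathcal{B}(\hil{H})$ and yields the standard amplification $M\cong pMp\otimes\mathcal{B}(\hil{H})$. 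Combining these isomorphisms gives $M\cong\finhyp\otimes\mathcal{B}(\hil{H})\cong\mathcal{B}(\hil{H})\otimes\finhyp$, which establishes the isomorphism for $\infhyp$ and the uniqueness statement simultaneously.

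The hard part is genuinely the result I am allowed to cite here, namely the uniqueness of the hyperfinite $\text{II}_{1}$ factor $\finhyp$ (the Murray--von Neumann theorem on the approximately finite dimensional $\text{II}_{1}$ factor); everything else is structure theory. Within the present argument the only delicate step is the amplification isomorphism $M\cong pMp\otimes\mathcal{B}(\hil{H})$, which requires that the finite projection $p$ can be summed up to $1$ through countably many orthogonal equivalent copies. This is precisely where the type $\text{II}_{\infty}$ hypothesis — the order type $\mathbb{R}_{\geqslant 0}\cup\{\infty\}$ of $\precsim_{M}$ — is used, guaranteeing both that such a countable orthogonal family exists and that the resulting corner is a genuine $\text{II}_{1}$ factor.
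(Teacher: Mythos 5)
The paper itself never proves this statement: it appears in the survey appendix on von Neumann algebras, where all results are quoted without proof from the standard references (Murphy, Kadison--Ringrose, Takesaki). So your argument can only be judged on its own merits. Its outline is the standard textbook route --- verify that $\mathcal{B}(\hil{H})\otimes\finhyp$ is a hyperfinite factor of type $\text{II}_{\infty}$, compress an arbitrary hyperfinite $\text{II}_{\infty}$ factor $M$ by a nonzero finite projection $p$, invoke the Murray--von Neumann uniqueness of $\finhyp$ for the corner, and amplify via $M\cong pMp\otimes\mathcal{B}(\hil{H})$ --- and the amplification step you single out as delicate is indeed sound under the appendix's standing separability assumption: $\sigma$-finiteness gives the countable maximal orthogonal family, and equality of (finite) traces in a factor lets you redistribute the remainder so that all summands are equivalent to $p$.

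The genuine gap is in the step you treat as routine: the claim that $pMp$ inherits hyperfiniteness because ``the corners $pM_{i}p$ of finite-dimensional approximants are again finite dimensional, with ultra-weakly dense union.'' The compression $pM_{i}p$ is in general \emph{not} a $\ast$-subalgebra of $pMp$: for $x,y\in M_{i}$ one has $(pxp)(pyp)=pxpyp$, which need not lie in $pM_{i}p$ unless $p$ commutes with $M_{i}$, and there is no reason for $p$ to lie in (or commute with) any $M_{i}$. What your argument actually produces is a dense union of finite-dimensional \emph{subspaces}, which is strictly weaker than the definition of hyperfiniteness used in the paper (a directed family of finite-dimensional $\ast$-subalgebras with ultra-weakly dense union). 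The standard repair requires real work: approximate $p$ by projections $q\in M_{i}$ (density of $\bigcup_{i}M_{i}$ plus a functional-calculus perturbation lemma lets you replace a nearby self-adjoint element by a nearby projection), use the classical fact that two projections at distance $<1$ are conjugate by a unitary of $M$ close to $1$, so that $pMp\cong qMq$; then for $j\geqslant i$ the compressions $qM_{j}q$ \emph{are} finite-dimensional $\ast$-subalgebras of $qMq$ (since $q\in M_{i}\subseteq M_{j}$) with ultra-weakly dense union. (Alternatively one can pass through injectivity of corners and Connes' theorem, but that is vastly heavier machinery than the Murray--von Neumann input you permitted yourself.) With this step repaired, your proof is complete.
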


\end{document}